\documentclass[a4paper,UKenglish,cleveref,autoref,thm-restate,authorcolumns]{lipics-v2021}

\usepackage{arxiv-lipics-v2021}

\usepackage{ifmtarg}
\usepackage{graphicx}
\usepackage{amsmath,amsthm}
\usepackage{mathtools}
\usepackage[capitalise]{cleveref}
\usepackage{enumerate}
\usepackage[table]{xcolor}
\usepackage{xspace}
\usepackage{environ}
\usepackage{algorithm}
\usepackage[noend]{algpseudocode}
\usepackage{array}
\usepackage{tabularx}
    \newcolumntype{M}[1]{>{\centering\arraybackslash}m{#1}}
\usepackage{tikz}
    \usetikzlibrary{arrows,automata,decorations.pathreplacing}

\newtheoremstyle{problemstyle}{\topsep}{\topsep}{}{0pt}{\sffamily\bfseries}{}{5pt plus 1pt minus 1pt}%
  {\textcolor{lipicsGray}{$\blacktriangleright$}\nobreakspace\thmname{#1}.\thmnote{ #3}}
\theoremstyle{problemstyle}
\newtheorem*{problem}{Problem}

\makeatletter
\newcommand{\zvass}[2]{\@ifmtarg{#1}{VASS+$#2\ZZ$}{$#1$-VASS+$#2\ZZ$}}
\makeatother

\makeatletter
\newcommand{\problemtitle}[1]{\gdef\@problemtitle{#1}}
\newcommand{\probleminput}[1]{\gdef\@probleminput{#1}}
\newcommand{\problemquestion}[1]{\gdef\@problemquestion{#1}}
\NewEnviron{statement}{
  \problemtitle{}\probleminput{}\problemquestion{}
  \BODY
  \par\addvspace{.5\baselineskip}
  \noindent
  \begin{tabularx}{\columnwidth}{@{\hspace{\parindent}}X}
  \textbf{\@problemtitle}
  \\[.3\baselineskip]
  \textbf{Input:} \@probleminput \\[.2\baselineskip]
  \textbf{Question:} \@problemquestion
\end{tabularx}
  \par\addvspace{.5\baselineskip}\noindent
}
\makeatother

\newcommand{\vect}[1]{\mathbf{#1}}
\newcommand{\rank}[0]{\mathbf{rank}}
\newcommand{\diff}[0]{\mathbf{diff}}
\newcommand{\up}[0]{\mathbf{up}}
\newcommand{\dwn}[0]{\mathbf{dwn}}
\newcommand{\qiin}[0]{q_{i,\text{in}}}
\newcommand{\qiout}[0]{q_{i,\text{out}}}
\newcommand{\effect}[0]{\mathbf{eff}}
\newcommand{\oneeffect}[1]{\textup{eff}#1}
\newcommand{\NumberOfSCCs}[0]{s}

\newcommand{\ILP}[0]{\mathit{ILP}}

\newcommand{\class}[1]{\textsc{#1}}
\newcommand{\nl}{\class{NL}\xspace}
\newcommand{\np}{\class{NP}\xspace}
\newcommand{\pspace}{\class{PSpace}\xspace}
\newcommand{\expspace}{\class{ExpSpace}\xspace}
\newcommand{\twoexpspace}{\class{2-ExpSpace}\xspace}
\newcommand{\elementary}{\class{Elementary}\xspace}
\newcommand{\tower}{\class{Tower}\xspace}
\newcommand{\ackermann}{\class{Ackermann}\xspace}
\newcommand{\spacecl}{\class{Space}\xspace}
\newcommand{\Ff}{\mathcal{F}} 
\newcommand{\Fff}{\textsc{Fun}}

\renewcommand{\vec}[1]{\mathbf{#1}}
\newcommand{\abs}[1]{\lvert#1\rvert}
\newcommand{\norm}[1]{\lVert#1\rVert}
\newcommand{\onenorm}[1]{\norm{#1}_1}
\newcommand{\infnorm}[1]{\norm{#1}_\infty}
\newcommand{\maxnorm}[1]{\infnorm{#1}}
\newcommand{\Oo}{\mathcal{O}}
\newcommand{\Oh}{\Oo}
\newcommand{\xproj}[2]{\pi_{#1}(#2)}
\newcommand{\nproj}[1]{\xproj{\NN}{#1}}
\newcommand{\zproj}[1]{\xproj{\ZZ}{#1}}

\newcommand{\set}[1]{\{#1\}}
\newcommand{\sset}{\subseteq}
\newcommand{\setof}[2]{\{#1 \mid #2\}}

\newcommand{\NN}{\mathbb{N}}
\newcommand{\ZZ}{\mathbb{Z}}
\newcommand{\QQ}{\mathbb{Q}}

\newcommand{\Aa}{\mathcal{A}}

\newcommand{\Vv}{\mathcal{V}}

\newcommand{\eff}[1]{\mathbf{eff}(#1)}

\newcommand{\config}[2]{#1(#2)}
\newcommand{\Config}[2]{#1(\vec{#2})}
\newcommand{\size}{\textup{size}}
\newcommand{\unarysize}[1]{\size_\text{u}(#1)}
\newcommand{\binarysize}[1]{\size_\text{b}(#1)}

\newcommand{\reaches}{\xrightarrow{}}

\newcommand{\vectSet}[1]{\mathbf{#1}}

\newcommand{\add}[2]{$\coreadd{#1}{#2}$}
\newcommand{\sub}[2]{$\coresub{#1}{#2}$}
\newcommand{\coreadd}[2]{#1 \,\, +\!\!= \, #2}
\newcommand{\coresub}[2]{#1 \,\, -\!\!= \, #2}
\newcommand{\ZT}{\textbf{zero-test}}
\newcommand{\Mult}{\textbf{mult}}
\newcommand{\WeakMult}{\textbf{weak-mult}}
\newcommand{\ExactMult}{\textbf{exact-mult}}
\newcommand{\Copy}{\textbf{copy}}
\newcommand{\Guess}{\textbf{guess}}
\newcommand{\Move}{\textbf{move}}
\newcommand{\Triple}{\textbf{16-triple}}
\newcommand{\ExpAmplifier}{\textbf{exp-amplifier}}

\newcommand{\testcell}[1]{
    \begin{tikzpicture}
        \draw[draw=black, fill=#1, line width = 0.1mm] (0,0) rectangle (0.5,-0.2);
    \end{tikzpicture}
}

\definecolor{tablegreen}{HTML}{B7D7A8}
\definecolor{tablelime}{HTML}{DDE8A8}
\definecolor{tablered}{HTML}{EFADAD}
\definecolor{tableorange}{HTML}{F4C7B5}

\nolinenumbers 
\hideLIPIcs 
\title{Reachability in VASS Extended with Integer Counters}

\author{Clotilde Bizière}
{University of Bordeaux, France
\and University of Warsaw, Poland}
{clotilde.biziere@u-bordeaux.fr}
{}
{Supported by Polish National Science Centre SONATA BIS-12 grant
number 2022/46/E/ST6/00230.}

\author{Wojciech Czerwi\'nski}
{University of Warsaw, Poland 
    \and \url{https://www.mimuw.edu.pl/~wczerwin}
}
{wczerwin@mimuw.edu.pl}
{https://orcid.org/0000-0002-6169-868X}
{Supported by the ERC grant INFSYS, agreement no. 950398.}

\author{Roland Guttenberg}
{University of Warsaw, Poland}
{r.guttenberg@uw.edu.pl}
{https://orcid.org/0000-0001-6140-6707}
{Supported by the ERC grant INFSYS, agreement no. 950398.}

\author{J\'er\^ome  Leroux}
{University of Bordeaux, France}
{jerome.leroux@labri.fr}
{}
{Supported by the grant ANR-25-CE48-6933 of the French National Research Agency ANR (project CoqoPetri).}

\author{Vincent Michielini}
{University of Bordeaux, France}
{vincent.michielini@u-bordeaux.fr}
{}
{Supported by the ANR grant INTENDED (ANR-19-CHIA-0014).}

\author{\L{}ukasz Orlikowski}
{University of Warsaw, Poland}
{l.orlikowski@mimuw.edu.pl}
{https://orcid.org/0009-0001-4727-2068}
{Supported by the ERC grant INFSYS, agreement no. 950398.}

\author{Antoni Puch}
{University of Warsaw, Poland}
{a.puch@uw.edu.pl}
{}
{Supported by the Polish National Science Centre Maestro grant\linebreak no. 2022/46/A/ST6/00072.}

\author{Henry Sinclair-Banks}
{University of Warsaw, Poland 
    \and Max Planck Institute for Software Systems \hbox{(MPI-SWS)}, Kaiserslautern, Germany 
    \and \url{http://henry.sinclair-banks.com}
}
{hsb@mpi-sws.org}
{https://orcid.org/0000-0003-1653-4069}
{Supported by the ERC grant INFSYS, agreement no. 950398.}

\ccsdesc[500]{Theory of computation~Parallel computing models}

\keywords{vector addition systems, reachability problem}

\authorrunning{Bizière, Czerwi{\'n}ski, Guttenberg, Leroux, Michielini, Orlikowski, Puch, Sinclair-Banks}

\acknowledgements{We thank Kyveli Doveri and Georg Zetzsche for inspiring discussions about VASS extended with integer counters.}

\relatedversiondetails[cite=BiziereCGLMOPS26]{Published Version}{https://doi.org/10.4230/LIPIcs.LICS.2026.19} 

\begin{document}

\maketitle

\begin{abstract}
    We consider a variant of VASS extended with integer counters, denoted \zvass{}{}.
    These are automata equipped with $\NN$- and $\ZZ$-counters; the $\NN$-counters are required to remain nonnegative and the $\ZZ$-counters do not have this restriction.
    We study the complexity of the reachability problem for \zvass{}{} when the number of $\NN$-counters is fixed.
    We show that reachability is \class{NP}-complete in \linebreak\zvass{1}{} (i.e.\ when there is only one $\NN$-counter) regardless of unary or binary encoding.
    For $d \geq 2$, using a KLMST-based algorithm, we prove that reachability in \zvass{d}{} lies in the complexity class $\Ff_{d+2}$.
    Our upper bound improves on the naively obtained Ackermannian complexity by simulating the $\ZZ$-counters with $\NN$-counters.

    To complement our upper bounds, we show that extending VASS with integer counters significantly lowers the number of $\NN$-counters needed to exhibit hardness.
    We prove that reachability in unary \zvass{2}{} is \pspace-hard; without $\ZZ$-counters this lower bound is only known in dimension~5.
    We also prove that reachability in unary \zvass{3}{} is \tower-hard.
    Without $\ZZ$-counters, reachability in 3-VASS has elementary complexity and \tower-hardness is only known in dimension~8.
\end{abstract}

\section{Introduction}
\label{sec:introduction}

Vector Addition Systems with States (VASS) together with Petri nets, an equivalent model, are established computational models of concurrency with a variety of applications in computer science~\cite{Reisig13}.
VASS can be seen as automata equipped with nonnegative integer counters.
These counters can be incremented and decremented, but importantly, they cannot be tested for equality with zero. 
A central problem for VASS is reachability: from a given initial configuration (state and counter values), is there a run to the given target configuration?

Research into VASS reachability has long and rich history.
The problem was first considered in the seventies; in 1976, reachability was shown to be \expspace-hard by Lipton~\cite{Lipton76}. 
Decidability of reachability was only proved later by Mayr  in 1981~\cite{Mayr81}. 
Mayr's algorithm was subsequently simplified and reformulated by Kosaraju~\cite{Kosaraju82} and Lambert~\cite{Lambert92}.
After the names of the three inventors, and after Sacerdote and Tenney who provided partial results in 1977~\cite{SacerdoteT77}, the main algorithm used to decide reachability in VASS goes by the acronym ``KLMST''.
There has since been great effort devoted to establishing the  complexity of reachability and it was only recently determined to be \ackermann-complete~\cite{LerouxS19,Leroux21,CzerwinskiO21}.

Despite this breakthrough, many questions concerning reachability problems in VASS and related models are far from being fully understood.
A prominent yet puzzling example is the complexity of the reachability problem in $d$-dimensional VASS (aka.\ $d$-VASS) for fixed $d \in \NN$. 
The complexity has only been settled for dimensions $1$ and $2$~\cite{ValiantP75,HaaseKOW09,BlondinEFGHLMT21}. 
Already for dimension $3$, despite very recent progress, there is a wide complexity gap: with unary encoding, the problem is \np-hard~\cite{ChistikovCMOSW24} and in \twoexpspace~\cite{CzerwinskiJLO25}.
Even though an elementary algorithm is not known for reachability in $4$-VASS, the best-known non-elementary lower bound requires 8 dimensions~\cite{CzerwinskiO22}. 
Moreover, in higher dimensions $d \in \NN$ the gap is huge: reachability in $d$-VASS is known to be $\Ff_{(d-3)/2}$-hard~\cite{CzerwinskiJLLO23} and in $\Ff_d$~\cite{FuYZ24}; here $\Ff_d$ is the $d$-th complexity class defined by the hierarchy of fast growing functions (for details, see~\cite{Schmitz16}).

In this paper, we consider VASS extended with integer counters. 
This model is a kind of automaton equipped with counters that must remain nonnegative ($\NN$-counters) and counters without this restriction ($\ZZ$-counters). 
We use \zvass{d}{k} to refer to a VASS which has $d$ $\NN$-counters and $k$ $\ZZ$-counters.
We also use \zvass{d}{} to refer to the scenario in which the number of $\ZZ$-counters is not specified.

The reachability problem for VASS extended with integer counters in which both $d$ and $k$ are not specified can be easily shown to be \ackermann-complete. 
Indeed, every integer counter can be simulated by the difference of two nonnegative counters.
This means that the additional integer counters do not offer any expressive power. 
However, reachability in \zvass{d}{}, for fixed $d \in \NN$, turns out to be an intriguing, nontrivial problem.

Our main motivation to study reachability in \zvass{}{} is two-fold. 
First, \zvass{}{} are a natural, basic model of computation that we wish to examine in detail.
This model has been implicitly used several times before: Rackoff used \zvass{}{} when analysing self-covering runs that witness unboundedness in VASS~{\cite[Section 4]{Rackoff78}}; they have been used by Demri to prove that the strong promptness detection problem can be decided in exponential space~{\cite[Section 3]{Demri13}}. 
\zvass{}{} also appear in KLMST algorithms for reachability in VASS~\cite{Kosaraju82} (precisely, after up-pumping some counters has occurred and before said counters are later down-pumped). 
As a result, \zvass{}{} have also been indirectly used to show that the regular separability problem of VASS reachability languages is decidable~\cite{Keskin024}; that reachability in VASS equipped with a pushdown stack is decidable~\cite{GuttenbergKM26}; and to prove that mutual reachability in Petri nets with data is decidable~{\cite[Section 5]{KaminskiL24}}.
Even though \zvass{}{} have been discussed and implicitly used by the community, they had not yet been systematically investigated.

The second motivation is that understanding fixed dimension VASS appears to be highly nontrivial and requires novel approaches. 
Adding integer counters immediately yields surprising observations: even though the coverability problem for VASS is \expspace-complete~\cite{Lipton76,Rackoff78}, adding just one integer counter makes coverability as hard as reachability (i.e.\ \ackermann-complete).
This is because one additional integer counter can be introduced to preserve the invariant that the sum of all counters is zero.
We anticipate that studying \zvass{}{}, a very closely related model, may be fruitful for finding new techniques that will also be applicable for fixed dimension VASS.

Additionally, we have observed an application of reachability in \zvass{d}{}, specifically a reduction of the following problem to reachability in \zvass{d}{}.
Given two labelled $d$-VASS $\Vv_1$ and $\Vv_2$, decide whether there exists a word $w_1$ accepted by $\Vv_1$ and a word $w_2$ accepted by $\Vv_2$ with the same Parikh image.
To sketch the reduction, first consider running $\Vv_1$ while guessing $w_1$ and store its Parikh image in some $\ZZ$-counters, zero test the $\NN$-counters using some more $\ZZ$-counters, then run $\Vv_2$ while guessing $w_2$ and subtract its Parikh image from the same $\ZZ$-counters; if the $\ZZ$-counters end with zero value, then the Parikh images were equal.

\subsection*{Our Contributions}

In this paper we present four main contributions regarding  reachability in \zvass{d}{}.

\begin{description}
    \item[Dimension One.]
    In~\cref{sec:dimension-one}, we study \zvass{1}{}; we prove that reachability is \np-complete, regardless of whether unary or binary encoding is used. 
    The main result here is the \class{NP} upper bound (Theorem~\ref{thm:1-zvass-np}). 
    Our result generalises the \np upper bound for binary 1-VASS~\cite{HaaseKOW09} and most interestingly uses a novel approach.
\end{description}

\begin{description}
    \item[Dimension Two.]
    In~\cref{sec:2-VASS+Z}, we show that the reachability problem for unary \zvass{2}{} is \pspace-hard (\cref{thm:unary-2-zvass-psapce-hard}). 
    At a glance, this result may not seem particularly impressive because reachability in unary VASS with only integer counters is already \class{NP}-hard.
    However, it turned out to be rather challenging and in proving~\cref{thm:unary-2-zvass-psapce-hard}, we have found new techniques.
    The first is an intricate construction to produce doubly-exponential counter values (\cref{lem:twoexptriple}). 
    The second is a new flavour of a hard reachability problem in counter automata (\cref{lem:reachability-CA}).
    The third is a novel way of simulating counter automata with VASS provided with a large counter value; we believe this last technique will have applications in the future.
\end{description}

\begin{description}
    \item[Dimension Three.]
    In~\cref{sec:3-VASS+Z}, we prove that reachability in unary \zvass{3}{} is \tower-hard (\cref{thm:unary-3-zvass-tower-hard}). 
    The situation here is the opposite compared to dimension two: the techniques are somewhat standard, but the result is rather strong.  
    Regardless, it is striking to see that reachability in \zvass{d}{} is non-elementary for such a small $d$.
    In particular, this case proves that extending VASS with integer counters makes the problem strictly harder as reachability for $3$-VASS has elementary complexity~\cite{CzerwinskiJLO25}.
\end{description}

\begin{description}
    \item[Higher Dimensions.]
    In~\cref{sec:klm-upper-bound}, we study \zvass{d}{} for an arbitrary fixed dimension $d \in \NN$.
    We show that reachability in \hbox{\zvass{d}{}} is in $\Ff_{d+2}$ (\cref{thm:klm-d-vass}). 
    Notice that this is slightly higher than the best-known $\Ff_d$ upper bound for reachability in $d$-VASS~\cite{FuYZ24}. 
    Our approach is based on the KLMST algorithm and also uses some ideas introduced in~\cite{FuYZ24}. 
    In order to deal with $\ZZ$-counters without the complexity increasing much, we use a new vector space which manages the possible deviations of the $\ZZ$-counters. 
    Our construction is almost optimal in the following sense: any KLMST-based approach cannot attain an upper bound below $\Ff_{d+1}$ (we discuss this in more detail in~\cref{sec:conclusion}).
\end{description}

\subsection*{Table of known results}

Before we give the preliminaries, we present the state of the art results for reachability in VASS, integer VASS, and VASS extended with integer counters together in two tables.
The first and second table outline what is known about reachability in \zvass{d}{k} encoded in unary and binary, respectively.
Do note that for $d = 3$ and fixed $d \geq 4$, the two tables completely coincide.
This is because we do not know of any stronger lower or upper bounds for $3$-VASS, \zvass{3}{k}, or \zvass{3}{} when comparing unary and binary encoding, and for fixed $d \geq 4$, $F_{d}$
Some cells of the tables are coloured according to the following key.
\begin{description}
    \item[Green\testcell{tablegreen}:]  New upper bound that is a main contribution.
    \item[Red\testcell{tablered}:]  New lower bound that is a main contribution.
    \item[Lime\testcell{tablelime}:]  New upper bound following from a main contribution.
    \item[Orange\testcell{tableorange}:] New lower bound following from a main contribution.
\end{description}

\paragraph*{Unary encoding}
{
    \centering
    \begin{tikzpicture}
        \def\ll{1}
        \def\lm{4}
        \def\rm{8}
        \def\rr{12}
        \def\Lline{0}
        \def\lline{2}
        \def\mline{6}
        \def\rline{10}
        \def\Rline{14}
        
        \def\h{0}
        \def\rowzero{-1}
        \def\rowone{-2}
        \def\rowtwo{-3}
        \def\rowthree{-4}
        \def\rowd{-5}
        
        \tikzset{
            tableline/.style={
                draw = black,
                line width = 0.2mm
            }
        }
        
        \tikzset{
            textentry/.style={
                scale = 0.9
            }
        }
        
        
        \fill[tablegreen] (\rline, \rowone+0.5) rectangle (\Rline, \rowone-0.5);
        
        \fill[tablelime] (\mline, \rowtwo-0.5) -- (\rline, \rowtwo+0.5) -- (\rline, \rowtwo-0.5) -- cycle;
        
        \fill[tablered] (\rline, \rowtwo+0.5) -- (\Rline, \rowtwo+0.5) -- (\rline, \rowtwo-0.5) -- cycle;
        \fill[tablelime] (\Rline, \rowtwo-0.5) -- (\Rline, \rowtwo+0.5) -- (\rline, \rowtwo-0.5) -- cycle;
        
        \fill[tableorange] (\mline, \rowthree-0.5) -- (\rline, \rowthree+0.5) -- (\mline, \rowtwo-0.5) -- cycle;
        \fill[tablelime] (\mline, \rowthree-0.5) -- (\rline, \rowthree+0.5) -- (\rline, \rowthree-0.5) -- cycle;
        
        \fill[tablered] (\rline, \rowthree+0.5) -- (\Rline, \rowthree+0.5) -- (\rline, \rowthree-0.5) -- cycle;
        \fill[tablelime] (\Rline, \rowthree-0.5) -- (\Rline, \rowthree+0.5) -- (\rline, \rowthree-0.5) -- cycle;
        
        \fill[tablelime] (\mline, \rowd+0.5) rectangle (\rline, \rowd-0.5);
        
        \fill[tablegreen] (\rline, \rowd+0.5) rectangle (\Rline, \rowd-0.5);
    
        \node[textentry] at (\ll, \h+0.2) {No. $\NN$-};
        \node[textentry] at (\ll, \h-0.2) {counters};
        \node[textentry] at (\rm, \h+0.25) {No. $\ZZ$-counters};
        \node[textentry] at (\lm,\h-0.25) {$k=0$};
        \node[textentry] at (\rm,\h-0.25) {fixed $k \geq 1$};
        \node[textentry] at (\rr,\h-0.25) {variable $k \in \NN$};
        \draw[tableline] (\ll-1, \h+0.5) -- (\rr+2, \h+0.5);
        \draw[tableline] (\lline, \h) -- (\rr+2, \h);
        \draw[tableline] (\ll-1, \h-0.5) -- (\rr+2, \h-0.5);
        
        \draw[tableline] (\Lline, \h+0.5) -- (\Lline, \rowd-0.5);
        \draw[tableline] (\lline, \h+0.5) -- (\lline, \rowd-0.5);
        \draw[tableline] (\mline, \h+0) -- (\mline, \rowd-0.5);
        \draw[tableline] (\rline, \h+0) -- (\rline, \rowd-0.5);
        \draw[tableline] (\Rline, \h+0.5) -- (\Rline, \rowd-0.5);
        
        \node[textentry] at (\ll, \rowzero) {$d=0$};
        \node[textentry] at (\lm, \rowzero+0.2) {\class{NL}-complete};
        \node[textentry] at (\lm, \rowzero-0.2) {\emph{Directed graph reachability}};
        \node[textentry] at (\rm, \rowzero+0.2) {\class{NL}-complete};
        \node[textentry] at (\rm, \rowzero-0.2) {\emph{Fixed no.\ variables ILP}};
        \node[textentry] at (\rr, \rowzero+0.2) {\class{NP}-complete (folklore)};
        \node[textentry] at (\rr, \rowzero-0.2) {See also Prop.~\ref{pro:unary-zvass}};
        \draw[tableline] (\ll-1, \rowzero-0.5) -- (\rr+2, \rowzero-0.5);
        
        \node[textentry] at (\ll, \rowone) {$d=1$};
        \node[textentry] at (\lm, \rowone) {\class{NL}-complete~\cite{ValiantP75}};
        \node[textentry] at (\rm, \rowone+0.2) {\class{NL}-complete~\cite{AtigCHKSZ16}};
        \node[textentry] at (\rm, \rowone-0.2) {See also Cor.~\ref{cor:unary-NL}};
        \node[textentry] at (\rr, \rowone) {\class{NP}-complete (Thm.~\ref{thm:1-zvass-np})};
        \draw[tableline] (\ll-1, \rowone-0.5) -- (\rr+2, \rowone-0.5);
        
        \node[textentry] at (\ll, \rowtwo) {$d=2$};
        \node[textentry] at (\lm, \rowtwo) {\class{NL}-complete~\cite{EnglertLT16}};
        \node[textentry] at (\rm, \rowtwo+0.2) {\class{NP}-hard (Cor.~\ref{cor:unary-2-vass+1z})};
        \node[textentry] at (\rm, \rowtwo-0.2) {In $\Ff^{(k)}_3$ (See Thm.~\ref{thm:klm-d-vass})};
        \node[textentry] at (\rr, \rowtwo+0.2) {\class{PSpace}-hard (Thm.~\ref{thm:unary-2-zvass-psapce-hard})};
        \node[textentry] at (\rr, \rowtwo-0.2) {In $\Ff_4$ (Thm.~\ref{thm:klm-d-vass})};
        \draw[tableline] (\ll-1, \rowtwo-0.5) -- (\rr+2, \rowtwo-0.5);
        
        \node[textentry] at (\ll, \rowthree) {$d=3$};
        \node[textentry] at (\lm, \rowthree+0.2) {\class{NP}-hard~\cite{ChistikovCMOSW24}};
        \node[textentry] at (\lm, \rowthree-0.2) {In \class{2ExpSpace}~\cite{CzerwinskiJLO25}};
        \node[textentry, scale=0.875] at (\rm, \rowthree+0.2) {$\Oh(k)$-\class{ExpSpace}-hard (Cor.~\ref{cor:3-vass+kz})};
        \node[textentry] at (\rm, \rowthree-0.2) {In $\Ff^{(k)}_4$ (See Thm.~\ref{thm:klm-d-vass})};
        \node[textentry] at (\rr, \rowthree+0.2) {\class{Tower}-hard (Thm.~\ref{thm:unary-3-zvass-tower-hard})};
        \node[textentry] at (\rr, \rowthree-0.2) {In $\Ff_5$ (Thm.~\ref{thm:klm-d-vass})};
        \draw[tableline] (\ll-1, \rowthree-0.5) -- (\rr+2, \rowthree-0.5);
        
        \node[textentry] at (\ll, \rowd+0.2) {fixed};
        \node[textentry] at (\ll, \rowd-0.2) {$d\geq 4$};
        \node[textentry] at (\lm, \rowd+0.2) {$\Ff_{(d-3)/2}$-hard~\cite{CzerwinskiJLLO23}};
        \node[textentry] at (\lm, \rowd-0.2) {In $\Ff_d$~\cite{FuYZ24}};
        \node[textentry] at (\rm, \rowd+0.2) {Lower bound from $k=0$};
        \node[textentry] at (\rm, \rowd-0.2) {In $\Ff_{d+1}^{(k)}$ (See Thm.~\ref{thm:klm-d-vass})};
        \node[textentry] at (\rr, \rowd+0.2) {Lower bound from $k=0$};
        \node[textentry] at (\rr, \rowd-0.2) {In $\Ff_{d+2}$ (Thm.~\ref{thm:klm-d-vass})};
        \draw[tableline] (\ll-1, \rowd-0.5) -- (\rr+2, \rowd-0.5);
    \end{tikzpicture}}
\paragraph*{Binary encoding}
{
    \centering
    \begin{tikzpicture}
        \def\ll{1}
        \def\lm{4}
        \def\rm{8}
        \def\rr{12}
        \def\Lline{0}
        \def\lline{2}
        \def\mline{6}
        \def\rline{10}
        \def\Rline{14}
        
        \def\h{0}
        \def\rowzero{-1}
        \def\rowone{-2}
        \def\rowtwo{-3}
        \def\rowthree{-4}
        \def\rowd{-5}
        
        \tikzset{
            tableline/.style={
                draw = black,
                line width = 0.2mm
            }
        }
        
        \tikzset{
            textentry/.style={
                scale = 0.9
            }
        }
        
        
        \fill[tablelime] (\mline, \rowone+0.5) rectangle (\rline, \rowone-0.5);
        
        \fill[tablegreen] (\rline, \rowone+0.5) rectangle (\Rline, \rowone-0.5);
        
        \fill[tablelime] (\mline, \rowtwo-0.5) -- (\rline, \rowtwo+0.5) -- (\rline, \rowtwo-0.5) -- cycle;
        
        \fill[tablelime] (\Rline, \rowtwo-0.5) -- (\Rline, \rowtwo+0.5) -- (\rline, \rowtwo-0.5) -- cycle;
        
        \fill[tableorange] (\mline, \rowthree-0.5) -- (\rline, \rowthree+0.5) -- (\mline, \rowtwo-0.5) -- cycle;
        \fill[tablelime] (\mline, \rowthree-0.5) -- (\rline, \rowthree+0.5) -- (\rline, \rowthree-0.5) -- cycle;
        
        \fill[tablered] (\rline, \rowthree+0.5) -- (\Rline, \rowthree+0.5) -- (\rline, \rowthree-0.5) -- cycle;
        \fill[tablelime] (\Rline, \rowthree-0.5) -- (\Rline, \rowthree+0.5) -- (\rline, \rowthree-0.5) -- cycle;
        
        \fill[tablelime] (\mline, \rowd+0.5) rectangle (\rline, \rowd-0.5);
        
        \fill[tablegreen] (\rline, \rowd+0.5) rectangle (\Rline, \rowd-0.5);

        \node[textentry] at (\ll, \h+0.2) {No. $\NN$-};
        \node[textentry] at (\ll, \h-0.2) {counters};
        \node[textentry] at (\rm, \h+0.25) {No. $\ZZ$-counters};
        \node[textentry] at (\lm,\h-0.25) {$k=0$};
        \node[textentry] at (\rm,\h-0.25) {fixed $k \geq 1$};
        \node[textentry] at (\rr,\h-0.25) {variable $k \in \NN$};
        \draw[tableline] (\ll-1, \h+0.5) -- (\rr+2, \h+0.5);
        \draw[tableline] (\lline, \h) -- (\rr+2, \h);
        \draw[tableline] (\ll-1, \h-0.5) -- (\rr+2, \h-0.5);
        
        \draw[tableline] (\Lline, \h+0.5) -- (\Lline, \rowd-0.5);
        \draw[tableline] (\lline, \h+0.5) -- (\lline, \rowd-0.5);
        \draw[tableline] (\mline, \h+0) -- (\mline, \rowd-0.5);
        \draw[tableline] (\rline, \h+0) -- (\rline, \rowd-0.5);
        \draw[tableline] (\Rline, \h+0.5) -- (\Rline, \rowd-0.5);
        
        \node[textentry] at (\ll, \rowzero) {$d=0$};
        \node[textentry] at (\lm, \rowzero+0.2) {\class{NL}-complete};
        \node[textentry] at (\lm, \rowzero-0.2) {\emph{Directed graph reachability}};
        \node[textentry] at (\rm, \rowzero+0.2) {\class{NP}-complete (folklore)};
        \node[textentry] at (\rm, \rowzero-0.2) {See also~{\cite[Thm.~3.5]{RosierY86}}};
        \node[textentry] at (\rr, \rowzero) {\class{NP}-complete (ILP)};
        \draw[tableline] (\ll-1, \rowzero-0.5) -- (\rr+2, \rowzero-0.5);
        
        \node[textentry] at (\ll, \rowone) {$d=1$};
        \node[textentry] at (\lm, \rowone+0.2) {\class{NP}-complete};
        \node[textentry] at (\lm, \rowone-0.2) {{\cite[Thm.~3.5]{RosierY86}} and~\cite{HaaseKOW09}};
        \node[textentry] at (\rm, \rowone) {\class{NP}-complete (Thm.~\ref{thm:1-zvass-np})};
        \node[textentry] at (\rr, \rowone) {\class{NP}-complete (Thm.~\ref{thm:1-zvass-np})};
        \draw[tableline] (\ll-1, \rowone-0.5) -- (\rr+2, \rowone-0.5);
        
        \node[textentry] at (\ll, \rowtwo) {$d=2$};
        \node[textentry] at (\lm, \rowtwo) {\class{PSpace}-complete~\cite{FearnleyJ15,BlondinFGHM15}};
        \node[textentry] at (\rm, \rowtwo+0.2) {Lower bound from $k=0$};
        \node[textentry] at (\rm, \rowtwo-0.2) {In $\Ff^{(k)}_3$ (See Thm.~\ref{thm:klm-d-vass})};
        \node[textentry] at (\rr, \rowtwo+0.2) {Lower bound from $k=0$};
        \node[textentry] at (\rr, \rowtwo-0.2) {In $\Ff_4$ (Thm.~\ref{thm:klm-d-vass})};
        \draw[tableline] (\ll-1, \rowtwo-0.5) -- (\rr+2, \rowtwo-0.5);
        
        \node[textentry] at (\ll, \rowthree) {$d=3$};
        \node[textentry] at (\lm, \rowthree+0.2) {Lower bound from $d=2$};
        \node[textentry] at (\lm, \rowthree-0.2) {In \class{2ExpSpace}~\cite{CzerwinskiJLO25}};
        \node[textentry, scale=0.875] at (\rm, \rowthree+0.2) {$\Oh(k)$-\class{ExpSpace}-hard (Cor.~\ref{cor:3-vass+kz})};
        \node[textentry] at (\rm, \rowthree-0.2) {In $\Ff^{(k)}_4$ (See Thm.~\ref{thm:klm-d-vass})};
        \node[textentry] at (\rr, \rowthree+0.2) {\class{Tower}-hard (Thm.~\ref{thm:unary-3-zvass-tower-hard})};
        \node[textentry] at (\rr, \rowthree-0.2) {In $\Ff_5$ (Thm.~\ref{thm:klm-d-vass})};
        \draw[tableline] (\ll-1, \rowthree-0.5) -- (\rr+2, \rowthree-0.5);
        
        \node[textentry] at (\ll, \rowd+0.2) {fixed};
        \node[textentry] at (\ll, \rowd-0.2) {$d\geq 4$};
        \node[textentry] at (\lm, \rowd+0.2) {$\Ff_{(d-3)/2}$-hard~\cite{CzerwinskiJLLO23}};
        \node[textentry] at (\lm, \rowd-0.2) {In $\Ff_d$~\cite{FuYZ24}};
        \node[textentry] at (\rm, \rowd+0.2) {Lower bound from $k=0$};
        \node[textentry] at (\rm, \rowd-0.2) {In $\Ff_{d+1}^{(k)}$ (See Thm.~\ref{thm:klm-d-vass})};
        \node[textentry] at (\rr, \rowd+0.2) {Lower bound from $k=0$};
        \node[textentry] at (\rr, \rowd-0.2) {In $\Ff_{d+2}$ (Thm.~\ref{thm:klm-d-vass})};
        \draw[tableline] (\ll-1, \rowd-0.5) -- (\rr+2, \rowd-0.5);
    \end{tikzpicture}
}

These tables do not contain all known complexity bounds for reachability in $d$-VASS.
Notice that for some small $d \geq 4$, $\Ff_{(d-3)/2}$-hardness is not informative; this lower bound becomes substantial when $d \geq 9$.
In fact, reachability in unary 8-VASS is known to be \class{Tower}-hard~{\cite[Theorem 1.4]{CzerwinskiO22}} (note that $\Ff_3 = $ \class{Tower}).
Furthermore, it is known that reachability in unary 5-VASS is \class{PSpace}-hard~{\cite[Theorem 1.2]{CzerwinskiO22}} and reachability in binary 6-VASS is \class{ExpSpace}-hard~{\cite[Theorem 1.3]{CzerwinskiO22}}.

\section{Preliminaries}
\label{sec:preliminaries}

\paragraph*{Basic Notions}
By $\QQ$, $\QQ_+$, $\ZZ$, $\NN$, and $\NN_+$, we respectively denote the sets of rational, positive rational, integer, nonnegative integer (natural), and positive integer numbers. 
For $a, b \in \ZZ$, $a \leq b$, we define $[a, b] \coloneqq \setof{x \in \ZZ}{a \leq x \leq b}$ and for $a \in \NN_+$ we define $[a] \coloneqq [1,a]$. 
The $j$-th coordinate of a vector $\vec{v} \in \QQ^d$ is indexed using square brackets $\vec{v}[j]$. 
We define the \emph{one-norm} of $\vec{v}$ as $\norm{\vec{v}} \coloneqq \onenorm{\vec{v}} = \abs{\vec{v}[1]} + \ldots + \abs{\vec{v}[d]}$ and the \emph{max-norm} of $\vec{v}$ as $\maxnorm{\vec{v}} \coloneqq \max\set{\abs{\vec{v}[1]}, \ldots, \abs{\vec{v}[d]}}$.
We use use the shorthand $\norm{\vec{v}}$ to denote the one-norm of $\vec{v}$, $\norm{\vec{v}} \coloneqq \onenorm{\vec{v}}$.
For vectors $\vec{u} \in \QQ^d$ and $\vec{v} \in \QQ^k$ we define vector $\vec{u} \oplus \vec{v}$ as the vector $\vec{w} \in \QQ^{d+k}$ such that, for $i \in [d], \vec{w}[i] = \vec{u}[i]$ and, for $i \in [d+1, d+k], \vec{w}[i] = \vec{v}[i - d]$; namely the first $d$ components of $\vec{w}$ equal the vector $\vec{u}$ and the last $k$ components $\vec{w}$ the vector $\vec{v}$. 

\paragraph*{Vector Addition Systems with States}
A $d$-dimensional Vector Addition System with States ($d$-VASS) is a non-deterministic finite automaton with $d$ nonnegative integer counters.  
Formally, a VASS consists of a finite set of \emph{states} $Q$ and finite set of \emph{transitions} $T \subseteq Q \times \ZZ^d \times Q$. 
For a transition $t = (p, \vec{v}, q) \in T$, we say that the transition has \emph{effect} $\vec{v} \in \ZZ^d$, which we denote by $\eff{t}$. 
A \emph{configuration} of a $d$-VASS consists of the current state $p \in Q$ and the current counter values $\vec{u} \in \NN^d$; denoted as  $\Config{p}{u} \in Q \times \NN^d$. 
A transition $t = (p, \vec{v}, q)$ can be \emph{fired} from a configuration $\Config{p}{u}$ if $\vec{u} + \vec{v} \geq \vec{0}$.
We write $\Config{p}{u} \xrightarrow{t} \Config{q}{u+v}$.
A \emph{path} is a sequence of transitions $\rho = (t_1, \ldots, t_\ell)$ such that, for every $i$, $t_i = (q_{i-1}, \vec{u}_i-\vec{u}_{i-1}, q_i)$ (i.e. the state that $t_{i-1}$ leads to is the same state that $t_i$ starts from). 
A \emph{run} is a sequence of configurations $(\config{q_0}{\vec{u}_0}, \config{q_1}{\vec{u}_1}, \ldots, \config{q_\ell}{\vec{u}_\ell})$ such that, for every $i \in [\ell]$, $(q_{i-1}, \vec{u}_i-\vec{u}_{i-1}, q_i) \in T$ is a transition; we may also write $\config{q_0}{\vec{u}_0} \xrightarrow{\rho} \config{q_\ell}{\vec{u}_\ell}$.
We may also write $\config{q_0}{\vec{u}_0} \xrightarrow{\rho}_\Vv \config{q_\ell}{\vec{u}_\ell}$ to specify that the run occurs in $\Vv$.
We also extend the definition of effect to paths: the effect of a path is the sum of the effects of its transitions $\eff{\rho} = \eff{t_1} + \eff{t_2} + \ldots + \eff{t_n}$.

We consider a variant of VASS, called \emph{VASS extended with integer counters} (denoted \zvass{}{}).
These are VASS in which some counters are required to remain nonnegative ($\NN$-counters) and others need not ($\ZZ$-counters). 
Syntactically, a \zvass{d}{k} is the same as a \hbox{$(d+k)$-VASS}, namely they consist of a finite set of states $Q$ and a finite set of transitions $T \sset Q \times \ZZ^{d+k} \times Q$.
Semantically, however, a configuration of a \zvass{d}{k} belongs to $Q \times (\NN^d \times \ZZ^k)$, instead of $Q \times \NN^{d+k}$.
We define projections $\pi_\NN : \NN^d \times \ZZ^k \to \NN^d$ by $\nproj{\vec{v}} = (\vec{v}[1], \ldots, \vec{v}[d]) \in \NN^d$, and $\pi_\ZZ : \NN^d \times \ZZ^k \to \NN^d$ by $\zproj{\vec{v}} = (\vec{v}[d+1], \ldots, \vec{v}[d+k]) \in \ZZ^k$.
The definitions of effects, paths, and runs remain unchanged for \zvass{d}{k}.
Though, to be clear, a transition $t = (p, \vec{v}, q)$ can be fired in a configuration $\Config{p}{u}$ if $\nproj{\vec{u}} + \nproj{\vec{v}} \geq \vec{0}$ (i.e.\ only the first $d$ counters must be nonnegative). Notice, that if $k = 0$, a \zvass{d}{k} is a $d$-VASS; and if $d = 0$, a \zvass{d}{k} may be called an integer $k$-VASS.

The main focus of this paper is the reachability problem for \zvass{}{}. 
We consider the situation in which both $d$  (the number of $\NN$ counters) and $k$ (the number of additional integer counters) do \emph{not} form part of the input.

\pagebreak
\begin{problem}[Reachability in \zvass{d}{k}.]\hfill
    \begin{description}
    \item[\hspace{1.15em}Input:] A \zvass{d}{k} and two configurations $s, t$.
    \item[\hspace{1.15em}Question:] Does $s \xrightarrow{*} t$ hold?
    \end{description}
\end{problem}

We also consider the situation in which $k$ (the number of additional $\ZZ$ counters) is part of the input to the reachability problem.

\begin{problem}[Reachability in \zvass{d}{}.]\hfill
    \begin{description}
    \item[\hspace{1.15em}Input:] A \zvass{d}{k}, two configurations $s, t$, and $k \in \NN$.
    \item[\hspace{1.15em}Question:] Does $s \xrightarrow{*} t$ hold?
    \end{description}
\end{problem}

We denote the size of VASS $\Vv$ by $\size(\Vv)$. 
Importantly, we distinguish unary and binary encoding. 
If the transition vectors of a VASS $\Vv = (Q, T)$ are encoded in unary, then we call it a \emph{unary VASS} and we use $\unarysize{\Vv} \coloneqq |Q| + \sum_{(p, \vec{v}, q) \in T} \onenorm{\vec{v}}$.
If numbers on transitions of a VASS $\Vv = (Q, T)$ are encoded in binary then we call it a \emph{binary VASS} and set $\binarysize{\Vv} \coloneqq |Q| + \sum_{(p, \vec{v}, q) \in T} \log_2(\onenorm{\vec{v}}+1) + 1$.

\paragraph*{Fast-growing Functions and Complexity Classes}
We define here a hierarchy of \emph{fast-growing functions} and their corresponding complexity classes. 
Concretely, for each $k \geq 1$ we define a function $F_k: \NN \rightarrow \NN$ and the  complexity class $\Ff_k$ which corresponds to the $k$-th level in the Grzegorczyk Hierarchy~{\cite[Subsections 2.3 and 4.1]{Schmitz16}}.

Let $F_1(n) \coloneqq 2n$ and, for $k > 1$, let $F_k(n) \coloneqq \underbrace{F_{k-1} \circ \cdots \circ F_{k-1}}_{n\ \text{times}}(1)$.

Notice that $F_2(n) = 2^n$ and $F_3(n) = \textup{Tower}(n)$. Using the functions $F_k$, we define corresponding classes $\Ff_i$ as problems solvable in deterministic time $F_i(f(n))$ where $f: \NN \rightarrow \NN$ is a function computable in deterministic time $F_{i-1}^m(n)$ for some fixed $m \in \NN$; concretely we define $\Ff_k \coloneqq \bigcup_{f \in \Fff_{k-1}} \mathrm{DTIME}(F_k(f(n)))$ where $\Fff_k \coloneqq \bigcup_{m \in \NN} \mathrm{FDTIME}(F_k^{m}(n))$.

Intuitively, problems in $\Ff_k$ are those that can be solved in time $F_k(n)$ under ``easier-than'' $\Ff_k$-reductions. 
For example the class $\Ff_3$, known as \tower, contains all the problems,
which can be solved in the deterministic $\text{Tower}(n)$-time, but also for example those, which can be solved in time $\text{Tower}(2^n)$ (because $\text{Tower}(2^n) = \text{Tower} \circ F_2$).
Notice, that there are many known variants of the definitions of the fast-growing function hierarchy. However, the classes $\Ff_k$ are robust with respect to minor modifications to the definition of the hierarchy of fast-growing functions, see~{\cite[Section 4]{Schmitz16}}.

\section{Reachability in \zvass{1}{}}
\label{sec:dimension-one}

In this section, we will focus on \zvass{1}{}; these are VASS with a single natural valued counter and a variable number of integer counters. 
Our main result is that reachability in binary \hbox{\zvass{1}{}} is in \np. 
This extends the best known results: reachability in binary 1-VASS is in \np~\cite{HaaseKOW09} and reachability in binary integer VASS (with a variable number of integer counters) is in \np.
The latter follows from the fact that integer VASS can efficiently be converted into instances of integer linear programming which can be decided in \np~\cite{LinearProgramming}.

\begin{restatable}{theorem}{dimensionOneNP}\label{thm:1-zvass-np}
    Reachability in binary \zvass{1}{} is in \np.    
\end{restatable}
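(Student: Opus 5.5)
We reduce reachability in a binary \zvass{1}{k} $\Vv = (Q,T)$ to guessing a polynomial-size certificate that is then checked with an existential Presburger / integer linear programming feasibility problem, which is in \np~\cite{LinearProgramming}. The certificate describes the run in terms of how the single $\NN$-counter behaves. The $\ZZ$-counters impose no sign constraints, so they only contribute linear constraints on the Parikh image of the run. The difficulty is the $\NN$-counter, whose nonnegativity is not captured by Parikh images.

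\textbf{Step 1: decomposition at zero-visits and at the minimum.} Split an accepting run at the positions where the $\NN$-counter is $0$. Each resulting segment is a run of one of two kinds: either the counter stays positive throughout, or it starts and ends at zero and stays positive strictly in between. The key is a \emph{simple} sufficient condition for a Parikh vector to be realizable by a path along which the $\NN$-counter stays nonnegative. Consider a path from $p$ to $q$ with Parikh image $\vec{x}$. Its support must be connected and satisfy Euler's flow conditions. Suppose the $\NN$-counter never drops below $0$ in some ordering. Then we would like to argue that the path can be rearranged into three phases: a ``prefix'' in which cycles with positive $\NN$-effect are taken first, a ``middle'' cycle-free skeleton, and a ``suffix'' in which negative cycles are taken last. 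Under such a rearrangement, nonnegativity reduces to conditions on a short skeleton path plus a sign condition on the cycles.

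\textbf{Step 2: the certificate.} The certificate consists of several items.
\begin{enumerate}
\item A polynomial number of ``phases''. In 1-VASS, the argument of~\cite{HaaseKOW09} suggests that a number of phases polynomial in $|Q|$ suffices once repeated equivalent segments are merged.
\item For each phase, the entry and exit states, and whether the $\NN$-counter at its boundary is zero.
\item A support set of transitions, together with one designated state of the phase from which positive cycles are pumped first and one from which negative cycles are pumped last.
\end{enumerate}
The ILP then has the following variables: the Parikh vectors $\vec{x}_i \in \NN^T$ for each phase, and the counter values at phase boundaries. It imposes the following constraints:
\begin{enumerate}
\item Euler flow and support constraints, which are linear given the guessed support.
\item The $\ZZ$-counter target equations $\sum_i \pi_\ZZ(\eff{\vec{x}_i}) = \pi_\ZZ(\vec{t}-\vec{s})$.
\item The $\NN$-counter boundary bookkeeping.
\item Nonnegativity of the counter at the guessed critical points.
\end{enumerate}
Since all numbers are binary but the ILP has polynomially many variables and constraints with polynomially bounded coefficients, feasibility is in \np.

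\textbf{Step 3: soundness and completeness.} Soundness must show that any ILP solution yields a genuine run. The intended argument is that, given connected support and the extremal ordering, we first go around the component, then pump positive-effect cycles in bulk to lift the counter high enough, then perform the remaining cycles, and finally drain with negative cycles. Nonnegativity at the constrained critical points then implies nonnegativity throughout. Completeness must show that every run induces such a certificate, by taking phases between consecutive record minima and zero-visits and merging segments with identical guessed data.

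\textbf{Main obstacle.} The hardest step, which I would attack first, is proving that polynomially many phases suffice. The $\ZZ$-counters forbid the usual 1-VASS shortcuts, since cycles cannot be freely removed or reordered without changing $\ZZ$-effects. My plan is to keep the Parikh image of the run unchanged and only reorder its cycle decomposition. Reordering is harmless for the $\ZZ$-counters because they impose no sign constraints; it only has to preserve connectivity and nonnegativity of the $\NN$-counter. I would then show by an exchange argument that phases with the same (entry state, exit state, support, zero/positive) signature can be concatenated into one. This bounds the number of distinct phases by a polynomial in $|Q|$ and $|T|$. Supports are exponentially many, but nesting of the phases forms a chain, so I would try to bound by $|T|$ the number of support changes along the chain.
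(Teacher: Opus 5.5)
\textbf{There is a genuine gap.} The two steps that carry all the difficulty are left as plans: soundness of the ILP and the polynomial bound on the number of phases. The first of these fails as formulated. Your constraints on per-phase Parikh vectors are Euler conditions, connected support, $\ZZ$-effect, $\NN$-bookkeeping, and nonnegativity at phase boundaries or at a designated pumping state. These cannot certify that the $\NN$-counter stays nonnegative. Whether a cycle can be entered at all depends on the dip along its prefix and on the concrete path that reaches its starting state, and neither is visible in a Parikh image.

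For instance, take transitions $p \to q$ with $\NN$-effect $-10$ and $q \to p$ with $\NN$-effect $+11$. Using each transition once satisfies all your constraints from $p(0)$ to $p(1)$, yet no run exists.

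Your completeness step has a related problem. Two segments with the same (entry, exit, support, zero/positive) signature cannot simply be concatenated inside a run. Only closed pieces such as $(q,0)\to(q,0)$ can be relocated, and their merged Parikh vector would again need a characterisation you do not have. The proposed bound via a ``chain of supports'' is unjustified. Note also that reordering never changes $\ZZ$-effects; only deleting cycles does. So the real task is to compress the run without deleting any cycle's contribution.

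\textbf{The missing idea.} The certificate should be an explicit linear path scheme in which every cycle is anchored at the state where it attains its $\NN$-minimum. The paper starts from a run, which is exponentially long by \cref{lem:1-vass-exponential}. It marks the first and last visit of every state and strips unmarked simple cycles until an underlying path of length $O(|Q|^2)$ remains.

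Rotate a stripped cycle to start at its minimum state $q$. If its $\NN$-effect is nonnegative, it is enabled from any nonnegative counter value. If its $\NN$-effect is negative, it is enabled whenever its end value is nonnegative. Hence nonnegative cycles can be reinserted right after the first visit of $q$, and negative ones right after the last visit, without lowering the counter anywhere on the underlying path.

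Then \cref{cor:caratheodory} replaces each group by polynomially many distinct simple cycles with exactly the same total effect on all counters. Their multiplicities are bounded via \cref{cor:ilp}, which yields \cref{lem:linear-form-runs}. The scheme is verified in polynomial time by checking the underlying path and only the first and last iteration of each cycle.

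Once the skeleton path and the cycles are explicit, your ILP idea does become viable. The multiplicities can be left as ILP variables, because the $\ZZ$-target equations and nonnegativity at the polynomially many check points are linear in them. This variant would even make the run-length bound of \cref{lem:1-vass-exponential} unnecessary.
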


Our main technical contribution of this section is~\cref{lem:linear-form-runs}.
Here we show that reachability in \zvass{1}{} is captured by \emph{linear path schemes}; namely reachability can always be witnessed by a run of the form $\rho_0 \sigma_1^* \rho_1 \cdots \rho_{\ell-1} \sigma_\ell^* \rho_\ell$.
Here, $\rho_0, \rho_1, \ldots, \rho_\ell$ are paths that connect disjoint cycles $\sigma_1, \ldots, \sigma_\ell$ (which each may be taken multiple times).
So long as the underlying paths and cycles are short, and the number of times each cycle is taken is at most exponential, then such a run can be specified with only polynomially many bits.
This is what allows us to prove that reachability in \zvass{1}{} is in \np.

Before detailing~\cref{lem:linear-form-runs}, we will prove a preliminary lemma: reachability in binary \zvass{1}{} can be witnessed by exponential length runs.

\begin{restatable}{lemma}{oneVassExponential} \label{lem:1-vass-exponential}
    Let $\Vv = (Q, T)$ be a \zvass{1}{k} and let $s, t \in Q$ be two states.
    If $\Config{s}{0} \xrightarrow{*} \Config{t}{0}$, then there exists a run $\Config{s}{0} \xrightarrow{\tau} \Config{t}{0}$ such that $\abs{\tau} \leq (nk)^{Ck}$, where $n = \abs{Q} + \sum_{(p, \vec{u}, q) \in T} \onenorm{\vec{u}}$ is the size of $\Vv$ encoded in unary and $C$ is a constant independent of $\Vv$, $s$, and $t$.
\end{restatable}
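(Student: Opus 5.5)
We need to show that a shortest run from $\Config{s}{0}$ to $\Config{t}{0}$ in a \zvass{1}{k} has length at most $(nk)^{Ck}$. The plan is to split the argument according to how large the unique $\NN$-counter becomes, and to reduce the reachability question to a small integer linear program whose solutions can be turned back into runs.

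\textbf{Step 1: decomposition by counter height.} Fix a threshold $B = (nk)^{c k}$ for a suitable constant $c$. Take any run $\Config{s}{0} \xrightarrow{\tau} \Config{t}{0}$ and cut it at the maximal segments on which the $\NN$-counter stays at least $B$. Outside these segments the $\NN$-counter is below $B$, so those parts are runs in a finite graph with $|Q|\cdot B$ nodes, while the $\ZZ$-counters are unconstrained.

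\textbf{Step 2: Parikh-image reformulation.} The $\ZZ$-counters only matter through the total effect. So reachability amounts to the existence of a path in an expanded graph whose Parikh image satisfies a linear equation system. This system has $k+1$ equations: the $\ZZ$-effects are $\vec{0}$ and the $\NN$-effect is $0$.

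The expanded graph is built as follows. Low parts live in $Q\times[0,B-1]$, which is unary and of polynomial-times-$B$ size. High parts are arbitrary paths of $\Vv$ that are only required to keep the counter nonnegative. Since they start at height about $B$, any path whose prefixes never drop more than $B$ below the start is fine.

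Euler-style arguments then say: a Parikh vector that satisfies flow conservation and connectivity is realizable by some path. For high segments, we additionally need the realization to stay above $0$. For this I would use the standard 1-VASS argument: reorder the cycle decomposition so that cycles with positive effect on the $\NN$-counter come first and negative ones last. The counter then dips below the start by at most $n\cdot|Q|$ plus the effect of a simple path, which is less than $B$.

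\textbf{Step 3: small solutions of the integer program.} The resulting system has variables for the transitions of the expanded graph (polynomially many in $n$ and $B$), only $k+1$ non-trivial equations, and coefficients bounded by $n$. By a Pottier / Eisenbrand–Shmonin style bound on small solutions, for systems with $m=k+1$ rows and coefficient bound $n$, there is a solution whose support size and entries are bounded by $(nm)^{O(m)}$. More precisely, any nonnegative integer solution decomposes as a small base solution plus minimal periods, each of norm at most $(1+ n)^{O(m)}$.

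To preserve connectivity and the reordering argument, I would keep one copy of each strongly connected component's spanning structure. This is handled by first guessing the support: there are polynomially many components, and since an arbitrary support is fine, we only need to retain the cycles used.

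\textbf{Step 4: bound and main obstacle.} The total length is the 1-norm of the solution, which is at most $\text{poly}(n,B)\cdot (nk)^{O(k)} = (nk)^{Ck}$.

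I expect the main obstacle to be Step 2/3 interaction. When we shrink a Parikh vector, some cycles may be removed, which can disconnect the support or remove the positive cycles that lift the counter to height $B$.

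I would handle this with a careful decomposition: the solution equals a base vector plus a sum of cycle vectors, each of small norm. Removing a cycle vector preserves flow and nonnegativity of counts. Connectivity is kept by never removing the last occurrence of a transition in the support, which costs at most an additive polynomial term. The counter-nonnegativity is re-established by the reordering argument with the threshold $B$ chosen larger than the dip bound.
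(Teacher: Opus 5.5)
\paragraph{Gap in Step 2: high excursions.} There is a genuine gap here, and it sits at the hard core of the lemma. Your expanded graph stores the $\NN$-counter in the control state on the low region $Q\times[0,B-1]$ but not on the high region. The only equation linking the two regions is the single global constraint ``total $\NN$-effect $=0$''. For the encoding to be sound, each individual high excursion that leaves $(p,h)$ and re-enters the low region at $(q,h')$ must have $\NN$-effect exactly matching $h'-h$, up to the boundary transitions. A Parikh vector satisfying only the global equation can redistribute high cycles among excursions arbitrarily; in particular, so can the shrunken vector produced in Step 3. Its Euler realisation may then return from one excursion at a real height different from the encoded $h'$. The real counter then disagrees with the state encoding and can go negative inside the low region, so the realisation need not be a run.

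Your reordering/dip argument controls one excursion in isolation but says nothing about this cross-excursion consistency. You also have no bound on the number of maximal high segments that would let you write one equation per excursion: with $\ZZ$-counters present you cannot cut between repeated (state, height) pairs, since the $\ZZ$-values differ. Making this work amounts to reproving the polynomial semilinear description of Parikh images of one-counter automata. That is exactly what the paper imports instead. It encodes each $\ZZ$-update by letters $a_i,b_i$ of an OCA, first making every transition touch some $\ZZ$-counter so that run length is at most word length. It then takes the semilinear representation from~{\cite[Lemma 15]{AtigCHKSZ16}}, intersects it with $Z=\{\vec v:\vec v[a_i]=\vec v[b_i]\}$, and uses~{\cite[Theorem 6]{ChistikovH16}} to obtain a point of norm $(nk)^{O(k)}$.

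\paragraph{Step 3: miscounted equations.} The system also contains a Kirchhoff equation for every node of the expanded graph, i.e.\ about $\abs{Q}B$ rows, not $k+1$. Pottier/Borosh--Treybig-type bounds are exponential in the number of rows, so you must first pass to a simple path plus simple cycles and impose only the $k+1$ effect equations on cycle variables. Then, however, the coefficients are effects of simple cycles of the expanded graph, which can be as large as $n\abs{Q}B$. With $B=(nk)^{ck}$ this yields $(k\,n\abs{Q}B)^{O(k)}=(nk)^{O(k^2)}$, not $(nk)^{O(k)}$. Since your dip argument only needs a polynomial threshold, this part is repairable by choosing $B$ polynomial, but the consistency problem above remains.
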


We will now sketch the ideas required to prove~\cref{lem:1-vass-exponential} (see~\cref{app:dimension-one} for a full proof).
We depend on~{\cite[Lemma 15]{AtigCHKSZ16}} as a core tool.
Roughly speaking,~{\cite[Lemma 15]{AtigCHKSZ16}} states that there is a polynomial semilinear description of the Parikh image of the language recognised by a given one-counter automaton (which is encoded in unary).
A One-Counter Automaton (OCA) is a finite automaton with a single $\NN$-counter which can be incremented, decremented, and zero-tested. 
The transitions are labelled by letters from an alphabet $\Sigma$.
A word $w \in \Sigma^*$ is accepted by an OCA if there exists a run from the initial state (with zero on the counter) to a final state (also with zero on the counter).
The Parikh image is a function $\psi : \Sigma^* \to \NN^{\Sigma}$ that maps a word to the vector whose entries contain the number of times each letter occurs in the word.
The function is also defined over languages, $\psi(L) \coloneqq \set{\psi(w) : w \in L}$.

More precisely,~{\cite[Lemma 15]{AtigCHKSZ16}} states that the Parikh image of the language of an $n$-state OCA $\Aa$, $\psi(L(\Aa))$, can be expressed as the union of linear sets such that 
\begin{enumerate}[(i)]
    \item the number of linear sets in the union is at most $C'n^{20}$, 
    \item the norm of the base vectors does not exceed $C'n^{20}$,
    \item the number of periods in each linear set is at most $C'n^{20}$, and
    \item the norm of the period vectors does not exceed $C'n^{20}$.
\end{enumerate}
Here, $C'$ is some fixed constant that is independent of $\Aa$.

In order to use~{\cite[Lemma 15]{AtigCHKSZ16}}, we need to convert a \zvass{1}{} $\Vv$ into an OCA $\Aa$ and be able to use $\psi(L(\Aa))$ to identify runs from $\Config{s}{0}$ to $\Config{t}{0}$ in $\Vv$.
For this, we use the one $\NN$-counter of $\Aa$ to track the value of the $\NN$-counter in $\Vv$ and we will use the letters that $\Aa$ reads to track the updates that the $\ZZ$-counters receive in $\Vv$.
Precisely, suppose that $\Vv$ is a \zvass{1}{k}, then $\Sigma = \set{a_1, b_1, \ldots, a_k, b_k}$.
For each $i$, the letter `$a_i$' will be read whenever the $i$-th $\ZZ$-counter is incremented (e.g.\ if the $i$-th $\ZZ$-counter is increased by five, then five `$a_i$'s are read) and the letter `$b_i$' will be read whenever the $i$-th $\ZZ$-counter is decremented. 
The number of states in $\Aa$ is exactly the size of $\Vv$ encoded in unary.

Now, given that the $i$-th $\ZZ$-counter initially begins with value~$0$, and the number of `$a_i$'s read and the number of `$b_i$'s read are equal, then the value of the $i$-th $\ZZ$-counter is 0.
Recall that all initial counter values are 0 and all target counter values are zero.
Our goal, therefore, is to determine whether there is a word $w$ accepted by $\Aa$ such that, for every $i$, $(\psi(w))[a_i] = (\psi(w))[b_i]$.

To identify words $w$ such that $w \in L(\Aa)$ and, for every $i$, $(\psi(w))[a_i] =  (\psi(w))[b_i]$, we shall intersect $\psi(L(\Aa))$ with the set $Z \coloneqq \set{\vec{v} \in \NN^\Sigma : \vec{v}[a_i] = \vec{v}[b_i], \text{ for every } i}$.
Fortunately, this set is semilinear.
We can therefore use the fact that, if two semilinear sets intersect, then their intersection contains a ``small'' point.
This fact follows from~{\cite[Theorem 6]{ChistikovH16}} in which the authors provide multi-parameter analysis of the size of the semilinear description of the intersection of two semilinear sets.
More precisely, for our purposes, if $\psi(L(\Aa)) \cap Z$ is not empty, then there exists $\vec{v} \in \psi(L(\Aa)) \cap Z$ whose entries are bounded above by a polynomial in $n$ to the power of $\Oh(\abs{\Sigma})$ (which is some constant multiple of $k$).

Overall, we therefore obtain a bound, that is polynomial in $n$ and exponential in $k$, on the length of a word $w$ such that $w \in L(\Aa)$ and, for every $i$, $(\psi(w))[a_i] = (\psi(w))[b_i]$.
Lastly, we use $w$ to construct a run from $\Config{s}{0}$ to $\Config{t}{0}$ in $\Vv$ to finally prove that reachability in \hbox{\zvass{1}{}} can be witnessed by an exponential length run (\cref{lem:1-vass-exponential}).

We remark that~\cref{lem:1-vass-exponential} suffices to conclude that, for every fixed $k$, reachability in unary \zvass{1}{k} is in \nl.
A short proof sketch of the following corollary can be found in~\cref{app:dimension-one}.

\begin{restatable}{corollary}{unaryNL} \label{cor:unary-NL}
    For every $k \in \NN$, reachability in unary \zvass{1}{k} is in \nl.
\end{restatable}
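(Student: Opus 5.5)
To prove \cref{cor:unary-NL}, I plan to guess a run of length at most $(nk)^{Ck}$ transition by transition, storing only the current configuration. Since $k$ is fixed, this bound is a polynomial in $n$ whose degree depends only on $k$. So a counter for the run length and all counter values can be kept in $\Oh(k\log n)$ bits, which is logarithmic space.

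The first step handles arbitrary initial and target configurations $\Config{s}{u}$ and $\Config{t}{v}$, since \cref{lem:1-vass-exponential} only covers runs from zero to zero. I will build a modified \zvass{1}{k} $\Vv'$ that is still unary and of polynomial size. It gets a fresh initial state $s'$ with a chain of unit-effect transitions into $s$ that produces $\vec{u}$; in unary this costs $\onenorm{\vec{u}}$ transitions, which is polynomial in the input size. It also gets a fresh final state $t'$ with a chain out of $t$ that subtracts $\vec{v}$.

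Here I must check that the chain producing $\vec{u}$ never makes the $\NN$-counter negative. This holds if the $\NN$-coordinate is only incremented, and negative $\ZZ$-values cause no issue. For the target chain, I decrement the $\NN$-counter by $\vec{v}[1]$ one unit at a time. Any run of $\Vv'$ from $s'(\vec 0)$ to $t'(\vec 0)$ must pass through the chains, which forces the run of $\Vv$ to go from $\Config{s}{u}$ to $\Config{t}{v}$. Conversely, every such run of $\Vv$ extends to one of $\Vv'$. If the input is instead given in binary, the counter values are exponential and this approach does not give \nl; the corollary is about unary, so that is fine.

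Next, I apply \cref{lem:1-vass-exponential} to $\Vv'$, whose unary size $n'$ is polynomial in the input size. This gives a witnessing run of length $L \le (n'k)^{Ck}$, which is polynomial for fixed $k$. Along this run every counter has absolute value at most $L\cdot\max\onenorm{\text{effect}} \le L n'$, which is also polynomial.

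The NL algorithm then keeps the current state, a step counter up to $L$, and the $k+1$ counter values, each in $\Oh(\log(Ln'))=\Oh(k\log n')$ bits. At each step it nondeterministically guesses a transition from the current state, checks that the $\NN$-counter stays nonnegative, and updates the values. It accepts when it reaches $t'(\vec 0)$ within $L$ steps.

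The only real subtlety is the reduction to zero configurations in the first step, in particular making sure the gadget respects nonnegativity. Everything else follows directly from the lemma.
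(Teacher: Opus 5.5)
Your proposal is correct and follows the paper's argument. For fixed $k$, \cref{lem:1-vass-exponential} gives polynomially long witnessing runs, which an \nl machine guesses on the fly. Your source/target chains are essentially the modification the paper uses in the proof of \cref{lem:linear-form-runs} to reduce to zero endpoints; the paper leaves this step implicit in the corollary.

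The only technicality is $k=0$, where the lemma's bound $(nk)^{Ck}$ degenerates. As the footnote to \cref{rem:1-VASS-reachability} suggests, adding one always-zero $\ZZ$-counter handles it, and so does citing the known \nl bound for unary 1-VASS.
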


As outlined earlier, we intend to prove that reachability in binary \zvass{1}{} is in \np, even when the number of $\ZZ$-counters is not fixed.
We require exponential runs (\cref{lem:1-vass-exponential}) in order to prove there are polynomial size linear path schemes (\cref{lem:linear-form-runs}).

\begin{restatable}{lemma}{linearFormRuns}
    \label{lem:linear-form-runs}
    Let $\Vv = (Q, T)$ be a \zvass{1}{k}, let $s(\vec{x}), t(\vec{y}) \in Q \times (\NN \times \ZZ^k)$ be two configurations, and let $M = \max\set{\infnorm{\vec{u}} : (p, \vec{u}, q) \in T}$.
    If $s(\vec{x}) \xrightarrow{*}_\Vv t(\vec{y})$, then there exists a run $s(\vec{x}) \xrightarrow{\rho}_\Vv t(\vec{y})$ and a decomposition
    \begin{equation}\label{eq:lps}
        \rho = \rho_0 \sigma_1^{x_1} \rho_1 \cdots \rho_{\ell-1} \sigma_\ell^{x_\ell} \rho_\ell
    \end{equation}
    that satisfies:
    \begin{enumerate}[(1)]  
        \item $\ell \leq C_1k^2\abs{Q}^2\cdot\log(M)$; 
        \item $\rho_0 \rho_1 \cdots \rho_\ell$ is a path of length at most $C_2\abs{Q}^2$;
        \item $x_1, \ldots, x_\ell < (nk)^{C_3k}$; and
        \item $\sigma_1, \ldots, \sigma_\ell$ are simple cycles.
    \end{enumerate}
    Here, $n = \abs{Q} + (\sum_{(p, \vec{u}, q) \in T} \onenorm{\vec{u}}) + \onenorm{\vec{x}} + \onenorm{\vec{y}}$ denotes the size of the instance of reachability encoded in unary and $C_1$, $C_2$, and $C_3$ are constants independent of $\Vv$, $s(
    \vec{x})$, and $t(\vec{y})$.
\end{restatable}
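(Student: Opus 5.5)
We start from a run of exponential length and compress it into a linear path scheme. First, we reduce to the setting of \cref{lem:1-vass-exponential}. We add a fresh initial state that loads $\vec{x}$ through a chain of unit-effect transitions, and a fresh final state that unloads $\vec{y}$ in the same way. This gives a \zvass{1}{k} whose unary size is $O(n)$. \cref{lem:1-vass-exponential} then yields a run $\tau$ with $\abs{\tau}\le (nk)^{Ck}$. Stripping the gadgets gives a run $s(\vec{x})\xrightarrow{\tau}t(\vec{y})$ in $\Vv$ of at most this length. The rest of the argument rearranges $\tau$ without lengthening it, so every cycle multiplicity stays below $(nk)^{C_3k}$, giving (3).

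Next, we decompose $\tau$ in the standard way. We repeatedly extract simple cycles from the path until what remains is a path $\pi$ through which all extracted cycles can be reinserted, with the support of $\pi$ visiting each state at most a bounded number of times; this gives $\abs{\pi}=O(\abs{Q}^2)$, hence (2). The $\ZZ$-counters care only about the total effect, so the multiset of simple cycles (with multiplicities) together with $\pi$ determines the final $\ZZ$-values. The $\NN$-counter constraint is the only ordering constraint. We group cycles by their $\NN$-effect: positive, zero, and negative. The natural reordering executes positive cycles as early as possible, in an ascending phase, and negative cycles as late as possible, in a descending phase, taking copies of each simple cycle consecutively as $\sigma_i^{x_i}$. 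This is the 1-VASS argument of~\cite{HaaseKOW09}: if the minimal-drop conditions of a cycle are met at one point of $\pi$ where its state occurs, we pump there. To make sure each cycle can be attached at some occurrence of its state in $\pi$ while keeping the counter nonnegative, we may need to enlarge $\pi$ to visit states in a suitable order. This is where the $\abs{Q}^2$ bound is used.

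The main obstacle is bounding the \emph{number of distinct cycles} $\ell$ by $C_1k^2\abs{Q}^2\log M$. Since there are exponentially many simple cycles, we need a Carathéodory-type argument. For each of the $O(\abs{Q}^2)$ attachment positions in $\pi$ and each sign class, consider the vectors $(\eff{\sigma})\in\ZZ^{1+k}$ of the cycles used there, weighted by their multiplicities. By an integer Carathéodory bound (Eisenbrand--Shmonin), a vector that is a nonnegative integer combination of vectors of max-norm at most $M$ in dimension $k+1$ is also such a combination using at most $O(k\log(kM))$ distinct generators, and this support can be chosen among the original ones. The substitution must keep the total effect and keep each used cycle connected to its attachment point. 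It must also not increase multiplicities beyond $(nk)^{Ck}$, and we then re-bound the multiplicities via a small-solution argument. Summing over positions gives $\ell = O(\abs{Q}^2\cdot k\log(kM))$, and with the crude bound $\log k\le k$ this is at most $C_1k^2\abs{Q}^2\log M$. I expect the delicate part to be proving that the Carathéodory replacement preserves nonnegativity of the $\NN$-counter. This is why positive and negative cycles are handled separately, in the ascending and descending phases. There the nonnegativity only needs to be checked at the boundary of each block, using the fact that within a block of same-signed cycles the minimum is attained at its start or end, up to an additive $\abs{Q}M$ slack. We absorb that slack by keeping one extra copy of each cycle from the original run.
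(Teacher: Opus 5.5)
Your skeleton matches the paper's: an exponential run from \cref{lem:1-vass-exponential}, simple-cycle extraction down to a short path, Eisenbrand--Shmonin per group, then a small-solution bound on multiplicities. What you never supply is the step that makes the lemma true: reinserting the extracted cycles into a path of length $O(\abs{Q}^2)$ without the $\NN$-counter going negative. ``Enlarging $\pi$ to visit states in a suitable order'' and ``absorbing the $\abs{Q}M$ slack by keeping one extra copy of each cycle'' are not arguments. After plain nested cycle removal, the states of an extracted cycle need not occur in $\pi$ at all, let alone before (positive cycles) or after (negative cycles) the cycle's original position. A copy kept elsewhere does not raise the counter at the attachment point either: a positive cycle that, as based, begins with a decrement is still disabled if attached where the counter is $0$.

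The paper's two missing ingredients are these.
(a) Mark the transitions entering each state for the first and for the last time (at most $2\abs{Q}$ of them), and extract only simple cycles avoiding marked transitions. By pigeonhole, the first $\abs{Q}(2\abs{Q}+1)$ transitions form $2\abs{Q}+1$ blocks of length $\abs{Q}$. Each block contains a simple cycle and at least one block is free of marks, so extraction continues until $\abs{\pi}<\abs{Q}(2\abs{Q}+1)$. Meanwhile every state of an extracted cycle is still visited by $\pi$ no later than that cycle's start and again after its end.
(b) Rotate each cycle to start at the state where its $\NN$-prefix sum is minimal. Then a cycle with nonnegative $\NN$-effect never drops below its starting value, and a negative one never drops below its final value.

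Group cycles by (minimum state, sign). Insert the nonnegative group right after the first visit of that state and the negative group right after its last visit. This only raises the counter at every point of $\pi$ compared with $\tau$. Moreover, replacing a group by cycles from that group with the same total effect stays valid. This is exactly what licenses the Carath\'eodory substitution, with no slack to absorb.

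Your count for (1) is also off. Simple cycles have effects of max-norm up to $\abs{Q}M$, not $M$, so Eisenbrand--Shmonin gives $O(k\log(kM\abs{Q}))$ generators per group. Over $O(\abs{Q}^2)$ attachment positions this yields $\ell=O(k\abs{Q}^2\log(kM\abs{Q}))$. That does not give $\ell\le C_1k^2\abs{Q}^2\log M$: take $k=1$, $M=2$ and let $\abs{Q}$ grow. Grouping per state as above gives only $2\abs{Q}$ groups, and the spare factor $\abs{Q}$ absorbs the $\log\abs{Q}$ term.

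Finally, (3) does not follow from ``rearranging without lengthening'', since the Carath\'eodory step changes multiplicities. It follows from a Borosh--Treybig bound using $\infnorm{\vec{s}}\le M\abs{\tau}$, which you do gesture at.
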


We will now outline the proof of~\cref{lem:linear-form-runs} (see~\cref{app:dimension-one} for a full proof).
One of the main ideas is to perform a certain cycle replacement technique.
Cycle replacement is first used, though not named, by Rackoff to prove that unboundedness in VASS is witnessed by doubly-exponential length runs~{\cite[Section 4]{Rackoff78}}.
For our purposes, the idea is to carefully remove short cycles from a run that witnesses reachability and group them based on which state the $\NN$-counter attains its least value and whether or not the cycle has nonnegative effect on the $\NN$-counter.

Once the cycles are isolated and grouped, we use Eisenbrand and Shmonin's Carath\'{e}odory style bounds for integer cones~\cite{EisenbrandS06} to identify a polynomial collection of cycles whose effects can be combined to obtain the same effect as the original group of cycles.
Precisely, we will use the following corollary.
In the following statement, the integer cone of a finite set $X \subset \ZZ^d$ is denoted using $X^*$. 

\begin{corollary}[c.f.~{\cite[Theorem 1]{EisenbrandS06}}]
    \label{cor:caratheodory}
    Let $X \subset \ZZ^k$ be a finite set, let $\vec{s} \in X^*$, and let $M = \max\set{\infnorm{\vec{x}} : \vec{x} \in X}$.
    There exists a subset $Y \sset X$ such that $\vec{s} \in Y^*$ and $\abs{Y} \leq 2k\log(4kM)$.
\end{corollary}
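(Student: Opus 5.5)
The statement is essentially Theorem~1 of Eisenbrand and Shmonin~\cite{EisenbrandS06} specialised to our notation, so the plan is to reproduce their pigeonhole argument rather than cite it as a black box. Write $N$ for the size of a candidate support. Since $\vec{s} \in X^*$, there is a subset $Y \sset X$ and positive integers $\lambda_{\vec{y}}$ with $\vec{s} = \sum_{\vec{y} \in Y} \lambda_{\vec{y}} \vec{y}$. Among all such representations, fix one in which $N = \abs{Y}$ is \emph{minimal}. The goal is to show $N \leq 2k\log(4kM)$, with $\log$ taken base~2.

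The key step is a subset-sum pigeonhole argument. Every subset $A \sset Y$ has sum $\sum_{\vec{y}\in A}\vec{y}$ lying in the box $[-NM, NM]^k$, which contains at most $(2NM+1)^k$ integer points. If $2^N > (2NM+1)^k$, then two distinct subsets $A \neq B$ have equal sums. Removing $A \cap B$ from both keeps the sums equal, so we may assume $A$ and $B$ are disjoint. Swapping the names of $A$ and $B$ if necessary, we may also assume $B \neq \emptyset$. This gives the relation $\sum_{\vec{y}\in A}\vec{y} - \sum_{\vec{y}\in B}\vec{y} = \vec{0}$.

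Now let $\mu = \min_{\vec{y}\in B}\lambda_{\vec{y}} \geq 1$. Define new coefficients by replacing $\lambda_{\vec{y}}$ with $\lambda_{\vec{y}}+\mu$ for $\vec{y}\in A$ and with $\lambda_{\vec{y}}-\mu$ for $\vec{y}\in B$, leaving the others unchanged. By the relation above, these coefficients still give a nonnegative integer representation of $\vec{s}$. At least one coefficient in $B$ becomes zero, so the support shrinks strictly, contradicting the minimality of $N$. Hence $2^N \leq (2NM+1)^k$, that is, $N \leq k\log(2NM+1)$.

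The main obstacle is the purely arithmetic step of turning this implicit inequality into the explicit bound $N \leq 2k\log(4kM)$. We may assume $M \geq 1$, since otherwise $X \subseteq \set{\vec{0}}$ and the claim is trivial. The plan is to argue by contradiction: suppose $N > 2k\log(4kM)$. Because $N/\log(2NM+1)$ is increasing in $N$ for $N \geq 1$, it suffices to check the boundary value $N_0 = 2k\log(4kM)$. There, $2N_0M+1 \leq 4kM\log(4kM)+1 \leq (4kM)^2$, using $\log x + 1 \leq x$ for $x \geq 1$. Therefore $k\log(2N_0M+1) \leq 2k\log(4kM) = N_0$, which yields $2^N > (2NM+1)^k$ for every $N > N_0$, the desired contradiction. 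I expect only the care with constants and monotonicity to need attention; the combinatorial core is the standard exchange argument above.
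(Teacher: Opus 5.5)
Your proposal is correct and is essentially the Eisenbrand--Shmonin argument (minimal support, subset-sum pigeonhole, exchange along a zero relation) that the paper relies on by citing~\cite[Theorem 1]{EisenbrandS06} without giving its own proof. One small slip: with $\log = \log_2$, the inequality $\log x + 1 \leq x$ fails on $(1,2)$ and holds only for $x \geq 2$, but you apply it at $x = 4kM \geq 4$, so the bound still goes through.
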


Next, our goal is to reconstruct a run from $s(\vec{x})$ to $t(\vec{y})$ using the smaller collections of cycles.
To properly reconstruct a run, we must ensure that the $\NN$-counter remains nonnegative.
If a cycle has nonnegative effect on the $\NN$-counter, then we wish to take this cycle early in the run; otherwise if a cycle has negative effect on the $\NN$-counter, then we wish to take this cycle late in the run.
To achieve this, when removing cycles from the original run, we take care to not remove the first time or last time each state is visited.
This then allows us to insert nonnegative and negative cycles at the earliest and latest possible moments in the run, respectively.

We will use~\cref{lem:1-vass-exponential} to bound the length of the run that we perform cycle replacement on.
This means the total effect of cycles removed and replaced is also exponentially bounded.
We use this bound, along with sharp bounds on the size of solutions to systems of integer linear equations, to argue that $x_1, \ldots, x_\ell$ are exponentially bounded.
This, in turn, will be be used to prove that polynomially many bits suffice for specifying runs given by~\cref{lem:1-vass-exponential}. 

Now, we provide a sketch of the proof of~\cref{thm:1-zvass-np} which follows from \cref{lem:linear-form-runs}; a more detailed proof can be found in~\cref{app:dimension-one}.

\begin{proof}[Proof sketch of~\cref{thm:1-zvass-np}]
    Observe that the path $\rho$ produced by~\cref{lem:linear-form-runs}, underlying the run $s(\vec{x}) \xrightarrow{\rho}_\Vv t(\vec{y})$, can be described by listing: the underlying paths $\rho_0, \rho_1, \ldots, \rho_\ell$; the intermediate simple cycles $\sigma_1, \ldots, \sigma_\ell$; and the number of times each cycle is taken $x_1, \ldots, x_\ell$.
    Given properties (1) -- (4) of~\cref{lem:linear-form-runs}, it is clear that this description has polynomial size.
    In particular, we note that $x_1, \ldots, x_\ell$ can be written using binary encoding and property (3) tells us that $\log(x_1), \ldots, \log(x_\ell) \leq C_3k\cdot\log(nk) \leq C_3k^2\cdot\binarysize{\Vv}$.
    
    It is already known that the validity of linear path scheme runs, like $s(\vec{x}) \xrightarrow{\rho}_\Vv t(\vec{y})$, can be checked in polynomial time, see for example~{\cite[Section V, Lemma 14]{BlondinFGHM15}}.
    In short, one need not check \emph{every} configuration along the run; it suffices to check nonnegativity of the $\NN$-counter along the underlying paths and throughout the first and last iterations of each cycle $\sigma_i$.
    Therefore, only $3\ell$ configurations need to be computed and checked.
\end{proof}

With~\cref{thm:1-zvass-np} in hand, we can conclude that reachability in \zvass{1}{} is \class{NP}-complete, regardless of whether unary or binary encoding is used.
Hardness comes from either of the following facts.
Reachability in binary 1-VASS (without additional $\ZZ$-counters) is already \class{NP}-hard~{\cite[Theorem 3.5]{RosierY86}}~{\cite[Proposition 1]{HaaseKOW09}}. Reachability in unary integer VASS (without an $\NN$-counter) is already \class{NP}-hard (for details see~\cref{pro:unary-zvass} in~\cref{app:dimension-one}).

\begin{corollary}
    Reachability in binary \zvass{1}{} and reachability in unary \zvass{1}{} are both \np-complete.
\end{corollary}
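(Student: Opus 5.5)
The plan is to combine \cref{thm:1-zvass-np} with the two hardness facts quoted just before the statement. There are four implications to verify: membership for binary encoding, membership for unary encoding, and hardness for each encoding.

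\textbf{Upper bounds.} Membership of binary \zvass{1}{} in \np{} is exactly \cref{thm:1-zvass-np}. For unary encoding, take an instance given in unary. Rewriting every number in binary is a polynomial-time transformation, and it does not increase the size, because $\log_2(\onenorm{\vec{v}}+1)+1 \le \onenorm{\vec{v}}+1$. The reachability question is unchanged, so the \np{} algorithm for binary \zvass{1}{} applies directly. Equivalently, unary encoding is a special case of binary encoding up to a polynomial blow-up, so \np{} membership transfers.

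\textbf{Lower bounds.} For binary encoding, a binary $1$-VASS is literally a binary \zvass{1}{0}, which is a special case of \zvass{1}{}. Reachability for binary $1$-VASS is \np-hard by~{\cite[Theorem 3.5]{RosierY86}} and~{\cite[Proposition 1]{HaaseKOW09}}, so \np-hardness transfers via the identity reduction.

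For unary encoding I would reduce from reachability in unary integer VASS with a variable number of $\ZZ$-counters, which is \np-hard by \cref{pro:unary-zvass}. Given an integer $k$-VASS with configurations $s(\vec{u})$ and $t(\vec{v})$, I add a single $\NN$-counter that no transition touches. This means prepending a $0$ coordinate to every transition vector and to both configurations. Since the new counter is constantly $0$, every transition that was fireable remains fireable, and runs correspond one-to-one. The resulting \zvass{1}{k} has size linear in the original, and its encoding stays unary.

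\textbf{Main difficulty.} None of these steps is difficult; the substance lies in \cref{thm:1-zvass-np} and \cref{pro:unary-zvass}, which I take as given. The one point to check is that \cref{pro:unary-zvass} gives hardness when numbers are written in unary, not just in binary. Otherwise the unary lower bound would not follow from it, and I would instead need a direct reduction from, for example, unary subset sum or exact cover, encoding each element as a $\ZZ$-counter.
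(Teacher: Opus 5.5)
Your proposal is correct and is essentially the paper's argument: \class{NP} membership comes from \cref{thm:1-zvass-np}, with unary instances rewritten in binary at polynomial cost; binary hardness comes from binary $1$-VASS; and unary hardness comes from \cref{pro:unary-zvass}, padded with an untouched $\NN$-counter. The point you flag is settled in your favour, since that proposition is stated for unary integer VASS and its subset-sum construction uses only transition entries in $\{-2,\dots,1\}$ together with polynomially many counters.
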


\begin{remark}[Reachability in binary 1-VASS is in \class{NP}] \label{rem:1-VASS-reachability}
    The proof of~\cref{thm:1-zvass-np} (in particular, the proof of~\cref{lem:linear-form-runs}) provides an alternative and arguably simpler proof that reachability in binary 1-VASS is in \class{NP}.
    The original proof that reachability in binary 1-VASS is in \class{NP} used so called ``network flow reachability certificates''~\cite{HaaseKOW09}; here the authors had to carefully analyse these certificates to ensure that they pertained to valid runs.
    The techniques that we have developed to prove that reachability in \zvass{1}{} is in \class{NP} also work for 1-VASS (without additional $\ZZ$-counters).
    In fact, the proof of~\cref{lem:linear-form-runs} becomes simpler with the absence of $\ZZ$-counters\footnote{As it stands Lemmas~\ref{lem:1-vass-exponential} and~\ref{lem:linear-form-runs} currently do not hold for $k=0$ (for example, $\ell = 0$ would be required in condition (1) of~\cref{lem:linear-form-runs}).
    However, one can just add a single trivial $\ZZ$-counter which does not change value from $0$, and thus proceed with $k=1$.}.
    Our alternative proof of reachability in binary 1-VASS is in \class{NP} keeps the same core idea: we mark the first and last occurrences of states, remove simple cycles, use Carath\'{e}odory style bounds to replace the removed cycles with a smaller collection of cycles, and then reinsert these cycles back into a short underlying path.
\end{remark}

\section{Lower Bounds for \zvass{2}{} and \zvass{3}{}}
\label{sec:lower-bounds}
In this section, we will extensively use counter automata and certain results about them.

\paragraph*{Counter Automata}

A Counter Automaton (CA) is a VASS with an additional type of transition which can be fired only if particular counters are equal to zero. 
We call these transitions \emph{zero-testing} transitions. Formally, a $d$-counter automaton ($d$-CA) is a tuple $\mathcal{Z} = (Q, T, Z)$ where $(Q, T)$ is a $d$-VASS and $Z \subseteq Q \times [1,d] \times Q$ is the set of zero-testing transitions. 
The semantics of zero-testing transitions are defined as follows. 
Let $(p, i, q) \in Z$ be a zero-testing transition and let $\vec{v} \in \NN^d$. Then $p(\vec{v}) \xrightarrow{(p,i,q)} q(
\vec{v})$ is a run if $\vec{v}[i] = 0$, namely the transition $(p,i,q)$ can only be taken from a configuration $p(\vec{v})$ if the zero-tested $i$-th counter indeed has zero value. 

The reachability problem for a $d$-counter automaton, namely the question whether a given configuration is reachable from another given configuration, is undecidable already for two-counter automata~\cite{minsky}. 
Restricted versions of the reachability problem form a family of natural, computationally hard problems for various complexity classes. 

In \Cref{sec:2-VASS+Z}, the $2^{f(n)}$-length bounded reachability problem will be useful for us. Note, that it is slightly different from the more common problem in the literature, in which the  bound applies to the counter values instead of to the length of the run (see~\cite{fischer1968counter}). 

\begin{problem}[The $2^{f(n)}$-length bounded reachability problem for $d$-counter automata.]\hfill
    \begin{description}
    \item[\hspace{1.15em}Input:] A $d$-CA $\Aa$ with an initial state $q_I$ and final state $q_F$, and $n \in \NN$ encoded in unary.
    \item[\hspace{1.15em}Question:] {Does there exist a run of length $2^{f(n)}$ from $q_I(\vec{0})$ to $q_F(\vec{0})$ in $\Aa$?}
    \end{description}
\end{problem}

We say that a function is \emph{space-constructible} if there exists a Turing machine $\mathcal{M}$ which on input $1^n$ halts with output $1^{f(n)}$ that uses $f(n)$ space and $f(n) \geq \log(n)$. 
We claim that the following problem is $\spacecl(f(n))$-hard. 
The proof of~\cref{lem:reachability-CA} an be found in~\cref{app:lower-bounds}.

\begin{restatable}{lemma}{CA}\label{lem:reachability-CA}
The \hbox{$2^{f(n)}$-length} bounded reachability problem for three-counter automata is $\spacecl(f(n))$-hard.
\end{restatable}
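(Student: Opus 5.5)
We reduce from acceptance of a deterministic Turing machine $\mathcal{M}$ working in space $f(n)$, whose membership problem is $\spacecl(f(n))$-hard by definition. The plan has three stages: normalise $\mathcal{M}$, simulate it with a small counter automaton, and then pad the simulation to a run of exactly the prescribed length.

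\textbf{Normalising the machine.} Since $f$ is space-constructible, we may assume that $\mathcal{M}$ first marks $f(n)$ tape cells. We also assume $\mathcal{M}$ uses a binary tape alphabet. Because $\mathcal{M}$ is deterministic, it either halts within $2^{cf(n)}$ steps for some constant $c$ or loops. We add a clock that counts to $2^{cf(n)}$, which turns looping into rejection. Finally, we make $\mathcal{M}$ erase its tape and return its head before accepting.

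\textbf{Simulating with three counters.} We use the standard Minsky-style simulation of a space-bounded machine. The tape left of the head is encoded as a binary number in counter $1$, and the tape right of the head in counter $2$. Counter $3$ is an auxiliary counter used to multiply or divide by $2$ with the help of zero tests. Each step of $\mathcal{M}$ then costs $O(2^{f(n)})$ transitions of the CA, since the counter values are bounded by $2^{f(n)+1}$.

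\textbf{Padding to exact length.} The problem asks for a run of length exactly $2^{g(n)}$ for some fixed $g = O(f)$ (we rename $f$ accordingly). The total simulation length is at most $2^{cf(n)} \cdot O(2^{f(n)}) \le 2^{g(n)-1}$. The obstacle is that we must hit the length $2^{g(n)}$ exactly, while the CA itself cannot compute $2^{g(n)}$ from a unary $n$ without additional counters.

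\begin{itemize}
\item \textbf{First attempt: a padding gadget.} After accepting, with all counters zero, the CA enters a gadget that can waste any number of steps while keeping the counters at zero. For example, a self-loop with effect $+1$ followed by a self-loop with effect $-1$ and a final zero test wastes an arbitrary even number of steps. Odd lengths are handled by an optional extra idle step, realised as a zero test on a zero counter. This makes every length at least the simulation length achievable, and in particular $2^{g(n)}$. Moreover, a run reaching $q_F(\vec 0)$ exists only if the accepting branch was taken. Indeed, the gadget is entered only from the accept state, and the simulation is faithful because zero tests make each step deterministic.
\item \textbf{Second attempt, if the first fails.} If the definition instead requires a run of length at most $2^{f(n)}$, the same argument works and is even easier.
\end{itemize}

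\textbf{Remaining checks.} We must verify two things. First, the CA has size polynomial in $n$, including the initial $f(n)$-cell marking via space-constructibility, which we fold into $\mathcal{M}$. Second, the exponent $g(n)$ is computable within the reduction's resources; it is only needed implicitly, since the gadget absorbs the slack, so no explicit computation of $2^{g(n)}$ is required. A final subtlety is that the problem fixes the form $2^{f(n)}$. To make the exponent exactly $f$ rather than $g$, we start from a machine using space $f(n)/c'$ and reparametrise by linear tape compression. This compression works because $\spacecl(f(n))$ is closed under constant-factor space changes.
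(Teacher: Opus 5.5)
Your plan follows the paper's proof in all essentials. You reduce from a space-bounded Turing machine and simulate it by a three-counter automaton that keeps the two tape halves as numbers in a fixed base on two counters. The third counter, together with zero tests, does the multiplications and divisions. You then bound the run length by $2^{O(f(n))}$ and absorb the constant in the exponent. The differences are minor. The paper uses an arbitrary (possibly nondeterministic) machine, enforces the space bound with an end marker, encodes in base $3$, and bounds the \emph{shortest} accepting run by pigeonhole on automaton configurations; this implicitly reads the problem as ``length at most $2^{f(n)}$''. You use determinism instead (the clock and cell marking are not actually needed) and add a padding gadget, which correctly covers the ``length exactly $2^{f(n)}$'' reading as well.

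The one step that fails as you justify it is the last one. The paper's proof is loose in the same place: it sets $g(n)=f(n)/c$ while keeping a two-letter alphabet. Linear tape compression cannot lower the constant in the exponent of the run length, because that constant is governed by the number of machine configurations, and compression leaves this number unchanged. Compressing by a factor $c'$ replaces $\Gamma$ by $\Gamma^{c'}$. If you encode in base $|\Gamma|^{c'}$, the counters still reach $|\Gamma|^{f(n)}$. If you re-encode in binary, the space rises again by a factor of about $c'\log_2|\Gamma|$. So you cannot assume a binary alphabet and space $f(n)/c'$ at the same time. For a language whose machine has a large alphabet, you only get runs of length $2^{c_L f(n)}$, with $c_L$ depending on the language. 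The repair is cheap because the unary number $n$ is part of the instance. Map an input of length $n$ to the automaton together with $1^{n'}$, where $n'$ is polynomially bounded and satisfies $f(n')\geq c_L f(n)$. Such an $n'$ exists for all natural $f$; for instance $n'=c_L n$ works for the linear $f$ used to derive \pspace-hardness. Your padding gadget then still yields a run of length exactly $2^{f(n')}$.
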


In~\cref{sec:3-VASS+Z}, we use a version of the reachability problem for counter automata in which counter values are bounded by the \tower function. 
We say that a run of a $d$-CA is $b$-bounded if the sum of all the $d$ counters along this run never exceeds $b \in \NN$.  

\begin{problem}[The \tower-bounded reachability problem for $d$-CA.]\hfill
    \begin{description}
    \item[\hspace{1.15em}Input:] A $d$-CA $\Aa$ with an initial state $q_I$ and final state $q_F$, and $n \in \NN$ encoded in unary.
    \item[\hspace{1.15em}Question:] {Does there exist a $\tower(n)$-bounded run from $q_I(0^d)$ to $q_F(0^d)$ in $\Aa$? }
    \end{description}
\end{problem}

The \tower-bounded reachability problem for three-counter automata is well known to be \tower-complete (for example, see~{\cite[Theorem 7]{CzerwinskiO22}}). 
For the following theorem, one can use the famous \emph{G\"{o}del encoding} of vectors $(x, y, z)$ as the number $2^x \cdot 3^y \cdot 5^z$ and use the second counter as an ancillary counter to perform additive operations to $x$, $y$, and $z$.

\begin{theorem}\label{thm:2-CA-tower}
    The \tower-bounded reachability problem for 2-CA is \tower-complete.
\end{theorem}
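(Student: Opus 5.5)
The plan is to reduce from the \tower-bounded reachability problem for 3-CA, which is \tower-complete by~{\cite[Theorem 7]{CzerwinskiO22}}; membership in \tower\ is immediate, because a $\tower(n)$-bounded run of a 2-CA lives in a configuration graph with $|Q|\cdot(\tower(n)+1)^2$ vertices, and graph reachability on it takes time elementary in $\tower(n)$, which is still in $\Ff_3$. For hardness, given a 3-CA $\Aa$ and $n$, I will build a 2-CA $\Aa'$ that keeps $2^x3^y5^z$ in its first counter $c_1$ and uses $c_2$ as an ancillary counter that is zero between simulated steps. The initial configuration $q_I(0,0,0)$ is encoded as $c_1=1$. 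I will add a gadget from $q_I'(0,0)$ that increments $c_1$ once, and a final gadget that decrements $c_1$ from $1$ to $0$ at $q_F$.

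The key steps, in order, are the following.
\begin{enumerate}[(i)]
\item Incrementing $x$ is multiplication by $p=2$: repeatedly decrement $c_1$ by $1$ and increment $c_2$ by $p$ until $c_1=0$ (zero test), then transfer $c_2$ back to $c_1$ until $c_2=0$ (zero test).
\item Decrementing $x$ is division by $2$: repeatedly subtract $2$ from $c_1$ and add $1$ to $c_2$. The loop exits only through a zero test on $c_1$. Exiting on an exact zero certifies divisibility, hence $x>0$. If the divisor does not divide, the run gets stuck: the leftover is $1$, no further subtraction fires, and the zero test fails. Then transfer back as before.
\item Zero-testing $x$ means checking that $c_1$ is not divisible by $2$. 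Do trial division, moving $c_1$ into $c_2$ while tracking $c_1 \bmod 2$ in the finite control. Accept only if the residue is nonzero, then move $c_2$ back, restoring $c_1$.
\end{enumerate}
Each gadget is deterministic once started, up to the forced loop exits, so every run of $\Aa'$ between gadget boundaries faithfully simulates one step of $\Aa$, and conversely.

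For the bound, a $\tower(n)$-bounded run of $\Aa$ has $x+y+z\le\tower(n)$, so the encoded values are at most $5^{\tower(n)+1}$. Since $2^{\tower(n)}=\tower(n+1)$, we have $5^{\tower(n)+1}\le \tower(n+2)$ for $n\ge 2$ or so. The ancillary counter never exceeds $5\cdot c_1$, so the sum of counters in $\Aa'$ stays below $\tower(n+3)$. Conversely, any $\tower(n+3)$-bounded run of $\Aa'$ decodes to a run of $\Aa$, but the decoded counters are only bounded by $\log\tower(n+3)=\tower(n+2)$, not by $\tower(n)$. This converse direction is the main obstacle.

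To handle it, I will first normalise the source problem so that the bound is irrelevant. I will either instantiate the 3-CA hardness with instances whose counters provably stay below $\tower(n)$ along every run, which is the case in the standard Turing-machine-simulation hardness proofs, where the bounds are enforced by the construction itself. Alternatively, I will run the reduction with parameter $n'=n+3$ and use the fact that \tower-hardness is robust under such shifts: $\tower(n)$ and $\tower(n+3)$ differ by an elementary function, so one can reduce from a \tower-complete TM acceptance problem and pick the simulation size accordingly. Either way the output instance is $(\Aa', n+3)$, computable in polynomial time, which gives \tower-hardness.
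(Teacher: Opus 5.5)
Your construction is exactly the paper's, whose entire argument is the remark that one stores $2^x\cdot 3^y\cdot 5^z$ on one counter and uses the other as an ancillary counter; you go beyond it by arguing membership and by noticing that the encoding costs about two exponentials, so a $\tower(n+3)$-bounded run of $\Aa'$ only decodes to a $\tower(n+2)$-bounded run of $\Aa$, a point the paper's one-line proof passes over. Of your two remedies, only the first actually closes this gap: reduce from 3-CA instances in which every run from $q_I(0,0,0)$ is already $\tower(n)$-bounded (e.g.\ Turing-machine simulations whose space bound is enforced by construction, as in the paper's proof of \cref{lem:reachability-CA}), so that the bound in the produced 2-CA instance is non-binding; robustness of \tower\ under shifting $n$ does not by itself repair the converse direction.
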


\paragraph*{Counter Programs}
In Sections~\ref{sec:2-VASS+Z}~and~\ref{sec:3-VASS+Z}, we often present VASS as counter programs just like Esparza did in~{\cite[Section 7]{Esparza96}}; for more recent examples, see~\cite{CzerwinskiO21,CzerwinskiO22,Lasota22,ChistikovCMOSW24}. 
Counter programs are numbered lists of increment or decrement instructions operating on variables representing VASS counters.
We also allow for loop instructions, which can be naturally interpreted as cycles in VASS. 
We sometimes use other counter programs as macros inside a main counter programs. 
We also use for-loop instructions as a shorthand for a sequence of similar instructions.

\subsection{Reachability in unary \zvass{2}{} is \pspace-hard}
\label{sec:2-VASS+Z}

This section is devoted to proving the following theorem.

\begin{theorem} \label{thm:unary-2-zvass-psapce-hard}
    Reachability in unary \zvass{2}{} is \pspace-hard.
\end{theorem}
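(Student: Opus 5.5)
We reduce from the $2^{f(n)}$-length bounded reachability problem for three-counter automata with $f(n)=n$ (or any polynomial). By \cref{lem:reachability-CA} this problem is $\spacecl(n)$-hard, hence \pspace-hard under polynomial reductions after padding. Given a 3-CA $\Aa$ and $n$ in unary, we will build in polynomial time a unary \zvass{2}{k}, with $k$ polynomial in $n$ and $\size(\Aa)$. It will reach a designated target from a designated source iff $\Aa$ has a run of length exactly $2^n$ from $q_I(\vec 0)$ to $q_F(\vec 0)$.

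\textbf{Step 1: produce doubly-exponential values.} First we build a gadget that, starting from zero counters, ends with the two $\NN$-counters and some auxiliary $\ZZ$-counters holding prescribed huge values. The target is something like a triple $(B, cB, \ldots)$ with $B$ roughly $2^{2^{\mathrm{poly}(n)}}$, in the spirit of the doubly-exponential triple construction (\cref{lem:twoexptriple}). I would do this by repeated squaring or doubling, alternating between the two $\NN$-counters with transfer loops. Each doubling step is certified via $\ZZ$-counters that accumulate the loop effects. The final check that all $\ZZ$-counters are zero then forces every loop to have been iterated maximally, the usual ``weak multiplication becomes exact'' trick. Since we have polynomially many $\ZZ$-counters but only two $\NN$-counters, each stage needs its own fresh $\ZZ$ bookkeeping counters. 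This is what keeps the construction polynomial-size.

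\textbf{Step 2: simulate the 3-CA for $2^n$ steps.} Having a large value $B$ on an $\NN$-counter, we simulate $\Aa$. The three CA counters live in $\ZZ$-counters, or in one $\NN$-counter plus $\ZZ$-counters. Zero tests are simulated in the style of Lipton/Czerwiński–Orlikowski: we maintain an invariant like $a + x = B$ on a complementary pair, where $x$ is a CA counter. A zero test on $x$ is replaced by a loop that moves $B$ around and is accounted in a $\ZZ$-counter. The length bound $2^n$ matters here: the total amount of ``error budget'' consumed by $2^n$ zero tests must be covered by $B$, and a final equation on the $\ZZ$-counters (achievable only with zero errors) certifies correctness. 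The step counter for the $2^n$ bound is stored in a $\ZZ$-counter. It is decremented once per simulated step and checked to be zero at the end, with $2^n$ produced by $n$ binary bits set up beforehand in $\ZZ$-counters (doubling via one $\NN$-counter). Soundness is argued by showing that any run reaching the target yields the exact final equations, which force every simulated zero test to be faithful. Completeness is shown by directly following the CA run.

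\textbf{Main obstacle.} I expect Step 2 to be the hardest part: only two $\NN$-counters are available, and one of them is occupied by the large value $B$. Every genuine nonnegativity check in the simulated zero tests must therefore be funneled through a single remaining $\NN$-counter, while the $\ZZ$-counters record the rest. My first attempt would be to let the one free $\NN$-counter hold the currently tested CA counter temporarily (copied in and out using $\ZZ$-counters). Its decrement-by-$B$-style loops would then perform the test, with a global $\ZZ$ accumulator comparing the sum of all test amounts against $2^n\cdot B$. If this fails, I would fall back to encoding all three CA counters into one number (as in \cref{thm:2-CA-tower}) held on the free $\NN$-counter. Step 1 then has to deliver the values needed for that encoding.
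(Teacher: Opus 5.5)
Your Step~1 is acceptable only as a direct appeal to \cref{lem:twoexptriple}, since certifying each squaring is the whole content of that lemma. The real gap is Step~2: neither of your two plans gives a faithful simulation.

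In the primary plan, $B$ occupies one $\NN$-counter, so only one $\NN$-counter is free. The complement/triple accounting certifies a zero test only through the argument ``at most $B$ units can be shuttled, with equality iff the counter was $0$''. That argument needs the tested counter and its partner both to be nonnegative. So here it can protect and test only a single counter, not three. CA counters kept on $\ZZ$-counters are not kept nonnegative at decrements, and they cannot be zero-tested more than polynomially often.

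Your ``copy in and out'' does not help. A transfer loop from a $\ZZ$-counter into the free $\NN$-counter is unverified: it may stop early, even after zero iterations, and then the subsequent test passes vacuously. Certifying that the source was emptied is again a zero test on a $\ZZ$-counter. By \cref{cl:zero-tests} you have only polynomially many of those, while a $2^n$-step run needs exponentially many.

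Tellingly, a triple-style simulation of a run of length $2^n$ would only need $B\approx 2^n$. Your plan never uses the doubly-exponential magnitude from Step~1 in any essential way, which signals that the key idea is missing.

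The fallback (G\"odel-encode all three counters in one $\NN$-counter) points the right way, but as stated it also fails, for two reasons.

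First, weak multiplication on the encoding counter $x$ needs the other $\NN$-counter $\bar x$ as ancillary. With a $\ZZ$-ancillary the second loop can overshoot. So no large value can sit on an $\NN$-counter during the simulation; it must be parked on a $\ZZ$-counter and used only in a final comparison.

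Second, and crucially, increments become weak multiplications, which can only undershoot, while decrements become weak divisions, which can overshoot. Errors can therefore cancel, and no final equation certifies faithfulness.

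The paper's missing ingredient is to store $2^{x_1}3^{x_2}5^{x_3}7^{L}$ on $x$, where $L$ is the number of simulated steps. Then every step is a weak multiplication by a factor $>1$:
\begin{itemize}
\item $7B_i$ for an increment of counter $i$;
\item $7/B_i$ for a decrement;
\item $7$ for a zero test, realised as a non-divisibility check: guess $r\in[1,B_i-1]$, subtract $r$, weakly multiply by $7$ in blocks of $B_i$ through $\bar x$, then add back $7r$.
\end{itemize}
Hence $x+\bar x\le 2^{x_1}3^{x_2}5^{x_3}7^{L}$ holds throughout. Equality holds only if every step was exact and every decrement and zero test was legal, and once lost it is never regained.

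A final gadget keeps $z_1=2^{n}-L$ and $z_2=7^{2^{n}}$ on $\ZZ$-counters. It checks $x_1=x_2=x_3=0$ and $\bar x=0$ with polynomially many zero tests, weakly multiplies $x$ by $7$ another $z_1$ times, and tests $x=z_2$. This is why Step~1 must deliver exactly the pair $(2^{n},7^{2^{n}})$ required by \cref{lem:big-numbers-sufficient}, obtained from \cref{lem:twoexptriple} with $A=7$, rather than a triple with $B$ on an $\NN$-counter.
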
 

The proof follows from two observations (Lemmas~\ref{lem:twoexptriple}~and~\ref{lem:big-numbers-sufficient}) that are interesting in their own right.
Roughly speaking,~\cref{lem:twoexptriple} says that \zvass{2}{} are able to compute doubly-exponential numbers, and~\cref{lem:big-numbers-sufficient} says that this is the key ingredient that suffices for \pspace-hardness. 
However, we formulate both lemmas with greater generality; we believe these statements might be useful in contexts outside of this paper.

Before stating both lemmas, we shall formalise what it means for a \zvass{}{} to compute a number or a vector. 
We say that a \zvass{d}{k} $\Vv$ \emph{computes} a set of vectors $S \subseteq \ZZ^m$ from a vector $\vec{u} \in \NN^m$ if there are distinguished states $q_I$ and $q_F$ such that $S$ is the set of all vectors $\vec{v}$ such that there is a run from $q_I(\vec{u} \oplus 0^{d+k-m})$ to a configuration $q_F(\vec{v} \oplus 0^{d+k-m})$.
If $\Vv$ computes $S$ from $0^d$, then we simply say that $
\Vv$ computes $S$. 
Notice that we do not require the computed vectors $\vec{v}$ to be located only on $\NN$-counters or only on $\ZZ$-counters.

Observe that we can use $\ZZ$-counters to perform zero tests on the other counters.
Importantly, we also assume that the zero tests are fired in a fixed order and from fixed states of the \zvass{2}{}.
More precisely, in our case the considered \zvass{2}{} consist of a sequence of Strongly Connected Components (SCCs) and only zero-testing transitions occur between SCCs. 
Indeed, in order to zero test a counter, we maintain a copy on some auxiliary $\ZZ$-counter and at the moment the zero test is taken, we ``freeze'' (no longer update) the copy, and add the requirement that the copy has zero value in the target configuration of the reachability instance. 
Assuming that the considered \zvass{2}{} has polynomially many auxiliary $\ZZ$-counters, we can simulate polynomially many zero tests.

\begin{claim}\label{cl:zero-tests}
    One can assume, without loss of generality, that the considered \zvass{2}{} uses polynomially many zero tests.
\end{claim}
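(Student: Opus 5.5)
The plan is to show that allowing zero tests in the constructions of this section costs nothing. Consider a \zvass{2}{k} $\Vv$ enriched with zero-testing transitions that satisfies the structural restriction stated just before the claim. That is, $\Vv$ is a chain of SCCs $S_1, \ldots, S_m$, where every zero test is a transition from $S_j$ to $S_{j+1}$ and there are polynomially many of them. I will construct in polynomial time an ordinary \zvass{2}{k'} $\Vv'$, with $k' = k + m - 1$ and without zero tests, such that reachability between the designated configurations is preserved. Consequently, every hardness reduction in \cref{sec:2-VASS+Z} (and \cref{sec:3-VASS+Z}) may freely use such zero tests, and the claim follows.

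The construction works as follows. For the $j$-th zero test, which tests counter $i_j$ between $S_j$ and $S_{j+1}$, add a fresh $\ZZ$-counter $c_j$. Every transition inside $S_1, \ldots, S_j$, and every connecting transition before the $j$-th test, additionally updates $c_j$ by the same amount it updates counter $i_j$. All transitions from $S_{j+1}$ onwards leave $c_j$ unchanged, so $c_j$ is frozen once the test is passed. The initial value of $c_j$ is set to the initial value of counter $i_j$, and the target configuration requires $c_j = 0$. The zero-test transition itself becomes an ordinary transition with zero effect. Because the SCCs are linearly ordered and the phase of the run is determined by its current state, deciding which $c_j$ a given transition must update is a purely syntactic matter. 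The unary size grows at most by a factor of $m+1$, and the binary size by a polynomial factor, so the reduction is polynomial in both encodings.

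For correctness, note that every run of $\Vv'$ from the source to the target passes through the SCCs in order and traverses each connecting transition exactly once. By induction along the run, $c_j$ equals the value of counter $i_j$ for as long as the run is within $S_1, \ldots, S_j$. At the moment the $j$-th connecting transition fires, $c_j$ therefore holds exactly the tested value, and it stays at that value until the end of the run. The final constraint $c_j = 0$ thus holds if and only if the corresponding test would have succeeded. This gives a bijection between accepting runs of $\Vv$ and of $\Vv'$ on the original counters. The $\NN$-counters are unaffected, so nonnegativity is preserved in both directions.

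The main obstacle is the assumption that each test is fired exactly once and in a fixed order. If zero tests could sit inside cycles, a single frozen copy would not suffice, and the number of copies would be unbounded. I will therefore check that every gadget used later (the multiplication and amplifier macros) is built as a sequential composition in which loops contain no tests, so that unfolding yields the required chain of SCCs. Where a macro is reused several times, it has to be instantiated separately each time, which remains polynomial because the nesting depth of the composition is bounded.
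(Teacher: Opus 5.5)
Your argument is the paper's own: one fresh $\ZZ$-counter per zero test mirrors the tested counter until the test, is frozen afterwards, and is required to be $0$ in the target, which is sound because the tests fire once each, in a fixed order, between consecutive SCCs. One small correction to your last paragraph: the loops of \ExpAmplifier{} in \cref{sec:3-VASS+Z} do contain zero tests (inside \ExactMult), but these are the triple-based tests of \cref{lem:3-VASS+Z-triples}, not the auxiliary-counter tests this claim covers, so your ``no tests inside loops'' check should be applied only to the latter.
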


We are now ready to formulate the two main lemmas.

\begin{restatable}{lemma}{twoexptriple}\label{lem:twoexptriple}
    Let $n, A \in \NN$.
    There is exists a \zvass{2}{}, with polynomial size in $n$ and $A$, that computes the set of triples $(A^{2^n}, B, B \cdot A^{2^n})$, for any $B \in \NN$.
\end{restatable}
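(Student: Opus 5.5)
We build the \zvass{2}{} by induction on $n$. It will consist of $n+1$ gadgets applied in sequence, connected only by zero tests. By \cref{cl:zero-tests}, these zero tests can be simulated with polynomially many auxiliary $\ZZ$-counters. The key primitive is exact multiplication by a squared value. Suppose a $\ZZ$-counter $x$ holds $X = A^{2^i}$, and we want a \emph{triple} $(X^2, B, B X^2)$ for an arbitrary guessed $B$.

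The natural gadget is a loop that repeats the following step. Transfer $X$ from one storage to another while adding $X$ units to an output counter, and count iterations on a second counter. Then force the iteration count to equal $X$ by a zero test that compares it against a copy of $X$.

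The difficulty is that a single transfer loop on $\ZZ$-counters can stop early: $\ZZ$-counters cannot certify that a counter has been emptied. This is exactly where the two $\NN$-counters are needed. A transfer from an $\NN$-counter $a$ to $\NN$-counter $b$, followed by a zero test of $a$ (simulated via a $\ZZ$-copy), certifies that the full amount was moved. We use $\NN$-counter $a$ to hold the current multiplicand and $\NN$-counter $b$ as its temporary destination. Each step of moving one unit from $a$ to $b$ also adds $1$ to a $\ZZ$-output counter. A final zero test of $a$ forces the output to grow by exactly the value being moved.

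The plan for the gadget computing $(X^2, B, BX^2)$ from $(X, B', B'X)$-style data is as follows.

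\begin{enumerate}
\item Keep $X$ on a $\ZZ$-counter $c$, and load $X$ into the $\NN$-counter $a$. The loading is checked by a zero test of $a - c$, using a frozen $\ZZ$-difference.
\item Repeatedly move $a$ to $b$ and back, each time adding $X$ to an accumulator $\ZZ$-counter and decrementing a $\ZZ$-counter initialised to $X$. Each full sweep is certified by a zero test of the emptied $\NN$-counter.
\item A final zero test on the decremented counter enforces exactly $X$ sweeps, so the accumulator equals $X^2$.
\end{enumerate}

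To keep the number of zero tests polynomial, sweeps cannot each require a fresh zero test. So I would instead use the standard trick of weak multiplication plus an invariant equation checked once at the end. Let $s$ be the total amount moved across all sweeps. Choose the increments so that the relation $s = X \cdot(\#\text{sweeps})$ holds if and only if every sweep was complete. A single terminal zero test on a combined $\ZZ$-counter then enforces correctness, because incomplete sweeps only ever cause a deficit, which can never be compensated.

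The same mechanism multiplies the $B$-component: every unit moved also adds $B$-proportional amounts, or we process $B$ as the second multiplicand. Chaining $n$ squarings gives $A^{2^n}$. Throughout, the triple is carried along in the invariant form $(Y, B, BY)$, so that the final state holds $(A^{2^n}, B, B A^{2^n})$ for arbitrary guessed $B$. Conversely, any run reaching the final zero-valued auxiliary counters must have performed only complete transfers.

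The main obstacle is the soundness of the single-final-check argument. We must show that partial transfers, interleaved choices, and unbounded guesses cannot produce a spurious vector that still satisfies all the terminal linear constraints. I would first try to prove a monotonicity lemma: each gadget's output is at most the intended product, with equality iff every $\NN$-transfer was complete. Quadratic relations then cannot be faked by linear deficits.
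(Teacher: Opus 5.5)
\textbf{There is a genuine gap: the step meant to replace the per-sweep zero tests cannot be carried out as described.}

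A zero test, real or simulated as in \cref{cl:zero-tests}, only certifies that a \emph{linear} combination of counters vanishes. So you can enforce $s = X\cdot(\#\text{sweeps})$ with one terminal test only if the product $X\cdot C$ is already stored on some counter, where $C$ is the prescribed number of sweeps. For squaring you need $C = X$ sweeps. The reference value would then be $X^2$, which is exactly what you are trying to compute, so the argument is circular.

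Your monotonicity lemma (incomplete sweeps only cause a deficit) is true. Without an independently produced reference, however, it gives only a lower bound, and the computed set would contain spurious smaller triples.

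The $B$-component has the same problem. $B$ is a guessed value, not a constant, so ``adding $B$-proportional amounts per unit'' is not a transition. You also never say how the next-level triple $(X^2, B', B'X^2)$ is obtained. Producing $B'X^2$ from a fresh guess $B'$ needs two multiplications by $X$, which again needs triples for $X$.

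A side remark on your diagnosis: the $\NN$-counters are not needed to certify emptiness, since simulated zero tests do that for $\ZZ$-counters just as well. Their role is to bound each sweep from above.

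\textbf{The missing idea is a multiplication gadget driven by a triple, plus a bootstrapping of triples across all levels.}

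The gadget works as follows.
\begin{itemize}
\item Take $\NN$-counters $x,\bar x$ with $x+\bar x = B$, and $\ZZ$-counters $(y,z) = (C, 2BC)$.
\item Each outer iteration moves $x$ into $\bar x$ and back, decrementing $z$ once per unit moved and $y$ once per iteration.
\item Nonnegativity caps the drain on $z$ at $2B$ per iteration. Hence reaching $y = z = 0$ forces all $C$ round trips to be complete.
\item In the same pass, any $\ZZ$-counters $u_1,\dots,u_k$ with $\sum_j u_j = C$ are multiplied by $B$.
\end{itemize}

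The bootstrapping works as follows.
\begin{itemize}
\item For every level $j\in[0,n]$, guess a pair $(y_j,z_j) = (C_j, 2AC_j)$ at the very start.
\item At level $i$, consume the triple $(x,y_i,z_i) = (A^{2^i}, C_i, 2A^{2^i}C_i)$. It multiplies, simultaneously, a copy of $x$ and all pending $z_{i+1},\dots,z_n$ by $A^{2^i}$.
\item This is consistent provided $C_i = A^{2^i} + 2A^{2^i}(C_{i+1}+\dots+C_n)$, which a correct nondeterministic guess satisfies.
\item Since $A\cdot\prod_{j<i}A^{2^j} = A^{2^i}$, each pending $z_j$ has the right value when its turn comes.
\item After level $n-1$, halving $z_n$ yields $(A^{2^n}, C_n, A^{2^n}C_n)$ with $C_n$ arbitrary.
\end{itemize}

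Without a source of reference products at every level, your construction has no sound way to certify the doubly-exponentially many sweeps.
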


Lemma~\ref{lem:twoexptriple} is proved in~\cref{app:lower-bounds}.
The main idea behind the proof of~\cref{lem:twoexptriple} is that we start from a triple $(A, B, B \cdot A)$ and repeat the following procedure $n$ times: from a triple $(C, B, CB)$, we produce a triple $(C^2, B', C^2 B')$, for some $B' \in \NN$. 
As $n$-fold squaring $A$ yields $A^{2^n}$, we obtain the desired triple after the procedure is repeated $n$ times.

\begin{lemma}\label{lem:big-numbers-sufficient}
    For any $d \geq 2$ and space-constructible function $f$, if for every $n \in \NN$, there exists a \zvass{d}{k}, of size polynomial in $n$, that computes $(2^{f(n)}, 7^{2^{f(n)}})$, then reachability in \zvass{d}{k} is \class{Space$(f(n))$}-hard. 
    In particular, if there exists a  \zvass{d}{} that can compute $(2^n, 7^{2^n})$, then reachability in \zvass{d}{} is \pspace-hard.
\end{lemma}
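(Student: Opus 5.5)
We reduce from the $2^{f(n)}$-length bounded reachability problem for three-counter automata, which is $\spacecl(f(n))$-hard by \cref{lem:reachability-CA}. Given a 3-CA $\Aa$, states $q_I,q_F$ and $n$, we first run the assumed \zvass{d}{k} to obtain the pair $(L, N) = (2^{f(n)}, 7^{2^{f(n)}})$ on two $\NN$-counters. The idea is to encode the three CA counters $x_1,x_2,x_3$ into the single large number through the Gödel-style encoding $N' = 2^{x_1}3^{x_2}5^{x_3}\cdot 7^{r}$. Here $r$ is the number of remaining steps, with the invariant $r + (\text{steps done}) = L$. Initially $x_i = 0$ and $r = L$, so the encoding is exactly $7^{2^{f(n)}} = N$. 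Requiring that the final value be $1$, i.e.\ $2^0 3^0 5^0 7^0$, certifies both that all three counters are zero and that exactly $L$ steps were taken.

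Each CA step consumes one factor $7$ and multiplies or divides by a prime. An increment of $x_1$ replaces $N'$ by $2N'/7$. A decrement replaces $N'$ by $N'/(2\cdot 7)\cdot 1$, where the second factor $7$ is absorbed by also multiplying appropriately, for example by $N'\mapsto N'/14$; the counting of steps is then adjusted by using a different prime budget per step type, or by padding every step with a fixed exponent. A zero test on $x_1$ multiplies by $1/7$ and checks that $N'$ is not divisible by $2$.

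These operations are implemented with two $\NN$-counters and the $\ZZ$-counters as follows. We transfer the current value from the first $\NN$-counter to the second while multiplying by the rational $p/q$. The loop subtracts $q$ from the source and adds $p$ to the target, and a residual check verifies exact divisibility: the source is at $0$ at the end, which is enforced by a zero test via a $\ZZ$-counter as in \cref{cl:zero-tests}. Non-divisibility by $2$ for a zero test is handled by guessing a residue $1$, subtracting it, and then verifying divisibility by $2$ with the same transfer trick.

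The main obstacle is that \cref{cl:zero-tests} only permits polynomially many zero tests, while the simulation has $2^{f(n)}$ steps, each of which needs exactness checks. The step-counter factor $7^r$ is exactly what resolves this: we do not zero-test after every step. Because every step performs a multiply/divide transfer with a \emph{nonnegativity}-only constraint, any cheating (stopping a transfer early, or leaving a remainder) only leaves some value stranded. A weak transfer can therefore only decrease the encoded quantity relative to an honest computation. We would argue this formally via a potential such as $N'$ times a suitable product over leftovers, which is monotone nonincreasing under cheats and strictly decreasing under any actual cheat.

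The single final test that the value equals $1$, together with the initial value being exactly $7^{2^{f(n)}}$, then forces all $L$ steps to have been exact. This is the standard ``precise multiplication with a budget'' technique, and L is used for the counting of steps. Making this potential argument rigorous, especially for the zero-test steps with guessed residues, is where most of the work lies. The second statement follows by taking $f(n)=n$.
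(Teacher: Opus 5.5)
\textbf{There is a genuine gap in the soundness argument}, the part you leave open. As designed it is not merely unproven but fails. First, the monotonicity goes the wrong way for your scheme. Every step multiplies by some $c<1$ ($2/7$, $1/14$, $1/7$, \dots). For a loop that subtracts $q$ and adds $p<q$, stopping early leaves $x+\bar{x}$ \emph{larger} than the honest result, not smaller. The correct invariant for $S=x+\bar{x}$ is $S_{t+1}\ge c_tS_t$. So at the end you can only conclude $S_{\mathrm{end}}\ge 7^{L}\prod_t c_t=2^{n_1}3^{n_2}5^{n_3}7^{L-s}$, where $n_i$ is the net change of $x_i$ and $s$ is the number of simulated steps.

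Second, and more seriously, nothing in your construction fixes $s$ or the $n_i$. The claim that final value $1$ certifies zero counters and exactly $L$ steps holds only for exact runs. Here is a concrete counterexample. Let $\Aa$ have a zero-test self-loop on $x_1$ at $q_I$ and a single transition $q_I\to q_F$ that decrements $x_1$. Then $\Aa$ has no accepting run. Yet the simulation can take the self-loop $L$ times exactly, driving the value from $7^{L}$ down to $1$. It can then run the decrement gadget with zero loop iterations: from value $1$, no iteration of ``subtract $14$, add $1$'' is possible. It ends at $q_F$ with value $1$, so the reduction as described is unsound.

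\textbf{What is missing} is exactly what the paper's extra $\ZZ$-counters provide. You need $y_1,y_2,y_3$ mirroring $x_1,x_2,x_3$ and required to be $0$ at the end, which gives $n_i=0$. You also need a step counter initialised from the computed $2^{f(n)}$ (which you compute but never use), decremented at every step and checked at the end, which gives $s=L$. Then $\prod_t c_t=7^{-L}$, hence $S_{\mathrm{end}}\ge 1$. Demanding $x=1$ and $\bar{x}=0$ then forces $S_{t+1}=c_tS_t$ at every step, i.e.\ every division is exact, which excludes decrementing a zero counter. It also excludes a zero test on a nonzero counter, provided the residue guess is merged into the division. For example: subtract $7$, move everything to $\bar{x}$, convert back at rate $14\mapsto 2$, then add $1$. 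This gives $S\mapsto S-6-12j\ge S/7$, with equality iff $S\equiv 7 \pmod{14}$ and the loops are maximal. A standalone ``subtract residue, then verify divisibility'' loop never creates a gap by itself.

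\textbf{Comparison with the paper.} With these additions your top-down variant (start at $7^{2^{f(n)}}$, divide down to $1$) is a valid mirror image of the paper's proof. The paper instead starts from $1$ and uses only weak multiplications by factors $>1$ ($7B_i$, $7/B_i$, $7$), so that $x+\bar{x}\le 2^{x_1}3^{x_2}5^{x_3}7^{L}$ throughout. It stores $2^{f(n)}-L$ on $z_1$, pads with $z_1$ further weak multiplications by $7$, and compares against the exactly computed $z_2=7^{2^{f(n)}}$. Your variant has the minor advantage that the final comparison is against the constant $1$ rather than a stored copy of $7^{2^{f(n)}}$.
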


\begin{remark}
    Notice that Lemma~\ref{lem:big-numbers-sufficient} may be seen as a weak version of an ultimate conjecture for hardness of reachability problem in infinite-state systems with counters. (See Conjecture~\ref{con:long-runs-iff-hardness}, that roughly states that enforcing long runs implies hardness.) 
    The multiplication triple technique from~\cite{CzerwinskiLLLM19} may be informally described as follows: in order to show \class{Space($f(n)$)}-hardness of a class of VASS, it suffices to compute triples $(2^{f(n)}, B, 2^{f(n)} \cdot B)$. 
    One would hope that it suffices to just compute one number that is exponential in $f(n)$. 
    \cref{lem:big-numbers-sufficient} is a step in this direction, as it shows that it suffices to compute the pair of numbers $(2^{f(n)}, 7^{2^{f(n)}})$ (note that the constant $7$ is not special and can be changed). 
    This pair of numbers is reminiscent of the quadratic pairs technique of~\cite{CzerwinskiO22} that, roughly speaking, shows that computing the pair $(2^{f(n)}, 4^{f(n)})$ suffices for \class{Space($f(n)$)}-hardness.
    However, use the quadratic pair of counters for hardness requires three $\NN$-counters to be used in total, but~\cref{lem:big-numbers-sufficient} only needs two $\NN$-counters.
\end{remark}

Before proving~\cref{lem:big-numbers-sufficient}, we show how Lemmas~\ref{lem:twoexptriple} and~\ref{lem:big-numbers-sufficient} are used to prove~\cref{thm:unary-2-zvass-psapce-hard}.

\begin{proof}[Proof of Theorem~\ref{thm:unary-2-zvass-psapce-hard}]
    Given~\cref{lem:twoexptriple}, we have \zvass{2}{} that can compute triples $(A^{2^n}, B, B \cdot A^{2^n})$. 
    This immediately implies that \zvass{2}{} can in particular compute exponential numbers of the form $A^{2^n}$, in particular with $A = 7$. 
    We can simply nondeterministically decrease the second and third counter in the triple, and at the end zero test them.
    In order to satisfy assumptions of~\cref{lem:big-numbers-sufficient}, we need to show also how to compute $2^n$ with \zvass{2}{}. 
    Given~\cref{cl:zero-tests}, this is straightforward: one can use linearly many zero tests to $n$-times multiply the constant $2$ with itself.
    Assume that $x$ and $y$ are the two $\NN$-counters and they both start with value $0$. We first increment $x$ by $1$ and then $n$-times multiply $x$ by $2$ (using $y$ as an ancillary counter $y$).
    For counters $t, t_1, \ldots, t_k$, we write $t \reaches t_1, \ldots, t_k$ to represent the following operation: decrement $t$ and simultaneously increment (each) $t_1, \ldots, t_k$ by one.
    
    Exact multiplication by any rational number $\frac{a}{b}$, for $a, b \in \NN$,
    can be implemented by the following short counter program denoted $\ExactMult(x, y, a/b)$:
    
    \begin{algorithmic}[1]
    \Loop \ $x \reaches y$
    \EndLoop
    \State \ZT($x$)
    \Loop \ \sub{y}{b}; \add{x}{a}
    \EndLoop
    \State \ZT($y$)
    \end{algorithmic}

    Now that we have \zvass{2}{} that can compute pairs $(2^n, 7^{2^n})$, by~\cref{lem:big-numbers-sufficient} we conclude that the reachability problem for \zvass{2}{} is \pspace-hard.
\end{proof}

\begin{proof}[Proof of Lemma~\ref{lem:big-numbers-sufficient}]
    We will use the hardness result formulated in~\cref{lem:reachability-CA}.
    The main challenge is to simulate a three-counter automaton with a \zvass{2}{} while maintaining the length the run (in order to check reachability is witnessed by a run of length at most $2^{f(n)}$).

    Before diving into the proof, we will intuitively explain our construction.
    A first idea would be to keep the three counters on VASS counters and somehow try to implement zero tests on VASS. 
    However, the \zvass{2}{} has only two $\NN$-counters and therefore such a simulation is hopeless; we cannot even guarantee nonnegativity of the counters. 
    Instead we propose a different, novel technique to simulate 3-CA.

    Suppose $\Aa$ is an arbitrary 3-CA; we shall use $x_1$, $x_2$ and $x_3$ to refer to the three counters.
    We shall construct a \zvass{2}{} that we denote as $\Vv$.
    The first try would be to encode counters $x_1$, $x_2$, and $x_3$ as $2^{x_1} \cdot 3^{x_2} \cdot 5^{x_3}$ and keep this encoding on one the two $\NN$-counters of $\Vv$.
    We denote $B_1 = 2$, $B_2 = 3$, and $B_3 = 5$.
    Let us use $x$ to denote the $\NN$-counter storing $2^{x_1} \cdot 3^{x_2} \cdot 5^{x_3}$ and we use $\bar{x}$ to denote the other $\NN$-counter of $\Vv$.
    If one of the counters $x_i$ is increased in $\Aa$, then $\Vv$ simulates this by ``weakly-multiplying'' $x$ by $B_i$ (using $\bar{x}$ as an ancillary counter).
    In other words $\Vv$ applies the following counter program, denoted $\WeakMult(x, \bar{x}, a/b)$, for $B_i = a/b$.

    \begin{algorithmic}[1]
        \Loop \ $x \reaches \bar{x}$
        \EndLoop
        \Loop \ \sub{\bar{x}}{b}; \add{x}{a}
        \EndLoop
    \end{algorithmic}

    If both loops are maximally iterated, then value of $x$ will be multiplied by $B_i$ exactly.
    However, the multiplication is ``weak'': the value of $x$ can be multiplied by $B_i$, but $x$ could also end up with some smaller value (because there is nothing forcing the VASS to maximally iterate the loops).

    Let us explain why weak multiplication is still useful even though it is not exact.
    The idea is that, for any $A$,~\cref{lem:big-numbers-sufficient} gives us the pair $(2^{f(n)}, A^{2^{f(n)}})$.
    Therefore, if one is able to weakly multiply some counter by $A$ exactly $2^{f(n)}$ many times, then we can compare its final value with the value $A^{2^{f(n)}}$.
    If the counter is required to equal $A^{2^{f(n)}}$ at the end of the run, then it must have been the case that every weak multiplication was actually an exact multiplication (otherwise, this counter would end with a value less than $A^{2^{f(n)}}$).

    We are not out of the woods yet: if we decrease some counter $x_i$ then $\Vv$ is expected to divide $x$ by $B_i$. 
    It is possible to weakly divide counter values in \zvass{2}{} in the same way that one can weakly multiply, but this creates the following issue.
    Given that weak multiplication can result in smaller than desired values, and weak division can result in greater than desired values, one can possibly balance weak multiplications and weak divisions carefully so that the result is the same as if the multiplications and divisions were computed exactly. 
    This possibility of deviating from the true desired counter values seems to be uncontrollable, even with the pair $(2^{f(n)}, A^{2^{f(n)}})$ at hand.

    Now we shall introduce a novel idea to handle this issue.
    Instead of $2^{x_1} \cdot 3^{x_2} \cdot 5^{x_3}$, the counter $x$ stores $2^{x_1} \cdot 3^{x_2} \cdot 5^{x_3} \cdot 7^L$, where $L$ is the length of the current run. 
    As $7$ is greater than $2$, $3$, and $5$, any step of the 3-CA will result in $x$ increasing (in short, $7^L$ is a dominant term). 
    For example, if the counter $x_i$ is decremented by $\Aa$, then in order to maintain the encoding $x = 2^{x_1} \cdot 3^{x_2} \cdot 5^{x_3} \cdot 7^L$, we wish to weakly multiply $x$ by $7 / B_i > 1$.
    Therefore, in every cases, $\Vv$ simulates a step of $\Aa$ with weak multiplication. 
    This means that any weak computation results in the value being smaller than desired (not smaller or bigger depending on the operation). 
    This behaviour, thankfully, is controllable.
    Precisely, if after exactly $2^{f(n)}$ steps, the values of counters $x_1$, $x_2$, and $x_3$ are zero, then the encoding should be $2^0 \cdot 3^0 \cdot 5^0 \cdot 7^{2^{f(n)}} = 7^{2^{f(n)}}$. 
    Recall that we were able to exactly compute the number $7^{2^{f(n)}}$, and store it on some auxiliary $\ZZ$-counter.
    This allows us to check $x = 7^{2^{f(n)}}$ at the end of the run.
    If equality holds, then every weak multiplication must have been performed exactly, and so the simulation of $\Aa$ is faithful.

    We will now describe the construction more precisely. 
    The \zvass{2}{} $\Vv$ consists of two $\NN$-counters $x$ and $\bar{x}$, $\ZZ$-counters $y_1$, $y_2$, $y_3$, $z_1$, and $z_2$,
    as well as several additional, unnamed auxiliary $\ZZ$-counters used for zero tests.
    Given~\cref{lem:big-numbers-sufficient}, the considered systems (\zvass{2}{} in our case, \zvass{d}{} more generally) can compute pairs $(2^{f(n)}, 7^{2^{f(n)}})$. 
    We assume, without loss of generality, that the pair is computed and stored on counters $z_1$ and $z_2$.
    All the other counters $x$, $\bar{x}$, $y_1$, $y_2$, and $y_3$ start with value zero.
    We aim to store $2^{x_1} \cdot 3^{x_2} \cdot 5^{x_3} \cdot 7^L$ on $x$ (recall that $L$ is the length of the run). 
    We will use $\bar{x}$ as an ancillary counter, most of the time its value will be $0$.
    The values $x_1$, $x_2$, and $x_3$ are maintained on counters $y_1$, $y_2$, and $y_3$, respectively. 
    The value $2^{f(n)} - L$ will be stored on $z_1$.
    We will check that $z_1$ has a nonnegative value; this implies that $L \leq 2^{f(n)}$.
    In fact, our simulation will have the property that the
    sum $x + \bar{x}$ will always be upper bounded by
    $2^{x_1} \cdot 3^{x_2} \cdot 5^{x_3} \cdot 7^L$.
    At the end, we will check the equality $x = z_2$ and this equality will hold if and only if the simulation of $\Aa$ by $\Vv$ was faithful (the value $z_2$ is not changed during simulation).
    Although $y_1$, $y_2$, and $y_3$ track the values of $x_1$, $x_2$, and $x_3$, we are not able to check nonnegativity or perform an arbitrary number of zero tests on them because $y_1$, $y_2$, and $y_3$ are $\ZZ$-counters of $\Vv$.  
    They are simply present to check that their values are zero at the end of the run.

    Recall that, to begin with, $x_1 = x_2 = x_3 = L = 0$. 
    We set value of the counter $x$ to be $2^0 \cdot 3^0 \cdot 5^0 \cdot 7^0 = 1$.
    We need to explain in detail how $\Vv$ simulates steps of $\Aa$: what happens when a counter of $\Aa$ is incremented, decremented, or zero-tested.
    First, all increments and decrements $x_i$ implies that $y_i$ is incremented or decremented, respectively. 
    The more challenging part is the simulation on $x$ and $\bar{x}$. 
    When counter $x_i$ is increased, the encoding on $x$ should be multiplied by $7 \cdot B_i$, as both $x_i$ and $L$ are increased by one. 
    We implement this with an instance of the weak multiplication counter program $\WeakMult(x, \bar{x}, 7 \cdot B_i)$. 
    Similarly, when $x_i$ is decreased, $x$ should be multiplied by $7 / B_i > 1$ and that is implemented by firing the weak multiplication counter program $\WeakMult(x, \bar{x}, 7 / B_i)$. 
    Notice that if the counter $x_i$ in $\Aa$ is zero and cannot be decremented then value of the counter $x$ in $\Vv$ is not divisible by $B_i$. 
    The \zvass{2}{} $\Vv$ does not explicitly prohibit the firing of this transition, however after the firing, the sum $x + \bar{x}$ will be multiplied by less than $7 / B_i$. 
    As we will soon see, this will result in the target configuration not being reached.

    Let us now inspect how zero tests of $x_i$ are simulated. 
    Notice that $x_i = 0$ if and only if $B_1^{x_1} \cdot B_2^{x_2} \cdot B_3^{x_3} \cdot 7^L$ is not divisible by $B_i$.
    Therefore, the zero test is simulated as a test for \emph{non}-divisibility of $x$ by $B_i$.
    Importantly, since the zero test of $x$ is a step in $\Aa$, we must not forget to (weakly) multiply $x$ by $7$.
    Accordingly, we design the zero test simulation macro such that value $x + \bar{x}$ is multiplied by $7$ if and only if $x$ was originally not divisible by $B_i$. 
    At the same time, it is important that before the simulation $\bar{x}$ had zero value, and afterwards $\bar{x} = 0$.

    Altogether, we nondeterministically guess the residuum $r \in \{1, \ldots, B_i - 1\}$, subtract $r$ from $x$, check divisibility by $B_i$, multiplying $x$ by $7$, and then add $7r$ back to $x$. 
    This schema is realised by the following counter program:

    \begin{algorithmic}[1]
        \State \Guess \text{ } $r \in \{1, \ldots, B-1\}$
        \State \sub{x}{r}
        \Loop
            \State \sub{x}{B}; \add{\bar{x}}{B}
        \EndLoop
        \Loop
            \State \sub{\bar{x}}{B}; \add{x}{7B}
        \EndLoop
        \State \add{x}{7r}
    \end{algorithmic}

    It is easy to verify that if $x$ is not divisible by $B$, then final value of $x$ is can be multiplied by $7$ by selecting the correct residuum $r$ on line 1, and then maximally iterating the loops on lines 3 and 5.
    On the other hand, if $x$ is divisible by $B$ then no matter what one guesses for $r$, the counter $x$ during the loop iterations on line 3 is not divisible by $B$. 
    Therefore, the loops in lines 3 and 5 multiply value $x + \bar{x}$ by some value less than $7$.

    It remains to describe what $\Vv$ does after simulating the whole run of $\Aa$. 
    If $\Vv$ correctly and faithfully simulated a run of $\Aa$ to $x_1, x_2, x_3 = 0$ with $L < 2^{f(n)}$, then $\Vv$ reached, with maximal iterations of weak multiplication loops, the following counter values:
    \begin{equation}\label{eq:final}
        (x, \bar{x}, y_1, y_2, y_3, z_1, z_2) = (7^L, 0, 0, 0, 0, 2^{f(n)}-L, 7^{2^{f(n)}}).
    \end{equation}
    It is important that if $y_1 = y_2 = y_3 = 0$, then it is true that 
    \begin{equation}\label{eq:invariant}
        (x + \bar{x}) \cdot 7^{z_1} \leq z_2.
    \end{equation}
    In order to finally check whether the above seven-tuple is of the correct form, we need to check $\bar{x} = y_1 = y_2 = y_3 = 0$ and $x \cdot 7^{z_1} = z_2$. 
    This final check is realised by the following counter program:

    \begin{algorithmic}[1]
        \State \ZT($\bar{x}$); \ZT($y_1$); \ZT($y_2$); \ZT($y_3$)
        \Loop
            \State \WeakMult($x$, $\bar{x}$, 7)
            \State \sub{z_1}{1}
        \EndLoop
        \State \ZT($z_1$)
        \Loop
            \State \sub{x}{1}; \sub{z_2}{1}
        \EndLoop
        \State \ZT($x$); \ZT($z_2$)
    \end{algorithmic}

    To see that this counter program indeed guarantees Equation~\eqref{eq:final}, first notice that in line 5, $z_1$ has value zero.
    It implies that initially $z_1 \geq 0$, so indeed $L \leq 2^{f(n)}$, as required.
    Observe also that in line 5 we need to have $x = z_2$, so $(x + \bar{x}) \cdot 7^{z_1} \geq z_2$. Each iteration of the loop in line 2 does not increase the value of $(x + \bar{x}) \cdot 7^{z_1}$
    and initially this value was at most $z_2$ by Equation~\eqref{eq:invariant}.
    That means that equality holds in both cases and, in particular, $(x + \bar{x}) \cdot 7^{z_1} = z_2$ initially. 
    Together with the zero-tests performed in line 1, we conclude that Equation~\eqref{eq:final} holds.

    Let us now put all the pieces together to show that $\Vv$ indeed faithfully simulates $\Aa$. 
    At the beginning we have $x + \bar{x} = 2^{x_1} \cdot 3^{x_2} \cdot 5^{x_3} \cdot 7^L$.
    Every step of the simulation of $\Aa$, namely any increment, decrement, or zero test of a counter $x_i$, is designed such that the above equality implies that the step was faithfully simulated.
    Once this equality is lost (i.e. when $x + \bar{x} < 2^{x_1} \cdot 3^{x_2} \cdot 5^{x_3} \cdot 7^L$), equality can never be recovered. Therefore, to guarantee the faithful simulation this equality need only be checked once at the end of the rum; this equality is indeed checked by the previously described final check. 
\end{proof}

\subsection{Reachability in unary \zvass{3}{} is \tower-hard.} 
\label{sec:3-VASS+Z}

This section is devoted to showing the following theorem.

\begin{theorem}\label{thm:unary-3-zvass-tower-hard}
    Reachability in unary \zvass{3}{} is \tower-hard.
\end{theorem}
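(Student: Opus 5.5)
Reduce from the \tower-bounded reachability problem for 2-CA (\cref{thm:2-CA-tower}). Given a 2-CA $\Aa$ with counters $c_1,c_2$ and $n$ in unary, we build a unary \zvass{3}{} $\Vv$ with $\NN$-counters $c_1, c_2, b$. Counters $c_1,c_2$ carry $\Aa$'s counters and $b$ is a budget with invariant $c_1+c_2+b = N$, where $N=\tower(n)$. The zero tests of $\Aa$ are realised with the Czerwiński--Orlikowski style multiplication-triple trick. We keep $\ZZ$-counters that accumulate a test sum $S$ which must end at zero, and we choose the bookkeeping so that $S=0$ forces every simulated test to be exact.

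The plan has three steps. \textbf{Step 1:} build a \tower-sized triple on the $\ZZ$-counters, i.e.\ a configuration where some $\ZZ$-counters hold $N$ and a product-like quantity. \textbf{Step 2:} simulate $\Aa$, with each zero test of $c_i$ implemented by the standard loop ``transfer $c_{3-i}+b$ budget'' that consumes part of the multiplication triple and is exact only when $c_i=0$. \textbf{Step 3:} finish with a final equality check.

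The crucial advantage of $\ZZ$-counters is Step 1. Once a value has been produced it can be frozen on a $\ZZ$-counter, so the three $\NN$-counters are freed for reuse. We build the tower iteratively. Suppose we have $M$ (stored exactly) together with a way to check exactness. Using the three $\NN$-counters we then compute $2^M$ by $M$ rounds of \WeakMult$(\cdot,\cdot,2)$. A third $\NN$-counter counts down from $M$, and each round uses an internal loop whose iteration count is tracked on a fresh $\ZZ$-counter.

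Exactness is enforced as in the proof of \cref{lem:big-numbers-sufficient}: weak multiplication can only undershoot. The key difficulty is certifying that the final value equals $2^M$ exactly, since we have no prior copy of $2^M$. Instead, alongside $x$ we use a control that makes any undershoot detectable. One candidate is the ``quadratic pair'' idea: maintain a pair $(x, x^2)$, or an analogous invariant between a $\NN$-counter and a $\ZZ$-counter whose final relation can be tested with a single zero test on a $\ZZ$-counter.

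Each level uses only $O(1)$ fresh $\ZZ$-counters and zero tests (\cref{cl:zero-tests}). After $n$ levels we therefore have $N=\tower(n)$ together with the matching triple, using polynomially many $\ZZ$-counters.

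The main obstacle is exactly this per-level exactness certificate: amplifying an exact value $M$ to an exact $2^M$ with only three $\NN$-counters, while relying on $\ZZ$-counters for all frozen data and equality checks. I would first try the scheme of \cref{lem:big-numbers-sufficient}, which adds a dominant factor so that every step is a weak multiplication whose cumulative deficit is monotone. The aim is a certificate of the form ``final $x \le$ target'' combined with ``target $\le$ final $x$'', where the second inequality comes from a counting $\ZZ$-counter that records each loop iteration.
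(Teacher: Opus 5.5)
Your overall plan is the paper's. You reduce from \tower-bounded reachability in 2-CA and simulate the zero tests with a multiplication triple; this is exactly \cref{lem:3-VASS+Z-triples}. You then build the $\tower(n)$-triple by repeated amplification, using $\ZZ$-counters to freeze values and to pay for polynomially many zero tests. The gap is Step~1, which carries essentially all the work: you flag its exactness certificate as the main obstacle and leave it open, and none of your candidate certificates can work.

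Computing $2^M$ by $M$ rounds of \WeakMult\ needs either $\Theta(M)$ exact zero tests or an independent exact comparison value. The $\ZZ$-counter zero tests of \cref{cl:zero-tests} are only polynomially many, while $M$ is \tower-large. The scheme of \cref{lem:big-numbers-sufficient} works only because its target $7^{2^{f(n)}}$ was computed separately and exactly by \cref{lem:twoexptriple}, whose $n$ squarings need only polynomially many tests. At the next tower level there is no such independent copy of $2^M$. A $\ZZ$-counter counting loop iterations records what the run actually did, so on an undershooting run it agrees with the undershot value and yields no inequality ``target $\le$ final $x$''. The quadratic-pair idea is unspecified, and the pair would itself have to be produced exactly. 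Finally, even an exact $2^M$ would not give you the next triple $(2^M, 2C', 2\cdot 2^M C')$ with a fresh, arbitrarily large $C'$. That requires another multiplication by $2^M$, hence again many exact zero tests, which you assert but do not construct.

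The missing idea is that the current triple $(B,2C,2BC)$ itself pays for the zero tests of the next level. Used sideways, it provides $B/2$ exact zero tests on $2C$-bounded $\NN$-counters. Since $C$ was guessed arbitrarily large, these are effectively unbounded. The paper's \ExpAmplifier\ proceeds as follows:
\begin{itemize}
\item It guesses $C'$ and sets $(x',y',z')=(1,2C',2C')$.
\item It applies \ExactMult($x_1$, $x_2$, 256), which uses two triple-based tests, $K_1$ times to $x'$ while incrementing a $\ZZ$-counter $u$, and $K_2$ times to $z'$ while decrementing $u$.
\item The test $u=0$ forces $K_1=K_2$.
\item Exhausting the triple ($y=z=0$) forces $2K_1+2K_2=B/2$, so $x'$ and $z'$ are both multiplied by exactly $256^{B/8}=2^B$.
\end{itemize}
Thus exactness at each level is certified by the previous level's triple, not by comparison with a precomputed target. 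Starting from the trivial $16$-triple and iterating $n-3$ times yields the $\tower(n)$-triple that \cref{lem:3-VASS+Z-triples} needs.
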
 

We achieve this, for the most part, by appropriately modifying the proof that reachability in 8-VASS is \tower-hard~\cite{CzerwinskiO22}.
We first briefly describe the overall ideas before diving into the details. 
We reduce from the \tower-complete problem: \tower-bounded reachability in two-counter automata (see~\cref{thm:2-CA-tower}). 

The challenge is to simulate \tower-boundedness of the counters and
the zero tests. 
This is because two of the $\NN$-counters of a \zvass{3}{} can be dedicated to storing the value of the two counters of a given 2-CA; this means that we do not need to be concerned with issues related to nonnegativity or the final values reached.
An established technique for simulating zero tests of bounded counters is the ``multiplication triples technique''.
This technique has been used, in particular, for showing \tower-hardness of reachability in 8-VASS and \ackermann-hardness of  reachability in VASS without a fixed number of counters~\cite{CzerwinskiLLLM19,CzerwinskiO21}.
The following lemma reformulates the multiplication triples technique for \zvass{3}{}.

\begin{restatable}{lemma}{threeVASSZtriples}\label{lem:3-VASS+Z-triples}
    For every 2-counter automaton $\Aa$ and in polynomial time, one can construct a unary \zvass{3}{} $\Vv$ with two distinguished states $q_I$, $q_F$ such that $\Aa$ has a $B$-bounded accepting run in which it performs at most $C$ if and only if there is a run from $q_I(0, 0, B, 2C, 2BC)$ to $q_F(0, 0, B, 0, 0)$ in $\Vv$ (here, we assume the first three counters are the $\NN$-counters).
\end{restatable}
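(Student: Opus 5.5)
The plan is to reuse the multiplication-triples technique of~\cite{CzerwinskiLLLM19,CzerwinskiO22}. The bound $B$ sits on an $\NN$-counter, and the two $\ZZ$-counters $c$ and $d$, initialised to $2C$ and $2BC$, act as budgets for zero tests. I read the statement as ``at most $C$ zero tests''. The $\NN$-counters are $x_1, x_2$, holding the two counters of $\Aa$, and $b$, which starts at $B$. I maintain the invariant $b = B - x_1 - x_2$: every increment of $x_i$ in $\Aa$ is simulated by $x_i \mathrel{+}= 1;\ b \mathrel{-}= 1$, and every decrement by the reverse. Nonnegativity of $b$ then enforces $B$-boundedness of the run, and nonnegativity of $x_i$ enforces correct decrements. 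The final configuration $q_F(0,0,B,0,0)$ forces $x_1 = x_2 = 0$ and $b = B$ at the end, as required.

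A zero test of $x_1$ is replaced by a gadget that may consume at most $2$ from $c$ and $2B$ from $d$, and consumes exactly that only if $x_1 = 0$. The gadget first moves all of $x_2$ into $b$ with the loop $x_2 \mathrel{-}= 1;\ b \mathrel{+}= 1$ (say $a$ times). It then runs the loop $b \mathrel{-}= 1;\ x_1 \mathrel{+}= 1;\ d \mathrel{-}= 1$ ($k$ times), then $x_1 \mathrel{-}= 1;\ b \mathrel{+}= 1;\ d \mathrel{-}= 1$ ($m$ times), and then does $c \mathrel{-}= 2$. Finally it moves $x_2$ back out of $b$. Nonnegativity gives $k \le B - x_1 - x_2 + a \le B - x_1$ and $m \le x_1 + k$. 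Hence $d$ drops by $k + m \le 2B - x_1$, with equality iff $x_1 = 0$, $a = x_2$ and $k = m = B$. A zero test of $x_2$ is symmetric, and the zero test of $b$ needed at the end is analogous. If $\Aa$ performs fewer than $C$ tests, dummy gadgets that shuttle $B$ between $b$ and an emptied counter pad the count, consuming exactly $2$ from $c$ and $2B$ from $d$.

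Correctness follows by an accounting argument. Each gadget decreases $c$ by exactly $2$ and $d$ by at most $2B$. Ending with $c = 0$ forces exactly $C$ gadgets, and ending with $d = 0$ forces each gadget to consume exactly $2B$. Hence every simulated zero test was genuine and every auxiliary transfer was maximal. The converse direction, that a $B$-bounded accepting run of $\Aa$ yields a run of $\Vv$, is immediate by iterating every loop maximally. The construction is clearly polynomial and unary.

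The main obstacle is the step that restores $x_2$ after it has been parked in $b$. The restoring loop $b \mathrel{-}= 1;\ x_2 \mathrel{+}= 1$ is not forced to run exactly $a$ times, so the test could silently corrupt $x_2$. My first attempt is to also charge the parking and restoring loops to $d$, with weights chosen so that the gadget's maximal consumption, now a constant such as $4B$ with $d$ initialised accordingly, is reached only when exactly $a$ units are restored. Failing that, I would add a global $\ZZ$-counter incremented by the parking loop and decremented by the restoring loop. That alone only enforces net zero over the whole run, so it would have to be combined with the equality $a = x_2$ forced above to rule out shifting value between tests.
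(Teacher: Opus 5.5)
There is a genuine gap, and it is exactly the step you flag: neither of your remedies closes it.

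\textbf{Why your gadget loses $x_2$.} After a passing test, your gadget leaves the $\NN$-counters at $(x_1,x_2,b)=(0,0,B)$, whatever the original value of $x_2$ was. That value then survives only in $\ZZ$-counters. The restoring loop $b \mathrel{-}= 1;\ x_2 \mathrel{+}= 1$ is bounded only by $b=B$. Fix the counts of the earlier loops; then any linear charge $\delta r$ to $d$ is maximised at $r=B$ if $\delta>0$, at $r=0$ if $\delta<0$, and is not pinned at all if $\delta=0$. It never singles out $r=a$. So no choice of weights in your first remedy works for this loop structure.

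\textbf{Why the global $\ZZ$-counter fails.} Your second remedy only yields $\sum_j a_j=\sum_j r_j$ over all gadgets. The forced equality $a_j=x_2$ refers to the current, possibly already corrupted, value of $x_2$, so it rules nothing out. Concretely, let $\Aa$ start at $(0,0)$, test $x_1=0$, decrement $x_2$, increment $x_2$, test $x_1=0$ again, and accept. $\Aa$ has no accepting run, since the decrement is impossible. For any $B\ge 1$, $\Vv$ nevertheless reaches $q_F(0,0,B,0,0)$ with $C=2$:
\begin{enumerate}[(i)]
\item the first gadget parks $a_1=0$ and restores $r_1=1$;
\item the decrement and the increment of $x_2$ are then simulated;
\item the second gadget parks $a_2=1$ and restores $r_2=0$.
\end{enumerate}
Both gadgets consume exactly $2$ from $c$ and $2B$ from $d$, and the global counter returns to $0$.

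\textbf{What the paper does instead.} The paper does not design a gadget at all. It takes the $5$-VASS of \cite[Lemma 10]{CzerwinskiO22} unchanged and observes that the fourth and fifth counters are only ever decremented. Since their initial values $(2C,2BC)$ and final values $(0,0)$ are nonnegative, they can be declared $\ZZ$-counters.

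\textbf{A repair for your route.} If you want a self-contained construction, park $x_2$ in $x_1$ after the $b\leftrightarrow x_1$ round trip, not in $b$. Then nonnegativity of $x_1$ caps the restoration. Use the four loops $b\to x_1$, $x_1\to b$, $x_2\to x_1$, $x_1\to x_2$, run $k,m,p,q$ times, with per-unit charges $1,2,2,1$ to $d$. The constraints are $q\le x_1+k-m+p$, $m\le x_1+k$, $k\le B-x_1-x_2$ and $p\le x_2$. Together they give $k+2m+2p+q\le 3B-x_1$. Equality forces $x_1=0$, $k=m=B-x_2$ and $p=q=x_2$, so the gadget ends in $(0,x_2,B-x_2)$, which is correct. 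The budget mismatch ($3BC$ instead of $2BC$) is harmless: let the gadgets charge a fresh $\ZZ$-counter $d'$, add a loop $d \mathrel{-}= 2;\ d' \mathrel{+}= 3$, and require $d'=0$ at the end.
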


The proof of~\cref{lem:3-VASS+Z-triples} analogous to the proof of~{\cite[Lemma 10]{CzerwinskiO22}} (which has a very similar statement).
One does need to make some small modifications; we present the whole proof in~\cref{app:lower-bounds}.

With~\cref{lem:3-VASS+Z-triples} in hand, it is enough to present a \zvass{3}{} that computes the triple $(\tower(n), 2C, 2C \cdot \tower(n))$, for an arbitrarily large $C \in \NN$.
This indeed enables us to simulate a $\tower(n)$-bounded runs of a given 2-CA and, thus we can deduce~\cref{thm:unary-3-zvass-tower-hard}. 
Therefore, our attention is now focused on constructing a \zvass{3}{} that computes $(\tower(n), 2C, 2C \cdot \tower(n))$, for an arbitrarily large $C \in \NN$. 

We call $(B, 2C, 2BC)$ a \emph{$B$-triple}. 
The idea is to first construct a $B$-triple, for some small $B$ and arbitrarily large $C$, and then use this $B$-triple to construct a $B'$-triple for $B' > B$. 
This construction is then repeated to generate large $B$-triples.
First, observe that it is trivial to construct a $16$-triple. 
It can be realised by the simple counter program presented
below.
Assume that the three $\NN$-counters are denoted $x$, $y$, and $z$. 
We use $\Triple(x, y, z)$ to refer to this program. 
The number $C$ corresponds to the number of iterations of the loop in line 2 (clearly $C$ can be arbitrarily large).

\begin{algorithmic}[1]
    \State \add{x}{16}
    \Loop
        \State \add{y}{2}; \add{z}{32}
    \EndLoop
\end{algorithmic}

Now we show how to construct a \zvass{3}{} that, from the triple $(B, 2C, 2 BC)$, produces the triple $(2^B, 2C', 2 \cdot 2^B \cdot C')$. 
Moreover, the produced $C'$ can be arbitrarily high (just like $C$).  
We shall assume that $B$ is divisible by $8$.
We also want $B = \tower(k)$ to hold for some $k \in \NN$ (this is why we started with $B = 16 = \tower(3)$ earlier).
More concretely, we will present a counter program denoted \ExpAmplifier($x_1, x_2, x, y, z, x', y', z', u$) that has three $\NN$-counters $x_1$, $x_2$, and $x$; six named $\ZZ$-counters $y$, $z$, $x'$, $y'$, $z'$, and $u$; and several unnamed $\ZZ$-counters that will be used for zero tests. 
We assume that initially $(x, y, z) = (B, 2C, 2BC)$ and all other counters have zero value.  
We will produce the triple $(2^B, 2C', 2 \cdot 2^B \cdot C')$ onto the counters $(x, y, z)$.

\cref{lem:3-VASS+Z-triples} tell us that the triple $(B, 2C, 2BC)$ can be used to simulate $C$ zero tests on $B$-bounded $\NN$-counters. However, we prefer here to use the triple $(B, 2C, 2BC)$ a bit differently: we use it to perform $B/2$ zero tests on $2C$-bounded $\NN$-counters. 
As we assume that $C$ is arbitrarily large, we are essentially allowing for the simulation of $B/2$ zero tests on $\NN$-counters without any assumptions on boundedness. 

It is important to emphasise that in our counter programs we use, in a sense, two kind of zero tests. 
One kind does not use any of the constructed triples $(B, 2C, 2BC)$ and relies instead on auxiliary $\ZZ$-counters. 
These zero tests can be applied on any counters, but we can only use  polynomially many of them (\cref{cl:zero-tests}).
The other kind of zero test do use the constructed triples $(B, 2C, 2BC)$. 
Here, we can allow for $B/2$ such zero-tests, but they will only ever be used on counters $x_1$ and $x_2$. 
For the remainder of this section, unless it is otherwise specified, we use zero tests of the first kind (one that uses an auxiliary $\ZZ$-counter, not a triple).

Before presenting \ExpAmplifier, we give a short counter program \Move($t, t'$) for the operation of moving a value from counter $t$ to counter $t'$. 
Initially, the counters have values $(t, t') = (k, 0)$ for some $k \in \NN$, and at the end of \Move($t, t'$), the counters have values $(t, t') = (0, k)$. 

\begin{algorithmic}[1]
    \Loop \ $t \reaches t'$
    \EndLoop
    \State \ZT($t$)
\end{algorithmic}

Recall now that \ExactMult($t, t', a/b$) exactly multiplies the value on counter $t$ by $\frac{a}{b}$.
Recall that this macro uses two zero-tests: one on counter $t$ and one on counter $t'$.
Here is the only place in which we use the zero tests of the second kind (which uses a constructed triple $(B, 2C, 2BC)$).
Accordingly, we can repeatedly iterate \ExactMult\ $B/4$ many times  under the condition that it uses counters $x_1$ and $x_2$. 

Now, We are ready now to present the counter program for \ExpAmplifier.

\begin{algorithmic}[1]
    \State \add{x'}{1}
    \Loop
        \State \add{y'}{2}; \add{z'}{2}
    \EndLoop
    \State \Move($x'$,$x_1$)
    \Loop
        \State \ExactMult($x_1$, $x_2$, 256)
        \State \add{u}{1}
    \EndLoop
    \State \Move($x_1$, $x'$)
    \State \Move($z'$,$x_1$)
    \Loop
        \State \ExactMult($x_1$, $x_2$, 256)
        \State \sub{u}{1}
    \EndLoop
    \State \Move($x_1$, $z'$)
    \State \ZT($y$); \ZT($z$); \ZT($u$)
    \Loop \ \sub{x}{1}
    \EndLoop
    \State \ZT($x$)
    \State \Move($x', x$); \Move($y', y$); \Move($z', z$)
\end{algorithmic}
\label{page:exp-amplifier}

Let us now inspect the code of \ExpAmplifier\ and argue its correctness. 
We shall use $C'$ to denote the number of iterations of the loop in line 2. 
After line 3, the values of $(x', y', z')$ are $(1, 2C', 2C')$ and note that $C'$ can be arbitrarily large. 
The aim of the majority of the rest of  \ExpAmplifier\ is to multiply both $x'$ and $z'$ by $2^B$.
Let $K_1$ and $K_2$ denote the number of iterations of the loops in line 5 and line 10, respectively.
Notice now that lines 4 -- 8 multiply $x'$ by $256^{K_1}$ and increase counter $u$ by $K_1$. 
The loop in line 5 does most of the work; lines 4 and 8 exist to move the value into $x_1$ (an $\NN$-counter).
Similarly lines 9 -- 13 multiply $z'$ by $256^{K_2}$ and decrease counter $u$ by $K_2$. 
After line 13, $u = K_1 - K_2$. 
As $u$ is zero tested in line 14, we know that $K_1 = K_2$. 
Moreover, in line 14 $y$ and $z$ both have zero value. 
As $y = 0$, we know that exactly $B/2$ zero tests using triples were performed, and $z = 0$ guarantees that all of them were indeed performed correctly. 
The number of zero tests using triples in each \ExactMult\ is two, so $2 K_1 + 2 K_2 = B/2$. 
Together with $K_1 = K_2$, we deduce that $K_1 = K_2 = B/8$. 
Therefore lines 4 -- 13 multiplied both $x'$ and $z'$ by $256^{B/8} = (2^8)^{B/8} = 2^B$. 
Thus, after line 14 the values $(x', y', z')$ are $(2^B, 2C', 2 \cdot 2^B \cdot C')$. 
The aim of lines 15 and 16 is to decrease $x$ to zero. 
Finally, line 17 moves the triple $(2^B, 2C', 2 \cdot 2^B \cdot C')$ from counters $(x', y', z')$ onto counters $(x, y, z)$, as required.

We are ready to present a counter program producing $\tower(n)$-triples.

\begin{algorithmic}[1]
    \State \Triple(x, y, z)
    \For {$i = 1, \ldots, n-3$}
        \State \ExpAmplifier($x_1$, $x_2$, $x$, $y$, $z$, $x'$, $y'$, $z'$, $u$)
    \EndFor
\end{algorithmic}

The triple $(16, 2C, 32 C)$ produced by \Triple($x, y, z$) can equivalently be seen as a $\tower(3)$-triple ($16 = \tower(3)$). 
Then, firing \ExpAmplifier\ $n-3$ times amplifies this triple to the $\tower(n)$-triple.
This finishes the proof of~\cref{thm:unary-3-zvass-tower-hard}.

\section{An Upper Bound for \zvass{d}{}}
\label{sec:klm-upper-bound}

This section is dedicated to proving the following theorem.

\begin{theorem}\label{thm:klm-d-vass}
    Reachability in \zvass{d}{} is in $\Ff_{d+2}$.
\end{theorem}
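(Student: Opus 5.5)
We plan to adapt the KLMST decomposition algorithm, in the refined form of Fu, Yang and Zheng~\cite{FuYZ24}, to \zvass{d}{k}. The objects are generalised VASS: sequences of SCCs connected by transitions, each SCC annotated with input and output vectors over $\NN^d\times\ZZ^k$ in which $\NN$-coordinates may be fixed or $\omega$, and $\ZZ$-coordinates are always unconstrained during the run and only checked at the ends. The characteristic system of such an object is an integer linear system in which the $\ZZ$-counters add only equality constraints; they never impose nonnegativity. The perfectness condition (Kosaraju's $\theta$-condition) therefore only needs to mention the $\NN$-coordinates: for each SCC we require solutions with all transitions used unboundedly, and pumpability (up/down) only for the $\NN$-counters. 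The standard argument then goes through with $\ZZ$-counters treated as free coordinates: a perfect generalised \zvass{d}{} has a run, because the Euler/Kirchhoff cycle together with up- and down-pumping sequences for the $\NN$-counters produces a run whose $\ZZ$-effect can be corrected by adding multiples of the full-support solution.

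The second step is the decomposition and the ranking function. When perfectness fails, we either make some $\NN$-coordinate bounded (and remove it into the control states), or remove a transition, or cut an SCC. Following~\cite{FuYZ24}, the rank of an SCC is measured by the dimension of the vector space spanned by cycle effects. Here, however, cycle effects live in $\ZZ^{d+k}$ with $k$ unbounded, so the naive rank would range over dimension $d+k$ and give Ackermannian bounds. Instead, we plan to measure the rank by the dimension of the projection of the cycle space onto the $\NN$-coordinates, i.e.\ of $\pi_\NN(\mathrm{span}\{\effect(\text{cycles})\})\subseteq\QQ^d$. In addition, we introduce an auxiliary vector space recording the possible $\ZZ$-deviations of cycles whose $\NN$-projection is zero, and use it as a secondary component of the rank. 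The crucial decomposition lemma states that each decomposition step strictly decreases this rank in lexicographic order over a tuple indexed by dimensions $0,\dots,d$, with possibly one extra level accounting for the $\ZZ$-space collapse. The sizes must grow at most by an elementary (indeed $\Ff_2$-bounded) function per step.

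The third step is the complexity accounting. With a rank vector of length about $d+1$, plus the extra level for the $\ZZ$-deviation space, and size blowups controlled as in~\cite{FuYZ24}, the standard length-function theorem for controlled bad sequences (see~\cite{Schmitz16}) bounds the depth of the decomposition tree by $F_{d+2}$ of an elementary function of the input. Since each node is processed in elementary time, the whole procedure runs in $\Ff_{d+2}$. For fixed $k$ the extra level can be absorbed, which gives the $\Ff_{d+1}^{(k)}$ entries in the tables.

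The main obstacle is the decomposition lemma. We must show that when a perfectness condition fails only because of the $\ZZ$-counters (for example, no solution realises the required $\ZZ$-effect with full support), we can still refine the system in a way that strictly decreases the rank. The $\NN$-projected cycle space might not shrink in that case. This is exactly why we introduce the $\ZZ$-deviation vector space. We would first try to prove that such a failure makes some transition boundedly used, so that removing it shrinks either the $\NN$-cycle space or the deviation space. A second difficulty is bounding the size blowup in $k$: the deviation space can have dimension up to $k$, and we must ensure it contributes only one level of the hierarchy, not $k$ levels.
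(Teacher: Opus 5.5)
Your architecture matches the paper's: generalised VASS, an ILP whose only $\ZZ$-constraint is one global effect equation, perfectness and pumpability stated for $\NN$-counters only, and a rank built from $\NN$-projected cycle spaces plus a $\ZZ$-deviation space. However, the two load-bearing steps are set up wrongly or left open.

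First, the deviation space must be the single \emph{global} space $\vectSet{V}_{\ZZ}(\Vv)=\sum_{q}\vectSet{V}(\Vv,q)\cap(\{0^d\}\times\QQ^k)$. Its dimension must be the \emph{most} significant coordinate of the rank, not a secondary one, because removing a transition that is bounded only because of the $\ZZ$-equation can cut an SCC into many copies of the same $\NN$-dimension. For instance, take one state with loops of effects $(1;1)$, $(1;0)$, $(-1;0)$ (one $\NN$-counter, one $\ZZ$-counter) and required $\ZZ$-effect $5$. The first loop is bounded, the decomposition turns one state of $\NN$-dimension $1$ into six, and only $\dim\vectSet{V}_{\ZZ}$ drops, from $1$ to $0$. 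Since it is a single coordinate with value at most $k$, it costs one level regardless of $k$, which disposes of your worry about $k$ levels.

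Second, the decomposition lemma you leave open is where the real idea lies (\cref{clm:all-transitions-used}). One removes all bounded transitions in \emph{all} SCCs simultaneously, and shows that if neither $\dim\vectSet{V}_{\ZZ}$ nor $\dim\pi_{\NN}(\vectSet{V}(\Vv_i))$ dropped, then $\Vv_i$ had no bounded transition. Take a cycle through every transition of $\Vv_i$ and match its $\NN$-effect by a rational combination of cycles of the new $\Vv_i'$. The residue lies in $\vectSet{V}_{\ZZ}(\Vv)=\vectSet{V}_{\ZZ}(\Vv')$, so it is cancelled by cycles of $\Vv'$, possibly in other SCCs. Add a large multiple of a full-support homogeneous solution of $\Vv'$ (which exists only because bounded transitions were removed everywhere) and clear denominators. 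This yields a homogeneous solution using every transition of $\Vv_i$, contradicting boundedness.

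Your complexity accounting also does not reach $\Ff_{d+2}$. With elementary control, the standard length bound for lexicographically decreasing sequences in $\NN^{D}$ is $\Fff_{D+2}$. So even the paper's rank in $\NN^{d+1}$ only gives $\Ff_{d+3}$ this way, and your tuple indexed by $0,\dots,d$ plus a $\ZZ$-level is longer still. The paper gains the missing level in three steps:
\begin{enumerate}[(a)]
\item It does not count states of $\NN$-dimension $0$, so the rank is $(\dim\vectSet{V}_{\ZZ},\rank_{\NN}[d],\dots,\rank_{\NN}[1])$.
\item It shows that when only the last coordinate decreases, only $1$-dimensional SCCs were turned into $0$-dimensional ones and the size grows only polynomially.
\item It uses a two-speed bound (\cref{PropositionBadSequenceLengthDoubleControl}): steps decreasing only the last coordinate are absorbed into an elementary control function. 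This leaves an elementary-controlled decreasing sequence in $\NN^{d}$, of length in $\Fff_{d+2}$.
\end{enumerate}
Step (b) is not automatic and has no counterpart in your plan. The new $0$-dimensional pieces can have exponential size, either from exponentially many copies created from an ILP bound $b$, or from the pumpability step, whose blow-up is polynomial only in unary size. With $\ZZ$-counters a $0$-dimensional SCC is essentially an integer VASS, so one needs \cref{LemmaReplaceByPolySize}. It uses Carath\'eodory bounds to replace such an SCC by finitely many polynomial-size linear path schemes with the same reachability relation. Appealing to~\cite{FuYZ24} does not cover this: their $\Ff_d$ analysis cannot carry over unchanged, since \zvass{d}{} can have finite reachability sets of size $F_{d+1}$.
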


The $\Ff_{d+2}$ algorithm that we construct to prove~\cref{thm:klm-d-vass} is an extension of the KLMST algorithm for reachability in $d$-VASS; it is named KLMST after its inventors Kosaraju, Lambert, Mayer, Sacerdote, and Tenney \cite{SacerdoteT77, Mayr81, Kosaraju82, Lambert92}. 
In this section, we do not assume any prior knowledge of this algorithm. 
We will introduce all relevant notions, starting with the general structure of the algorithm.

The two main notions the algorithm revolves around are \emph{generalised VASS} and the subclass of \emph{perfect VASS}.
Let us start by introducing generalised VASS.
Many variants of generalised VASS appear in the literature: marked graph transition sequences in~\cite{Lambert92}, KLM sequences in~\cite{LerouxS19, FuYZ24}, \(Semil\)-VASS in \cite{GuttenbergCL25}.
The terminology ``generalised VASS'' was first used in \cite{Kosaraju82}.

\begin{definition}
    A \emph{generalised VASS} with \(d\) $\NN$-counters and \(k\) \(\ZZ\)-counters extends \hbox{\zvass{d}{k}} as follows.
    Each transition \(e\) that leaves an SCC is additionally labelled with an \hbox{\(\omega\)-configuration} \(test_e \in (\NN \cup \{\omega\})^d\).
    The semantics of such a transition $e$ are defined as follows.
    For \(\vect{x}, \vect{y} \in \NN^{d}\) and \(\vect{z}_1, \vect{z}_2 \in \ZZ^k\), we have \(p(\vect{x},\vect{z}_1) \xrightarrow{e} q(\vect{y}, \vect{z}_2)\) if and only if \begin{enumerate}[(i)]
        \item $e$ is a transition from $p$ to $q$, 
        \item \((\vect{y}, \vect{z}_2)=(\vect{x}, \vect{z}_1)+\effect(e)\), and
        \item \(\vect{x}[j]=test_e[j]\), for all \(j\) such that \(test_e[j] \neq \omega\). 
    \end{enumerate}
\end{definition}

We remark that we always use a binary encoding for generalised VASS.

In a generalised VASS, a transition \(e\) leaving an SCC may zero test some \(\NN\)-counters. 
As already mentioned in Section \ref{sec:lower-bounds}, a zero test can can be simulated using a dedicated \(\ZZ\)-counter, hence generalised \zvass{d}{} are equivalent to \zvass{d}{}, though the explicit tests can be convenient.

We define the reachability problem for generalised VASS in the obvious way.
Given a generalised VASS \(\Vv\), and two configurations \(\vect{s}, \vect{t}\), does \(\vect{s} \xrightarrow{*}_{\Vv} \vect{t}\) hold?

Perfect VASS are harder to define; the definition of perfectness will appear after the following introduction to the overall structure of the reachability algorithm.
For now, one should think that perfectness is some relatively easy-to-decide property of generalised VASS that is strong enough to imply that reachability holds.

To decide reachability, one first checks whether the current reachability query \((\vect{s}, \Vv, \vect{t})\) is perfect.
If it is, then reachability holds.
Otherwise, \((\vect{s}, \Vv, \vect{t})\) is decomposed into finitely many queries \((\vect{s}_i, \Vv_i, \vect{t}_i)\) 
such that 
(i) $\vec{s} \xrightarrow{*}_\Vv \vec{t}$ if and only if for some \(i\), $\vec{s}_i \xrightarrow{*}_{\Vv_i} \vec{t}_i$, and 
(ii) each instance \((\vect{s}_i, \Vv_i, \vect{t}_i)\) is smaller than the original instance \((\vect{s}, \Vv, \vect{t})\) with respect to a well-founded rank.
Finally, we solve these smaller queries by recursion.
Termination and the complexity of this algorithm will be based upon the well-founded rank. 
Correctness of the algorithm relies on the following lemmas.

\begin{restatable}{lemma}{LemmaPerfectnessImpliesRun} \label{LemmaLabelPerfectnessImpliesRun}
    If \((\vect{s}, \Vv, \vect{t})\) is perfect, then $\vec{s} \xrightarrow{*}_\Vv \vec{t}$. 
\end{restatable}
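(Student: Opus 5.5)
We will follow the classical KLMST argument (Kosaraju, Lambert), adapted so that the $\ZZ$-counters are simply never constrained to be nonnegative. Perfectness, to be defined next, will consist of three kinds of conditions. First, the characteristic system of linear equations (Kirchhoff constraints on transition multiplicities, counter updates along the chain of SCCs, and the tests $test_e$) has a solution in which every transition of every SCC is used, so there are unboundedly large solutions with full support. Second, in each SCC the $\NN$-counters that are unbounded in the characteristic system admit pumping: there is a path from the entry configuration that strictly increases them (up-pumping), and symmetrically a path backwards from the exit configuration (down-pumping). Third, the $\ZZ$-counters impose no pumping requirement at all, since they never restrict firability. The plan is to derive a run from these three ingredients.

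\textbf{Step 1.} Fix a solution $h$ of the characteristic system with full support on the cycles of every SCC. Using the solutions of the homogeneous system, scale $h$ to $h + N h_0$ for a large $N$, where $h_0$ is a homogeneous solution with full support on each SCC. For each SCC $\Vv_i$ take the up-pumping path $\alpha_i$ and the down-pumping path $\beta_i$. Writing $\vec{e}_i$ for $\effect(\alpha_i)$ and $\vec{e}'_i$ for $\effect(\beta_i)$, note that these effects also move the $\ZZ$-counters, arbitrarily. Then the remaining multiplicity $h + N h_0 - N\cdot(\text{Parikh of } \alpha_i) - N\cdot(\text{Parikh of } \beta_i)$ must still be a nonnegative Kirchhoff-consistent cycle multiset whose total effect compensates, on all $d+k$ coordinates, for $N(\vec{e}_i + \vec{e}'_i)$.

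\textbf{Step 2 (main obstacle).} Ensuring this compensation is the crux. In standard KLMST one uses that $\vec{e}_i+\vec{e}'_i$ lies in the span of effects of cycles through full-support homogeneous solutions. Here we need it on the $\ZZ$-coordinates too. My plan is to build the $\ZZ$-counters into the characteristic system exactly like the $\NN$-counters, with $\ZZ$ rather than $\NN$ variables for their intermediate values. Perfectness then demands that the period vector space generated by the homogeneous solutions (equivalently, the vector space of possible deviations of the $\ZZ$-counters that the paper alludes to) contains the effects of the pumping paths. This would be stated as an explicit clause of perfectness, and it is where the decomposition step must later show that non-perfect instances can be refined; here we only use it. With that clause, $\alpha_i\beta_i$ combined with $N h_0$ minus a correction realisable as a positive-support homogeneous solution gives an Euler-connected multiset. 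Since the SCC is strongly connected and the support is full, this multiset decomposes into a single cycle $\gamma_i$ from the SCC's entry state.

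\textbf{Step 3.} Assemble the run as $\alpha_1^N \gamma_1 \beta_1^N e_1 \alpha_2^N \cdots$, where $e_i$ are the inter-SCC transitions. During $\alpha_i^N$, pumped $\NN$-counters grow linearly in $N$, while counters fixed by the $\omega$-configuration stay at the exact prescribed values. During $\gamma_i$ the pumped counters have order-$N$ slack, which dominates any bounded dip for large $N$. $\beta_i^N$ is fine by the symmetric backward argument. The tests $test_e$ hold because the characteristic system pins the tested coordinates. The $\ZZ$-counters need no check beyond matching the final values, which is guaranteed by the linear equations. Choosing $N$ larger than the maximal negative dips of $\gamma_i$ (which is linear in the size of $h$) completes the proof.
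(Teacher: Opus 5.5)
The gap is in Step~2. To get compensation you add a new clause to perfectness (the space spanned by the homogeneous solutions must contain the effects of the pumping paths), and you say you ``only use it here''. That clause is not part of the definition of perfectness in the statement, and you never derive it from conditions (i)--(iii), pumpability and the rigid-counter condition. So what you prove is the lemma for a strictly stronger notion. Taken literally, the clause even fails on perfect instances. Take a single state $q$ with loops of effect $(1,1)$ and $(-1,-1)$ (one $\NN$-counter, one $\ZZ$-counter) and the query $q(0,0)\to q(0,0)$. Every homogeneous solution has effect $\vec{0}$ and $\vectSet{V}_{\ZZ}(\Vv)=\{\vec{0}\}$, yet every admissible $\up$ has effect $c\cdot(1,1)$ with $c>0$.

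More importantly, the obstacle is illusory, and your Step~1 already contains the fix. Using (ii), scale a full-support homogeneous solution $\vec{h}$ so that $\vec{h}[x_e] > \psi(\up_i)[e]+\psi(\dwn_i)[e]$ for every transition $e$ of $\Vv_i$. Then $\vec{h}|_{E_i}-\psi(\up_i)-\psi(\dwn_i)$ is a positive, Kirchhoff-balanced vector on a strongly connected graph, hence the Parikh image of a cycle $\diff_i$. The block $\up_i^m\,\rho_{\ZZ,i}\,\diff_i^m\,\dwn_i^m$ has Parikh image exactly $(\vec{sol}+m\vec{h})|_{E_i}$. So the assembled run realises the ILP solution $\vec{sol}+m\vec{h}$:
the global $\ZZ$-equation, which is homogeneous for $\vec{h}$, yields the correct $\ZZ$-target;
the per-SCC $\NN$-equations give entry and exit values $(\vec{sol}+m\vec{h})[\vect{x}_i]$ and $(\vec{sol}+m\vec{h})[\vect{y}_i]$;
and the tests hold because $\vec{h}$ vanishes on tested coordinates.
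No per-SCC compensation of $\effect(\up_i)+\effect(\dwn_i)$ on the $\ZZ$-coordinates is ever needed. Your reformulation with free $\ZZ$-valued intermediate variables is harmless, since it telescopes to the paper's single global equation.

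The nonnegativity argument in Step~3 also needs repair, because you have the roles of the counters reversed. $\up_i$ is required to be strictly positive on the non-rigid counters with $test_{e_i}[j]\neq\omega$, so these counters do not ``stay at the exact prescribed values''. It is the $\omega$-counters that may decrease along $\up_i^m$ (and along $\dwn_i^m$). What protects them is condition (iii), which you never use. It gives homogeneous solutions with $\vec{h}[x_{i,j}]>0$ on $\omega$-coordinates. Scale $\vec{h}$ beyond the per-iteration losses of $\up_i$ and $\dwn_i$ before fixing $m$; then every non-rigid counter is $\Omega(m)$ right after $\up_i^m$ and right before $\dwn_i^m$. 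Rigid counters are covered by the syntactic clause of perfectness.

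Finally, your $\gamma_i$ realises a multiset that depends on $N$, so ``choose $N$ larger than the dips of $\gamma_i$'' is circular. Instead take $\gamma_i=\rho_{\ZZ,i}\,\diff_i^N$ with $\diff_i$ fixed independently of $N$. At iteration boundaries of $\diff_i^N$ the counter values interpolate linearly between two configurations that are $\Omega(N)$ on all non-rigid counters. Inside each iteration there is only an $N$-independent dip, so large $N$ keeps everything nonnegative.
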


\begin{restatable}{lemma}{LemmaKLMSTDecomposition} \label{LemmaLabelKLMSTDecomposition}
    There is an elementary-time algorithm that decides whether a given query \((\vect{s}, \Vv, \vect{t})\) is perfect, and if not outputs finitely many queries \((\vect{s}_i, \Vv_i, \vect{t}_i)\) such that \(\rank(\Vv_i) < \rank(\Vv)\) and $\vec{s} \xrightarrow{*}_\Vv \vec{t}$ if and only if for some \(i\), $\vec{s}_i \xrightarrow{*}_{\Vv_i} \vec{t}_i$.
    
    Furthermore, if only the least important coordinate of the rank decreases, then every query \((\vect{s}_i, \Vv_i, \vect{t}_i)\) has polynomial size with respect to the size of \((\vect{s}, \Vv, \vect{t})\).
\end{restatable}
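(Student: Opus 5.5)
The plan is to make the lemma concrete by fixing the definition of perfectness and of $\rank$ that the rest of this section will use. I then check the three claims one decomposition step at a time. For perfectness I take the classical KLMST conditions, adapted to $\ZZ$-counters. Write $\Vv$ as a linear sequence of SCCs $\Vv_1,\dots,\Vv_s$ joined by the test-labelled transitions. Attach to the query its \emph{characteristic system}: an integer linear system whose variables are the transition multiplicities of each SCC and the intermediate entry and exit configurations, subject to the following constraints:
\begin{itemize}
\item Kirchhoff equations within each SCC;
\item effect equations on all $d+k$ counters;
\item the $\omega$-tests, treated as equalities;
\item the endpoints $\vec{s}$ and $\vec{t}$.
\end{itemize}
The query is \emph{perfect} if three conditions hold. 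First, this system has a solution. Second, every transition variable, and every $\NN$-coordinate left as $\omega$ in an intermediate configuration, is unbounded on the solution set. Third, each SCC is forward-pumpable from its entry configuration and backward-pumpable into its exit configuration on all its $\omega$-coordinates, where only the $\NN$-counters are constrained and the $\ZZ$-counters are free.

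For $\rank$ I take, as announced in the overview, a lexicographically ordered vector. For each SCC it records the pair consisting of the dimension of the vector space spanned by its cycle effects on the $\NN$-counters and the dimension of the additional space of possible $\ZZ$-counter deviations. The least important coordinate counts the SCCs, or equivalently the number of transitions, at the lowest level of this ordering.

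Deciding perfectness is elementary. Satisfiability of the characteristic system is in $\np$ by small-solution bounds for integer linear programming. Unboundedness of a variable reduces to checking whether the homogeneous system has a solution that is positive in that variable, which is again an integer linear programming question. Pumpability is a coverability question in which the $\ZZ$-counters are projected away. By Rackoff's bound it is witnessed by runs of doubly-exponential length in the $\NN$-dimension, hence decidable in elementary time.

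When some condition fails, I produce the decomposition as follows.
\begin{itemize}
\item If the characteristic system is unsatisfiable, output no queries at all.
\item If a transition variable is bounded, the Pottier-style bounds give an explicit, at most exponential, bound $N$ on it. Remove that transition from its SCC and instead insert it, with guessed positions and multiplicities at most $N$, between copies of the remaining sub-SCCs. The cycle spaces can only shrink or the SCC splits, so the rank decreases.
\item If an $\NN$-coordinate is bounded, or pumpability fails, the coverability bound gives an elementary bound on the values that coordinate takes inside the SCC. Encode that coordinate in the finite control, turning it into a concrete test value. This strictly lowers the $\NN$-dimension of the affected SCC's vector space.
\end{itemize}
In all three cases the reachability relation is preserved by construction, because every run of $\Vv$ is a run of one of the $\Vv_i$ and conversely.

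The main obstacle is the polynomial-size clause, together with making sure that the new vector space really decreases in the $\ZZ$-deviation case. For the size clause, I will show that the operations which leave every coordinate except the least one unchanged are precisely the bounded-transition splits in which the cycle spaces are preserved. In that situation the SCC is cut into at most $|T|$ pieces linked by copies of the removed edges. The guessed multiplicities and test values are at most exponential and are written in binary, since generalised VASS always use binary encoding, so they contribute only polynomially many bits. No state blow-up from encoding a counter in the control occurs in this case, because that operation always lowers a more important coordinate. Hence every resulting $(\vec{s}_i,\Vv_i,\vec{t}_i)$ has size polynomial in the size of $(\vec{s},\Vv,\vec{t})$.
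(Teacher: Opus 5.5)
There are two genuine gaps, and both sit exactly where the paper's proof has its new content.

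The first gap is the rank decrease when bounded transitions are removed. The sentence ``the cycle spaces can only shrink or the SCC splits, so the rank decreases'' does not follow. Shrinking is only non-strict, and splitting an SCC into copies with the same cycle space \emph{increases} the count at that level. What is missing is the paper's \cref{clm:all-transitions-used}: if $\dim\vectSet{V}_{\ZZ}$ is unchanged and an SCC keeps its $\NN$-projected dimension, then all its transitions lie in the support of a homogeneous solution, so that SCC was not decomposed at all. The proof goes as follows.
\begin{enumerate}
\item Take a cycle $\rho$ through all transitions of the SCC.
\item Write $\pi_{\NN}(\effect(\rho))$ as a rational combination of cycle effects of the pieces.
\item The remainder has zero $\NN$-part, so it lies in $\vectSet{V}_{\ZZ}(\Vv)=\vectSet{V}_{\ZZ}(\Vv')$; cancel it by cycles of $\Vv'$.
\item Add a multiple of a full-support homogeneous solution of $\Vv'$. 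This yields a homogeneous solution that is positive also on the removed transitions, contradicting their boundedness.
\end{enumerate}
The cancelling cycles may lie in \emph{other} SCCs. This is admissible only because the ILP has one global effect equation for the $\ZZ$-counters. It is also why the statement's rank uses a single global coordinate $\dim\vectSet{V}_{\ZZ}(\Vv)$ of the Minkowski sum over all SCCs, so that $\rank\in\NN^{d+1}$. Your per-SCC pairs essentially track each SCC's full cycle space in $\QQ^{d+k}$. That gives a rank vector whose length grows with $k$. It is not the rank in the statement, and it cannot support the $\Ff_{d+2}$ analysis when $k$ is part of the input.

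The second gap is the polynomial-size clause, which you argue incorrectly.
\begin{itemize}
\item The bounded-transition split does not produce at most $|T|$ pieces. It produces one copy of $(Q_i, E_i\setminus E_i')$ per letter of a word $w$ whose Parikh image comes from an ILP solution. That is up to exponentially many copies of the control graph, and binary encoding of numbers does nothing about the number of states.
\item Storing a counter in the control does not always lower a more important coordinate. For a $1$-dimensional SCC that fails pumpability, it only lowers $\rank_{\NN}[1]$. The stored bound is polynomial in the unary size, hence exponential in the binary size.
\item If the cycle spaces are preserved, nothing decreases at all. ``Only the least coordinate decreases'' means that $1$-dimensional SCCs became $0$-dimensional, cf.\ \cref{LemmaBoundedCounterDecreasesRank}.
\end{itemize}
The paper repairs both blowups with \cref{LemmaReplaceByPolySize}. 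In a $0$-dimensional system the $\NN$-counters are determined by the control state, so it is essentially an integer VASS. A cycle-removal plus Carath\'{e}odory argument, as in \cref{lem:linear-form-runs}, then rewrites its reachability relation as a finite union of relations of polynomial-size $0$-dimensional systems. Without this ingredient the ``Furthermore'' part fails.

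A smaller slip concerns condition (iii). For a bounded entry variable, the counter need not be bounded inside the SCC. So one enumerates the variable's values and fixes $test_{e_i}[j]$, rather than encoding the counter in the control.
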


The rest of this section is structured as follows.
In~\cref{SubsectionDefinitionRank}, we define the rank. In~\cref{SubsectionDefinitionPerfectness}, we define perfectness. 
The proof of~\cref{LemmaLabelPerfectnessImpliesRun}, which is similar to proving the same statement about VASS without additional integer counters, can be found in~\cref{app:klm-upper-bound}.
In~\cref{SubsectionDecidingPerfectnessAndDecomposition}, we prove~\cref{LemmaLabelKLMSTDecomposition}.
In~\cref{SectionKLMSTComplexity}, we will analyse the complexity.

\subsection{Definition of the Rank} \label{SubsectionDefinitionRank}

Roughly speaking, for readers familiar with \emph{cycle spaces}, our rank is an extension of the rank defined by counting the number of states with a given dimension of the cycle spaces~\cite{LerouxS19}.

We start by defining \emph{cycle spaces}. 
For every state \(q \in Q\), we consider the effect vectors of all cycles on state \(q\); let \(\vectSet{V}(\Vv, q)\) be the vector space generated by all effects of cycles on $q$. Equivalently, \(\vectSet{V}(\Vv, q)\) is the vector space generated by all effects of simple cycles \emph{in the SCC} that contains \(q\). 
In particular, notice that \(\vectSet{V}(\Vv, q)\) only depends on the SCC that contains $q$, not on \(q\) itself.

Recall that \(\pi_{\NN}(\vec{v})\) is the projection of a vector $\vec{v} \in \NN^d \times \ZZ^d$ to $(\vec{v}[1], \ldots, \vec{v}[d])$, that is the projection to the \(\NN\)-counters.
We define the \emph{\(\NN\)-part of the rank}, \(\rank_{\NN}\in \NN^d\), as
\[\rank_{\NN}[i] \coloneqq |\{q \in Q : \dim(\pi_{\NN}(\vectSet{V}(\Vv,q)))=i\}|, \text{ for }i \in [d].\]
In other words, $\rank_{\NN}[i]$ is the number of states \(q \in Q\) whose cycle space \(\vectSet{V}(\Vv,q)\) has dimension \(i\) when projected to \(\NN\)-counters.
If the dimension of \(\pi_{\NN}(\vectSet{V}(\Vv,q))\) is \(0\), this state does not influence the rank, thus \(\rank_{\NN}\) has \(d\) components. So far, this is the same as the rank for VASS (without additional $\ZZ$-counters) introduced in~\cite{LerouxS19}.
The main difference is that our rank has the following extra component.
We define the \(\ZZ\)-vector space \(\vectSet{V}_{\ZZ}(\Vv)\) as
\[\vectSet{V}_{\ZZ}(\Vv):=\sum_{q \in Q} \vectSet{V}(\Vv,q) \cap (\{0^d\} \times \QQ^k).\]
In other words, we consider all elements of \(\vectSet{V}(\Vv,q)\) with zero effect on the  \(\NN\)-counters, and take their Minkowski sum. 
Intuitively speaking, this characterises the complexity of the behaviour of the \(\ZZ\)-counters.

Overall, we define the rank as
\begin{equation*}
    \rank \coloneqq (\dim(\vectSet{V}_{\ZZ}(V)), \rank_{\NN}).
\end{equation*}
Importantly, ranks are compared lexicographically with the first component, \(\dim(\vectSet{V}_{\ZZ}(V))\), taking the highest precedence, and \(\rank_{\NN}[j]\) has higher precedence than \(\rank_{\NN}[i]\), for \(j>i\).

\subsection{Perfect Instances of Reachability} \label{SubsectionDefinitionPerfectness}

We assume, without loss of generality, that the given generalised VASS \(\Vv\) is a sequence of SCCs connected by single transitions.
This can be guaranteed by splitting \(\Vv\) into one VASS \(\Vv_\pi\), for every sequence of SCCs \(\pi\) in \(\Vv\).
Observe that, while this decomposition takes exponential time (the number of simple paths), any \(\Vv_\pi\) created fulfills \(\size(\Vv_\pi) \leq \size(\Vv)\). 

The basis of perfectness is an integer linear program (ILP) for approximating reachability.
For readers familiar with KLMST for $d$-VASS (without additional $\ZZ$-counters)~\cite{LerouxS19, FuYZ24, GuttenbergCL25}, we remark that our definition of perfectness will differ only by the choice of ILP.

Let \(\Vv=(\Vv_0 e_1 \Vv_1 \dots e_{\NumberOfSCCs} \Vv_{\NumberOfSCCs})\) be a generalised VASS and let \(\vect{s}=(\vect{s}_{\NN}, \vect{s}_{\ZZ})\in \NN^d \times \ZZ^k\), \(\vect{t}=(\vect{t}_{\NN}, \vect{t}_{\ZZ})\) be two configurations. 
Let \(E_0, \dots, E_{\NumberOfSCCs}\) be the transitions of \(\Vv_0, \dots, \Vv_{\NumberOfSCCs}\), respectively. 
As always, \(d\) denotes the number of \(\NN\)-counters and \(k\) denotes the number of \(\ZZ\)-counters. 
We will use \(i\) as an index to refer to the current SCC, and \(j\) to refer to a particular counter.

There are \(2\NumberOfSCCs d+\sum_{i=0}^{\NumberOfSCCs} |E_{i}|\) variables in the ILP. 
The first \(2 \NumberOfSCCs d\) variables correspond to \(2\NumberOfSCCs\) vectors: for every $i \in \set{0} \cup [\NumberOfSCCs-1]$, $\vect{x}_{i}=(x_{i,j})_j \in \NN^d$, and for every $i \in [\NumberOfSCCs]$, $\vect{y}_{i}=(y_{i,j})_j \in \NN^d$.
These variables represent the counter values when the run enters and respectively leaves each SCC.
We set \(\vect{x}_0:=\vect{s}_{\NN}\) and \(\vect{y}_{\NumberOfSCCs}:=\vect{t}_{\NN}\). 
Finally, we also have a variable \(x_e\) for every transition \(e \in E_i\) in an SCC \(\Vv_i\).
These variables count how often the corresponding transition \(e\) is used.

The ILP is comprised of three sets of equations.
The first are the Kirchhoff equations governing reachability in the underlying graph.
Formally, for every \(i \in \{0\} \cup [\NumberOfSCCs]\) and \(q \in Q_i\), we have the equation 
\begin{equation*}
    \sum_{e \in in(q)} x_e -\sum_{e \in out(q)} x_e=\delta_{q, \qiin}-\delta_{q,\qiout},
\end{equation*} 
where \(in(q)\) is the set of transition incomings to $q$,  \(out(q)\) is the set of transitions outgoing from $q$, and 
\(\delta_{q,q'}\) is the Kronecker delta function that takes value $1$ if $q = q'$, otherwise it takes value $0$.
This requires \(q\) to be entered as often as it is left, with the exception that if $q$ is the initial or final state, then the above equations are adjusted by $\pm1$.

The second set of equations take care of the effect of the run.
For \(\ZZ\)-counters we only require a global effect equation:
\begin{equation*}
    \sum_{i=1}^{\NumberOfSCCs} \effect(e_i) + \sum_{i=0}^{\NumberOfSCCs} \sum_{e \in E_i} \effect_{\ZZ}(e) \cdot x_e = \vect{t}_{\ZZ}-\vect{s}_{\ZZ}.
\end{equation*}
In other words, taking transitions as often as prescribed by \(x_e\), and the inter-SCC transitions \(e_i\) once, produces the desired effect \(\vect{t}_{\ZZ}-\vect{s}_{\ZZ}\) on the $\ZZ$ counters.
For \(\NN\)-counters, we have the following equation for each \(i \in \{0\} \cup [\NumberOfSCCs]\): 
\begin{equation*}
    \sum_{e \in E_i} \effect_{\NN}(e) \cdot x_e = \vect{y}_{i}-\vect{x}_i . 
\end{equation*}
This states that the effects on \(\NN\)-counters should match the difference between the claimed entry vectors $\vect{x}_i$ and exit vectors $\vect{y}_i$ of the SCC.

The final set of equations require the transitions \(e_i\) between SCCs \(\Vv_{i-1}\) and \(\Vv_i\) to be taken correctly (i.e. their equality tests are respected).
For every $i \in [s]$, we add the equation(s)
\begin{align*}
    & \vect{y}_{i-1} + \effect_{\NN}(e_i) = \vect{x}_i \text{ and } \\
    & \vect{y}_{i-1}[j] = test_{e_i}[j], \text{ for every } j \text{ such that } test_v[j] \neq \omega.
\end{align*}

Given a reachability query \((\vect{s}, \Vv, \vect{t})\), we use \(\ILP(\vect{s}, \Vv,\vect{t})\) to denote the above ILP.
Before defining perfectness of \((\vect{s}, \Vv, \vect{t})\), we first define when \(\ILP(\vect{s}, \Vv,\vect{t})\) is perfect.

\begin{definition}
    We say that \(\ILP(\vect{s}, \Vv,\vect{t})\) is \emph{perfect} if
    \begin{enumerate}[(i)]
        \item \(\ILP(\vect{s}, \Vv,\vect{t})\) has a solution;
        \item there is a homogeneous solution \(\vect{h}\) such that \(\vect{h}[x_e]>0\), for every transition \(e\); and 
        \item for every \(i \in [\NumberOfSCCs]\) and counter \(j \in [d]\), if \(test_{e_i}[j]=\omega\), then the variable \(x_{i,j}\) is unbounded in the ILP.
    \end{enumerate}
\end{definition}

Condition (i) is as expected; any run \(\rho\) from $\vect{s}$ to $\vect{t}$ of \(\Vv\) gives rise to a solution to the ILP, so the ILP has to be satisfiable.
Condition (ii) essentially states that every transition can be taken arbitrarily often; this is a typical and crucial aspect of perfectness that we will leverage later.
Condition (iii) states, in words, that if the $j$-th counter is not zero tested when entering an SCC \(\Vv_i\), then the $j$-th counter is unbounded at this point in the run.
This condition is most useful in contraposition: if the variable \(x_{i,j}\) is bounded, then \(test_{e_i}[j] \neq \omega\).
This means that if the $j$-th counter is bounded when entering some SCC, then it is tested for a specific value.

The second half of perfectness now requires that \(test_{e_i}+\effect_{\NN}(e_i)\), the \(\omega\)-configuration reached when entering \(\Vv_i\), is ``pumpable'' (inside the SCC $\Vv_i$) in a sense that will be formalised in the next definition. 
The definition refers to ``rigid counters''. 
A counter \(j\) is \emph{rigid} in $\Vv_i$ if its value is uniquely determined by the control state. 
This means that all cycles in $\Vv_i$ have zero effect on rigid counters.
As one might expect, such counters will be removed the moment they are identified, together with all states where this counter has a negative value.
So for the sake of understanding~\cref{def:pumpable}, one should think that no rigid counters are present.
 
\begin{definition} \label{def:pumpable}
    Let \((\vect{s},\Vv,\vect{t})\) be a query. 
    Let $\qiin$ and $\qiout$ be the states in which the SCC \(\Vv_i\) is entered and left respectively.

    We call \(\Vv\) \emph{forward-pumpable} if, for every SCC \(\Vv_i\), there exists a cycle \(\up_i\) at \(\qiin\) that is enabled at \(test_{e_i}+\effect_{\NN}(e_i)\) and has (strictly) positive effect on all \(\NN\)-counters \(j\) such that (i) \(test_{e_i}[j]\neq \omega\) and (ii) $j$ is not rigid in \(\Vv_i\).

    Similarly, we call \(\Vv\) \emph{backward-pumpable} if, for every SCC \(\Vv_i\), there exists a cycle \(\dwn_i\) at \(\qiout\) that is backwards-enabled at \(test_{e_{i+1}}\) and has a (strictly) negative effect on all $\NN$-counters $j$ such that (i) \(test_{e_{i+1}}[j] \neq \omega\) and (ii) $j$ is not rigid in \(\Vv_i\). 
\end{definition} 

Finally, we are ready to define perfectness. 

\begin{definition} \label{def:perfectnes}
      A given reachability query \((\vect{s}, \Vv, \vect{t})\) is \emph{perfect} if $\ILP(\vect{s}, \Vv, \vect{t})$ is perfect, \((\vect{s}, \Vv, \vect{t})\) is forward-pumpable and backward-pumpable, and for all SCCs \(Q_i\), all counters \(j\) rigid in \(Q_i\) and all states \(q \in Q_i\), the value of counter \(j\) at state \(q\) is \(\geq 0\).
\end{definition}

I.e.\ we require all the properties mentioned before, plus we require that rigid counters do not go negative, which, by definition of rigidity, is a purely syntactic requirement.

The proof that perfectness implies a run from $\vec{s}$ to $\vec{t}$ in $\Vv$ (i.e.\ the proof of~\cref{LemmaLabelPerfectnessImpliesRun}), is essentially the same as for the original KLMST algorithm for reachability in VASS (without additional $\ZZ$-counters).
This is because the additional \(\ZZ\)-counters are automatically handled by the ILP and cannot otherwise disturb the validity of the run.
The proof of~\cref{LemmaLabelPerfectnessImpliesRun} can be found in~\cref{app:klm-upper-bound}.

\subsection{Deciding Perfectness and the Decomposition}
\label{SubsectionDecidingPerfectnessAndDecomposition}

Now that we have set up the definitions, the next step is to decide whether a given instance of reachability is perfect and, if not, we must decompose into a collection of lower rank instances.
These procedures for checking perfectness and decomposing form the heart of KLMST. 
Thankfully most of the checks and decompositions for reachability in \zvass{}{} are the same as for VASS. 
The decrease of \(\rank_{\NN}(V)\), which was used as the rank also in \cite{LerouxS19, FuYZ24, GuttenbergCL25}, can simply be copied to almost all decompositions.

We now provide an overview of the various decomposition steps, starting from perfectness of the ILP, which has three parts.

\begin{description}
    \item[Condition (i).] 
    Checking the existence of a solution to an ILP is in \class{NP}; for example, see~\cite{LinearProgramming}.
    Clearly if the ILP has no solution, then there does not exist a run from $\vec{s}$ to $\vec{t}$ in \(\Vv\).
    Since reachability does not hold, the empty decomposition is returned.
    
    \item[Condition (ii).] 
    This condition is changed substantially compared to VASS, and we will detail the decomposition for this condition shortly. 
    
    \item[Condition (iii).] 
    We can check in \class{NP} whether a variable is bounded, and for bounded variables we can compute, in exponential time, the set of possible values.
    This is because every minimal solution to the ILP has polynomial size~\cite{LinearProgramming} when encoded in binary.
    For every one of the possibly exponentially many different values \(b\) of a variable \(y_{i,j}\), we create a separate generalised VASS \(\Vv_b\) in which we set \(test_{e_i}[j] = b\) at the exit of SCC \(\Vv_i\). 
    Symmetrically, we create a generalised VASS for every possible value of \(x_{i,j}\).
    
    \item [Rigid counters.]
    We can check rigidity of counter \(j\) in SCC \(Q_i\) by checking for every simple cycle in \(Q_i\) that its effect on \(j\) is \(0\). If a rigid counter would take a negative value at some state \(q\), then we simply delete the control state \(q\) (as performed in~\cite{LerouxS19}).
    This decomposition step can clearly only decrease the size of the current instance. 
    
    \item[Pumpability.] 
    This property only concerns \(\NN\)-counters.
    The condition only examines the effect on \(\NN\)-counters, and being enabled does not depend on the values of the \(\ZZ\)-counters.
    Thus, we can equivalently check this condition in the projected VASS \(\pi_{\NN}(\Vv)\), where \(\pi_{\NN}\) removes the \(\ZZ\)-counters to obtain a $d$-VASS.
    Without integer counters, we can therefore apply~{\cite[Lemma 4.13]{LerouxS19}} without modification: on every run, some counter \(j\) is bounded by \(\unarysize{\Vv}^{O(d^d)}\), and the decomposition stores the $j$-th counter in the control state.
\end{description}

We will now detail the decomposition for condition (ii). 
Verifying the condition is straightforward: it suffices to check whether there exists a homogeneous solution to the ILP that assigns a positive value.
This can be done in~\class{NP}. 
Importantly, an ILP has a homogeneous solution assigning \(\vect{h}[x_e]>0\) if and only if the corresponding variable \(x_e\) is unbounded. 
Thus, if condition (ii) does not hold, then the variable \(x_e\) is bounded. 
The decomposition here accordingly removes the transition \(e\) from an SCC with the following strategy.
To accommodate solutions of the ILP with $x_e = b$, we create \(b+1\) copies of \((Q_i, E_i \setminus \{e\})\) and connect them sequentially using the $b$ copies of \(e\). 

There is an important difference in our version.
Originally in~\cite{Kosaraju82}, only one transition was ever removed in a decomposition step.
This, however, leads to slower-decreasing rank and the result is that the entire algorithm runs in greater than Ackermannian time. 
This was subsequently improved by Leroux and Schmitz to simultaneously remove all bounded transitions \(e \in E_i\) in some SCC $\Vv_i$, leading to an Ackermannian upper bound~\cite{LerouxS19}.
Again, our main improvement is here: we remove every bounded transition \(e \in E_i\) in \emph{every} SCC \(\Vv_i\) simultaneously; we do not limit ourselves to decomposing a single SCC in a particular step.

Formally, let \(\Vv = (\Vv_0 e_1 \Vv_1 \dots e_{\NumberOfSCCs} \Vv_{\NumberOfSCCs})\) be a generalised VASS. 
We will compute, for every \(i \in \{0\} \cup [\NumberOfSCCs]\), a decomposition \(Decomp(\Vv_i)\), which is a set of generalised VASS, and return \[Decomp(\Vv)=\{\Vv_0' e_1 \Vv_1' \dots e_{\NumberOfSCCs} \Vv_{\NumberOfSCCs}' : \Vv_i' \in Decomp(\Vv_i)\}.\]
In other words, for every $i$, we choose a VASS from \(Decomp(\Vv_i)\) and connect them with transitions $e_i$ as in $\Vv$.

We define \(Decomp(\Vv_i)\) in the same way as in \cite{LerouxS19} by removing \emph{all} bounded transitions from this SCC and choosing an order to apply them in.
Let \(\Vv_i = (Q_i, E_i)\), and let \(E_i' \subseteq E_i\) be the subset of bounded transitions. 
For every minimal ILP solution \(\vect{s}\), and every word \(w \in (E_i')^{\ast}\) with Parikh image \(\psi(w)=\vect{s}[E_i']\), we create a VASS \(\Vv_{i, \vect{s}, w} \in Decomp(\Vv_i)\) as a repetition of \(\Vv_{\text{local}}=(Q_i, E_i \setminus E_i')\). 
We set \(\Vv_{i, \vect{s}, w} \coloneqq \Vv_{\text{local}} \, w_1 \, \Vv_{\text{local}} \, \cdots \, w_\ell \, \Vv_{\text{local}}\),
where \(w = w_1 \cdots w_\ell\).
It is easy to see that this decomposition preserves the reachability relation, it remains to show that the rank decreases. In the following lemma, we say that a \zvass{}{} $\Vv$ has \emph{dimension} \(k\) if \(k=\max_{q \in Q} \dim(\pi_{\NN}(\vectSet{V}(\Vv, q)))\).

\begin{lemma} \label{LemmaBoundedCounterDecreasesRank}
    The decomposition for condition (ii) decreases the rank.
    Moreover, if only the least significant coordinate decreases, then all \(\Vv_i\) of dimension at least 2 remain unchanged, and every \(1\)-dimensional \(\Vv_i\) either remains unchanged, or is split into a \(0\)-dimensional \(\Vv_i'\).
\end{lemma}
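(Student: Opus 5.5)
We need to show that the rank, compared lexicographically, strictly decreases, and characterize the case where only $\rank_\NN[1]$ drops. The approach is to compare cycle spaces before and after the decomposition, SCC by SCC. In each new VASS $\Vv_{i,\vect{s},w}$, the SCCs are copies of SCCs of $\Vv_{\text{local}}=(Q_i,E_i\setminus E_i')$. Every cycle of such a copy is a cycle of $\Vv_i$. Hence for every new state $q'$, which is a copy of some $q\in Q_i$, we have $\vectSet{V}(\Vv',q')\subseteq \vectSet{V}(\Vv,q)$. In particular $\vectSet{V}_\ZZ(\Vv')\subseteq\vectSet{V}_\ZZ(\Vv)$, so the first coordinate never increases. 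SCCs $\Vv_j$ with $E_j'=\emptyset$ are copied verbatim.

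The key step is to use the homogeneous solutions of the ILP. Fix an SCC $\Vv_i$ with $E_i'\neq\emptyset$; such an SCC exists, since otherwise condition (ii) holds. Let $\vect{h}$ be a homogeneous solution that is positive on every unbounded transition; summing the homogeneous solutions witnessing unboundedness of each $x_e$ gives such an $\vect{h}$. Restrict $\vect{h}$ to $E_i$. It is a Kirchhoff-balanced multiset of transitions, hence a sum of cycles of $\Vv_i$, and it is supported exactly on $E_i\setminus E_i'$. The homogeneous equations force two things. First, its $\NN$-effect in each SCC is $\vect{y}_i-\vect{x}_i$ in the homogeneous version. Second, the total $\ZZ$-effect summed over all SCCs is zero.

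Next, I argue that some bounded transition $e\in E_i'$ lies on a simple cycle $\sigma$ of $\Vv_i$ whose effect is not in the span $W$ of cycle effects in $\Vv_{\text{local}}$. Suppose the contrary: every cycle effect of $\Vv_i$ lies in $W$, so $\vectSet{V}(\Vv,q)=W$ for all SCC-copies. Then $\effect(\sigma)$ is a rational combination of cycles in $\Vv_{\text{local}}$, and adding a large multiple of $\vect{h}$ gives $\sigma+N\vect{h}-(\text{combination})$. This would be a homogeneous solution with $x_e>0$, contradicting boundedness of $e$. The delicate point is making this precise. The homogeneous system constrains $\NN$-parts per SCC, via free variables $\vect{x}_i,\vect{y}_i$, but constrains $\ZZ$-parts only globally. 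So I add to $\sigma$ a combination of local cycles cancelling $\effect(\sigma)$ on both the $\NN$- and $\ZZ$-coordinates, scaled to integers and made nonnegative using $N\vect{h}$, whose support contains all of $E_i\setminus E_i'$. Since $\sigma$ contributes zero effect, the homogeneous equations still hold. Thus every SCC of $\Vv_{\text{local}}$ has $\vectSet{V}$ strictly contained in the old $\vectSet{V}(\Vv,q)$ for states of $Q_i$. This is the step I expect to be the main obstacle: one must handle local SCCs that do not contain all of $E_i\setminus E_i'$, which requires an argument per local SCC using Kirchhoff decomposition of $\vect{h}$ into cycles.

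Finally, I conclude with the case analysis. If the lost dimension is in the $\ZZ$-part, meaning $\vectSet{V}\cap(\{0^d\}\times\QQ^k)$ shrinks, I argue that $\vectSet{V}_\ZZ$ shrinks, and the first coordinate decreases. Otherwise $\dim\pi_\NN(\vectSet{V})$ drops for all states of $Q_i$. Since $\rank_\NN$ is ordered with higher dimensions more significant, the standard Leroux--Schmitz argument applies: all copies, possibly many, land in strictly lower dimensions, so $\rank_\NN$ decreases lexicographically. For the "moreover" part, suppose only $\rank_\NN[1]$ decreases. Then no SCC of dimension $\geq2$ can have been decomposed, since that would decrease a more significant coordinate. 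A decomposed $1$-dimensional SCC must have all its new SCCs of projected dimension $0$, because its $\NN$-dimension strictly dropped.
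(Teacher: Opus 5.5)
\textbf{There is a genuine gap in your final case analysis.}

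Your contradiction argument cancels $\effect(\sigma)$ on both the $\NN$- and the $\ZZ$-coordinates using only cycles of $\Vv_{\text{local}}$. What it proves is that the full space $W$ is strictly contained in $\vectSet{V}(\Vv_i)$. The per-local-SCC issue you flag is not actually an obstacle, since each local SCC's space lies in $W$.

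The rank, however, sees cycle spaces only through $\dim\pi_{\NN}(\vectSet{V}(\Vv,q))$ and through the \emph{global} sum $\vectSet{V}_{\ZZ}(\Vv)$. Suppose the lost direction lies in $\vectSet{V}(\Vv_i)\cap(\{0^d\}\times\QQ^k)$. Nothing you have shown forces $\vectSet{V}_{\ZZ}$ to shrink, because that direction may also be supplied by a zero-$\NN$-effect combination of cycles in another SCC $\Vv_j$, e.g.\ an undecomposed one. Your argument does not rule out that situation. In it, the pieces of $\Vv_i$ would keep their old $\pi_{\NN}$-dimension, and $\Vv_i$ is replaced by many copies of $\Vv_{\text{local}}$, so $\rank_{\NN}$ would not decrease. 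So the sentence ``I argue that $\vectSet{V}_{\ZZ}$ shrinks'' is unsupported. A local drop of the kernel part does not imply a drop of the Minkowski sum over all SCCs. Your ``moreover'' part rests on the same dichotomy and inherits the gap.

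\textbf{The missing idea} is to exploit, in the cancellation itself, the fact you noted but did not use: the ILP constrains the $\ZZ$-effect only globally.

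First split on whether $\vectSet{V}_{\ZZ}(\Vv')=\vectSet{V}_{\ZZ}(\Vv)$. By inclusion, the only alternative is that the first coordinate drops.

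In the equal case, suppose $\pi_{\NN}(W)=\pi_{\NN}(\vectSet{V}(\Vv_i))$.
\begin{itemize}
\item Cancel only the $\NN$-part of $\effect(\sigma)$, using a rational combination of local cycles.
\item The residual lies in $\vectSet{V}(\Vv_i)\cap(\{0^d\}\times\QQ^k)\subseteq\vectSet{V}_{\ZZ}(\Vv)=\vectSet{V}_{\ZZ}(\Vv')$. Hence it is a sum of zero-$\NN$-effect combinations of cycles lying in arbitrary SCCs of $\Vv'$.
\item Subtracting these preserves the Kirchhoff equations, every per-SCC $\NN$-equation, and the global $\ZZ$-equation.
\item All their transitions are unbounded, so adding $N\vect{h}$ and clearing denominators yields a homogeneous solution with $x_e>0$. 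This is a contradiction.
\end{itemize}

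Thus, whenever the first coordinate does not drop, every decomposed SCC has strictly smaller $\pi_{\NN}$-dimension. Taking the maximal dimension among decomposed SCCs then gives both the $\rank_{\NN}$ decrease and the ``moreover'' statement. This is precisely the paper's Claim; the paper uses a cycle through all transitions of $\Vv_i$ instead of your $\sigma$, which makes no difference.
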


\begin{proof}
    Let \(\Vv' \in Decomp(\Vv)\). 
    We will show that \(\rank(\Vv')<\rank(\Vv)\).
    
    \proofsubparagraph*{Case 1:} $\dim(\vectSet{\Vv}_{\ZZ}(\Vv'))<\dim(\vectSet{V}_{\ZZ}(V))$. 
    Since this dimension is the most important component of the rank, the rank has decreased.
    
    \proofsubparagraph*{Case 2:} $\dim(\vectSet{V}_{\ZZ}(\Vv'))=\dim(\vectSet{V}_{\ZZ}(\Vv))$. 
    The new VASS \(\Vv'\) may only have fewer cycle effects than \(\Vv\), so \ \(\vectSet{V}_{\ZZ}(\Vv') \subseteq \vectSet{V}_{\ZZ}(\Vv)\). 
    Since $\vectSet{V}_{\ZZ}(\Vv)$ and $\vectSet{V}_{\ZZ}(\Vv')$ are vector spaces of the same dimension, we deduce that \(\vectSet{V}_{\ZZ}(\Vv')= \vectSet{V}_{\ZZ}(\Vv)\).
    
    Let \(\Vv_i' \in Decomp(\Vv_i)\) be the generalised VASS obtained from \(\Vv_i\) and let \(\vectSet{V}(\Vv_i)\) be the cycle space of \(\Vv_i\) (recall that the cycle space only depends on the SCC). 
    
    \begin{claim}\label{clm:all-transitions-used}
        For every SCC \(\Vv_i\), if \(\dim(\pi_{\NN}(\vectSet{V}(\Vv_i'))) = \dim(\pi_{\NN}(\vectSet{V}(\Vv_i)))\), then all transitions of \(\Vv_i\) were contained in a homogeneous solution \(\vect{h}\). 
    \end{claim}
    Otherwise put, if \(\dim(\pi_{\NN}(\vectSet{V}(\Vv_i'))) = \dim(\pi_{\NN}(\vectSet{V}(\Vv_i)))\), then  \(Decomp(\Vv_i)=\{\Vv_i\}\) was not decomposed.
    
    First we conclude the proof of~\cref{LemmaBoundedCounterDecreasesRank} by explaining how~\cref{clm:all-transitions-used} implies a decrease to the rank.
    Since some transition was bounded, at least one SCC was decomposed. 
    Let~\(d'\) be the maximal dimension such that an SCC of dimension \(d'\) was decomposed. 
    By the above claim, whenever an SCC is decomposed, all resulting states have lower dimension, hence \(\rank_{\NN}(\Vv')[d']<\rank_{\NN}(\Vv)[d']\) has decreased, while \(\rank_{\NN}(\Vv')[i]=\rank_{\NN}(\Vv)[i]\) for all \(i >d'\). 
    This means that the rank decreased. 
    Similarly, the ``moreover'' of~\cref{LemmaBoundedCounterDecreasesRank} follows from~\cref{clm:all-transitions-used}.
\end{proof}
    
\begin{proof}[Proof of~\cref{clm:all-transitions-used}]
    Since \(\Vv_i\) is an SCC, there is a cycle \(\rho\) in \(\Vv_i\) which uses all transitions \(e\) in \(\Vv_i\). 
    As \(\dim(\pi_{\NN}(\vectSet{V}(\Vv_i'))) = \dim(\pi_{\NN}(\vectSet{V}(\Vv_i)))\) has not decreased, and \(\pi_{\NN}(\vectSet{V}(\Vv_i')) \subseteq \pi_{\NN}(\vectSet{V}(\Vv_i))\), we observe that \(\pi_{\NN}(\vectSet{V}(\Vv_i'))=\pi_{\NN}(\vectSet{V}(\Vv_i))\). 
    Therefore, there exist cycles \(\rho_\ell\) in \(\Vv_i'\) and coefficients \(\lambda_\ell \in \QQ\) such that \(\pi_{\NN}(\effect(\rho))=\pi_{\NN}(\sum_{\ell=1}^r \lambda_\ell \cdot \effect(\rho_\ell))\). 
    Viewing the cycles \(\rho_\ell\) as cycles in the SCC \(\Vv_i\), we obtain that \(\effect(\rho)-\sum_{\ell=1}^r \lambda_\ell \cdot \effect(\rho_\ell) \in \vectSet{V}(\Vv_i) \cap \{0^d\} \times \ZZ^k \subseteq \vectSet{V}_{\ZZ}(\Vv)\). 
    Since we are in case 2 of~\cref{LemmaBoundedCounterDecreasesRank}, we have \(\vectSet{V}_{\ZZ}(\Vv) = \vectSet{V}_{\ZZ}(\Vv')\) and obtain \(\effect(\rho)-\sum_{\ell=1}^r \lambda_\ell \cdot \effect(\rho_\ell) \in \vectSet{V}_{\ZZ}(\Vv')\). 
    In other words, there exist cycles \(\rho_{m}'\) in \(\Vv'\) and coefficients \(\lambda_m'\) such that \(\effect(\rho)-\sum_{\ell=1}^r \lambda_\ell \cdot \effect(\rho_\ell)-\sum_{m=1}^{r'} \lambda_m' \cdot \effect(\rho_m')=\vect{0}\).

    We observe that \(sol=\psi(\rho)-\sum_{\ell=1}^r \psi(\rho_\ell)-\sum_{m=1}^{r'} \lambda_m' \cdot \psi(\rho_m')\) fulfills all three sets of equations of the ILP, where \(\psi(\rho)\) is the Parikh image of \(\rho\).
    Since \(sol\) consists only of cycles, it trivially fulfills the Kirchhoff equations, and by construction it both has effect \(0\) on the \(\NN\)-counters in every SCC and has the correct global effect on the $\ZZ$-counters.
    However, the variables may have rational values (instead of values in \(\NN\)). 
    Now we use the fact that all bounded variables have been removed.
    To ensure positive values, take any full support homogeneous solution \(\vect{h}\) of \(\Vv'\) and add it sufficiently many, $c$, times to \(sol\). 
    The resulting vector \(sol' \coloneqq sol+c \cdot \vect{h}\) is still a solution, and now all variables \(x_e\) for edges \(e\) in \(\Vv'\) satisfy \(sol'(x_e)>0\). 
    Observe that \(sol'(x_e)>0\) also for transitions \(e\) of \(\Vv\) which are not in \(\Vv'\): they appeared a positive number of times in \(\rho\), and do not appear in any \(\rho_\ell\) or \(\rho_m'\), since these are cycles in \(\Vv'\). 
    By multiplying the positive rational solution $sol'$ by a common denominator, we obtain a positive integer solution.
\end{proof}

\subsection{The Complexity of the Algorithm} \label{SectionKLMSTComplexity}

There are existing tools to analyse the complexity of reachability in VASS. 
One of the most notable techniques is called ``controlled bad sequence analysis''~\cite{FigueiraFSS11}.
While making some modification, this analysis technique will form the backbone of our complexity analysis.
We will analyse the sequence \((\rank(\Vv_0), \rank(\Vv_1), \dots)\), where \(\Vv_0\) is the initial query, and \(\Vv_{i+1}\) is one of the VASS outputted by~\cref{LemmaLabelKLMSTDecomposition} applied to \(\Vv_i\).

Since the rank is a vector in \(\NN^{d+1}\), and a single decomposition step takes elementary time, one is able to obtain an \(\Ff_{d+3}\) upper bound with the standard bad sequence analysis.
However, we can improve the analysis using the fact that if only the least important component decreases, then there is only a polynomial blowup (which we will establish for our decomposition in a moment).

\begin{definition}
     Let \(m \in \NN\) and let \(f_0,f_1 \colon \NN \to \NN\) be monotone functions with \(f_0(x), f_1(x)\geq x\), for all \(x \in \NN\). 
     Let \(\vect{x}_0, \vect{x}_1, \dots\) be a sequence in \(\NN^d\) that is lexicographically decreasing. 
     Let \(b_0, b_1, \dots\) be the induced sequence of bits where $b_k = 1$ if and only if \(\vect{x}_k\) is greater than \(\vect{x}_{k+1}\) on any of the first \(d-1\) coordinates.
     The sequence \(\vect{x}_0, \vect{x}_1, \dots\) is called \emph{\((f_1,f_0,m)\)-controlled} if \(|\vect{x}_k| \leq f_{b_k}(f_{b_{k-1}}(\dots(f_{b_0}(m))\dots))\), for all \(k \in \NN\).
\end{definition}

This definition exactly describes the situation of Lemma \ref{LemmaLabelKLMSTDecomposition} that, in general, we have an elementary blowup, but if only the least important coordinate decreases, then there is only a polynomial blowup. 
Usually, one considers the case \(f_1=f_0\) and hence the following proposition on \((f,f,m)\)-controlled sequences is standard.

\begin{proposition}[c.f.~\cite{FigueiraFSS11, Schmitz16}] \label{PropositionBadSequenceLength}
    Let \(d \geq 1\) and let \(f\) be a monotone elementary function fulfilling \(f(x)\geq x\), for all \(x\).
    The function \(L_f\) mapping \(m \in \NN\) to the length of the longest decreasing sequence in \(\NN^d\) which is \((f,f,m)\)-controlled fulfills \(L_f \in \Fff_{d+2}\).
\end{proposition}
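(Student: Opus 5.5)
The plan is to follow the standard controlled bad sequence analysis of \cite{FigueiraFSS11,Schmitz16}. The approach is to relate long decreasing sequences in $\NN^d$ to the fast-growing hierarchy by induction on the dimension $d$, peeling off the most significant coordinate.

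First I would set up a length function. Let $L_d(m)$ denote the maximal length of a lexicographically decreasing sequence $\vect{x}_0,\vect{x}_1,\dots$ in $\NN^d$ satisfying $|\vect{x}_k|\le f^{k}(m)$, i.e.\ $(f,f,m)$-controlled. Since the sequence decreases lexicographically, it is finite, so $L_d$ is well defined.

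The key decomposition step splits the sequence into maximal blocks on which the first coordinate is constant.
\begin{itemize}
\item Inside a block, the remaining $d-1$ coordinates form a decreasing sequence in $\NN^{d-1}$. If the block starts at control $m'=f^{k}(m)$, it is itself $(f,f,m')$-controlled, so it has length at most $L_{d-1}(m')$.
\item The first coordinate strictly decreases between blocks. It starts at most $m$, so there are at most $m+1$ blocks.
\end{itemize}
Setting $g(m') \coloneqq f^{L_{d-1}(m')+1}(m')$ for the control reached after one block, this yields
\[ L_d(m) \le \sum_{i=0}^{m} L_{d-1}\bigl(g^{i}(m)\bigr) \le (m+1)\cdot L_{d-1}\bigl(g^{m}(m)\bigr), \]
using monotonicity.

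The base case is $d=1$. A decreasing sequence in $\NN$ starting at value at most $m$ has length at most $m+1$, so $L_1(m)\le m+1$, which lies in $\Fff_{1}$-type growth; this suffices for the induction.

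For the inductive step, suppose $L_{d-1}\in\Fff_{d+1}$, with $L_{d-1}$ bounded by $F_{d+1}^{c}$ for a constant $c$ after composition with elementary functions. Then $f^{L_{d-1}(x)+1}(x)$ is bounded by an iterate of $F_{2}$ (or of the elementary bound on $f$) taken $F_{d+1}^{c}(x)$ times. This is dominated by $F_{d+1}^{c'}(x)$ for $d\ge1$, so $g\le F_{d+1}^{c'}$. Iterating $g$ $m$ times gives $g^{m}(m)\le F_{d+1}^{c'm}(m)\le F_{d+2}(c''m)$, using the defining property $F_{d+2}(n)=F_{d+1}^{n}(1)$ together with the standard monotonicity lemmas. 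Plugging this into $L_{d-1}$ and multiplying by $m+1$ keeps the bound within $F_{d+2}$ composed with a function of lower level, hence $L_d\in\Fff_{d+2}$.

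The main obstacle is the index bookkeeping, in particular making sure the base case starts the induction at the right level. I would check the low dimensions concretely: for $d=1$ the bound is linear; for $d=2$ it should be a tower, i.e.\ in $\Fff_4$ because $f$ is elementary. The other delicate point is the absorption inequalities of the form $F_{k}^{a}\circ F_{k}^{b}(x)\le F_{k+1}(\cdot)$ and $F_k(x)^2\le F_k(x+1)$ for $k\ge 2$. Rather than re-derive these, I would cite them from \cite{Schmitz16}, which proves exactly this statement (the length function theorem for lexicographic orders with elementary control). The present proposition is then just an instantiation of that theorem.
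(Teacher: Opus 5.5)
Your proposal is correct and matches what the paper does: it invokes the standard controlled bad-sequence analysis of \cite{FigueiraFSS11, Schmitz16} without giving a proof of its own, and your induction on the leading coordinate (with the block map $g(x)=f^{L_{d-1}(x)+1}(x)$) is a faithful self-contained instance of that analysis. There are only cosmetic slips: under the paper's definition $|\vect{x}_k|\le f^{k+1}(m)$, so there are up to $f(m)+1$ blocks rather than $m+1$ (harmless, since $F_{d+2}\circ f$ stays in $\Fff_{d+2}$); the domination of $F_2$-iterates by $F_{d+1}^{c'}$ needs $d\ge 2$ rather than $d\ge 1$ (which is all your inductive step uses); and for $d=2$ the length is an iterated tower of type $F_4$, not a tower, since it is the block map $g$ that is tower-like.
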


Using~\cref{PropositionBadSequenceLength}, we can state our upgraded version, that we subsequently use to prove~\cref{thm:klm-d-vass}.
The proof of~\cref{PropositionBadSequenceLengthDoubleControl} can be found in~\cref{app:klm-upper-bound}

\begin{restatable}{proposition}{PropositionBadSequenceLengthBetter} \label{PropositionBadSequenceLengthDoubleControl}
    Let \(d \geq 2\), let \(f_1\) be an elementary function, and let \(f_0\) be a polynomial function. 
    Then the function \(L_{f_1, f_0}\) mapping \(m \in \NN\) to the length of the longest decreasing sequence in \(\NN^d\) which is \((f_1, f_0, m)\)-controlled fulfills \(L_{f_1, f_0} \in \Fff_{d+1}\).
\end{restatable}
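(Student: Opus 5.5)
The idea is to split a controlled decreasing sequence in $\NN^d$ into maximal \emph{blocks} during which only the last coordinate decreases, and to handle the two regimes separately. Inside a block the first $d-1$ coordinates are constant and each step applies only the polynomial $f_0$. Between blocks the first $d-1$ coordinates strictly decrease lexicographically, and each such step applies $f_1$. So I would compress every block into a single "super-step" and show that the resulting sequence of prefixes in $\NN^{d-1}$ is $(g,g,m')$-controlled for a function $g$ that is not too large. Then \cref{PropositionBadSequenceLength}, applied in dimension $d-1$, gives a bound in $\Fff_{(d-1)+2}=\Fff_{d+1}$.

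First I bound one block. Suppose a block starts at an element $\vect{x}_k$ with $|\vect{x}_k|\le N$. Its last coordinate is at most $N$ and strictly decreases at each step, so the block has length at most $N+1$. Every step in the block applies $f_0$, so all norms inside the block are at most $f_0^{N+1}(N)$. Since $f_0$ is a polynomial, say $f_0(x)\le (x+c)^c$, this iterate is at most doubly exponential in $N$, and in particular elementary. Define $h(N):=f_1(f_0^{N+1}(N))$. This $h$ is elementary and monotone with $h(N)\ge N$, and it bounds the norm of the first element of the next block. Some care is needed here: the bit $b_k$ refers to the transition \emph{out of} $\vect{x}_k$. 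Consequently $f_1$ is applied exactly once when crossing from one block to the next, and $f_0$ is applied on the steps inside a block. The block length is bounded in terms of the last coordinate of the block's first element, and that coordinate is bounded by its norm.

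Next I form the sequence $\vect{p}_0,\vect{p}_1,\dots$ in $\NN^{d-1}$, where $\vect{p}_r$ is the common prefix of the first $d-1$ coordinates of block $r$. This sequence is strictly lexicographically decreasing. By the previous paragraph, the first element of block $r$ has norm at most $h^{r}(m)$, hence $|\vect{p}_r|\le h^r(m)$, so the sequence is $(h,h,m)$-controlled. Because $d\ge 2$, we have $d-1\ge 1$ and \cref{PropositionBadSequenceLength} applies: the number of blocks $R$ is at most $L_h(m)$, with $L_h\in\Fff_{d+1}$.

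Finally, the total length is at most $\sum_{r<R}\bigl(h^{r}(m)+1\bigr)\le R\cdot(h^{R}(m)+1)$. The remaining step is to check that this quantity is still in $\Fff_{d+1}$. The main concern is that iterating the elementary function $h$ $L_h(m)$ times could climb up the hierarchy. It does not: $h^{n}(m)$ is bounded by $F_3^{O(n)}(m)$, which lies in $\Fff_3\subseteq\Fff_{d+1}$ as a function of $n$ and $m$ once $d\ge 2$. The class $\Fff_{d+1}$ is closed under composition, so the composition with $L_h$ stays in $\Fff_{d+1}$. I expect this closure and iteration bookkeeping, together with the careful block-boundary accounting of which function controls which step, to be the main point to verify. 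If one prefers, the bound on $R$ can be taken directly from the bad-sequence bound, since the usual proofs of \cref{PropositionBadSequenceLength} already show that $L_h(m)$ dominates iterates of $h$.
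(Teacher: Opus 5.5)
Your proof is correct and is essentially the paper's: delete the steps that only decrease the last coordinate, note that the resulting prefix sequence in $\NN^{d-1}$ is $(f,f,m)$-controlled for the elementary $f(N)=f_1(f_0^{(N)}(N))$ (your $h$), and apply Proposition~\ref{PropositionBadSequenceLength} in dimension $d-1$; you also spell out the transfer of the bound back to the full sequence, which the paper only asserts. There is one slip in that transfer step: an elementary $h$ is bounded by $F_2^{c}$ for some fixed $c$, so $h^{n}(m)\le F_2^{cn}(m)\le F_3(cn+m)$, and that is what places it in $\Fff_3$, whereas the bound $F_3^{O(n)}(m)$ you wrote dominates $F_4(n)$ and would not suffice when $d=2$.
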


Armed with Proposition \ref{PropositionBadSequenceLengthDoubleControl}, we can prove Theorem \ref{thm:klm-d-vass}.

\begin{proof}[Proof of Theorem \ref{thm:klm-d-vass}]
    Let \(\Vv\) be a \zvass{d}{}, let $\vect{s}$ be the initial configuration, and let  $\vect{t}$ be the target configuration.
    Just as sketched, the algorithm repeatedly applies~\cref{LemmaLabelKLMSTDecomposition} to decompose \((\vect{s}_0, \Vv_0, \vect{t}_0)\) into smaller \((\vect{s}_i, \Vv_i,\vect{t}_i)\). 
    Let \(f_1\) be the running time of~\cref{LemmaLabelKLMSTDecomposition}, and let \(f_0\) be the polynomial bound promised if only the least important rank decreases. 
    Then, the sequence of ranks $(\rank(\Vv_0)$, $\rank(\Vv_1), \dots)$, in $\NN^{d+1}$, is an \((f_1, f_0, |\vect{s}|+|\vect{t}|+|V_0|)\)-controlled bad sequence. 
    By~\cref{PropositionBadSequenceLengthDoubleControl}, we decompose at most \(L_{f_1, f_0}(\size(\Vv_0))\) many times, which is in \(\Ff_{d+2}\). 
    Accordingly, the generalised VASS \(\Vv_i\) considered by the algorithm all have size at most \(\Ff_{d+2}\).
    Accordingly,~\cref{LemmaLabelKLMSTDecomposition} is only called on inputs of size at most \(\Ff_{d+2}\). 
    In total, all of these calls can be computed in \(\Ff_{d+2}\).
\end{proof}

Critically, in order to apply the above improved complexity analysis, we must check that if only the least important coordinate of the rank decreases, then the size only increases polynomially.
Strictly speaking, with the decomposition provided so far, this is not the case. We explain here the minor change which has to be performed to obtain \(\Ff_{d+2}\) instead of \(\Ff_{d+3}\).

In our setting ``only the lowest coordinate of the rank decreases'' means: a one-dimensional SCC $\Vv_i$ was decomposed into zero-dimensional ones (which are not counted by the rank).
The blowup can nevertheless be exponential in the pumpability setting: the blowup is only polynomial in the unary size and we store generalised VASS in binary. 
Also, condition (ii) creates \(b\) many copies of an SCC where \(b \in \NN\) is obtained from an ILP solution; $b$ can have exponential magnitude. 
In either case, the blowup can be reduced to a polynomial using the following lemma.
Let \(Rel(\Vv)\) denote the reachability relation of \(\Vv\).

\begin{restatable}{lemma}{LemmaReplaceByPolySize} \label{LemmaReplaceByPolySize}
    There is an elementary-time algorithm that, given a \(0\)-dimensional \zvass{}{} \(\Vv\) whose unary encoding is exponentially large (but uses a linear number of \(\ZZ\)-counters \(k\)), produces finitely many \(0\)-dimensional \zvass{}{} \(\Vv_j\) such that every \(\Vv_j\) has polynomial size (i.e.\ smaller than the size of \(\Vv\)) and \(Rel(\Vv) = \bigcup_{j} Rel(\Vv_j)\).
\end{restatable}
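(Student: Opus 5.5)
The plan is to treat each SCC separately, compute an explicit semilinear description of its reachability relation, and rebuild the SCC as a tiny \zvass{}{} whose numbers are written in binary. Since $\Vv$ is a sequence of SCCs $\Vv_0 e_1 \Vv_1 \cdots e_s \Vv_s$, it suffices to replace each $\Vv_i$ by finitely many small gadgets with the same reachability relation. We then take all combinations, glued by the original $e_i$. Only finitely many (elementary many) combinations arise, which the statement allows.

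\textbf{Step 1: the $\NN$-counters are rigid.} Because $\Vv_i$ is $0$-dimensional, $\pi_\NN(\vectSet{V}(\Vv,q))=\{0\}$, so every cycle in $\Vv_i$ has zero effect on the $\NN$-counters. Hence there is an offset function $c\colon Q_i\to\ZZ^d$, computable in polynomial time in the (exponential) unary size, such that any path from $\qiin$ to $q$ has $\NN$-effect $c(q)$. A path entered with $\NN$-values $\vect{x}$ is then enabled iff $\vect{x}+c(q)\geq \vect 0$ for every state $q$ it visits. The state order is irrelevant here. I first guess the set $S\subseteq Q_i$ of visited states; this gives finitely many branches. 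Within a branch, nonnegativity is equivalent to the single constraint $\vect{x}\geq \vect m_S$, where $\vect m_S[j]=\max_{q\in S}(-c(q)[j])$ (or $0$ if that maximum is negative).

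\textbf{Step 2: the $\ZZ$-effects for a fixed $S$.} The set $E_S$ of $\ZZ$-effects of paths $\qiin\to\qiout$ that visit exactly $S$ is the image, under the linear map $\effect_\ZZ$, of the Parikh vectors satisfying the Kirchhoff equations on $S$ together with the support/connectivity condition (support exactly the transitions used, connected to $\qiin$). Since the support is also fixed by guessing, this image is a projection of the solution set of an ILP. That solution set is a union of linear sets whose base vectors and periods are minimal solutions. By the standard small-solution bounds for ILPs (Pottier / von zur Gathen--Sieveking type), their norms are at most $(\text{unary size})^{O(1)}$ raised to a power polynomial in $k$. Since the unary size is exponential and $k$ is linear, these numbers have polynomially many bits.

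The number of periods could still be large, so I project the periods to $\ZZ^k$ and apply \cref{cor:caratheodory}. For each fixed target, only $2k\log(4kM)$ periods are needed, which is polynomial since $\log M$ is polynomial. I therefore take a further finite union over subsets $Y$ of periods of that size. The outcome is that $E_S$ is a finite union of linear sets $\vect b+Y^*$ in $\ZZ^k$, each with polynomially many generators of polynomial bit size.

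\textbf{Step 3: the gadget.} For each branch $(S,\vect b,Y)$ I build a $0$-dimensional \zvass{}{} with a constant number of states. The first transition has effect $(-\vect m_S,\vect 0)$, which enforces $\vect x\geq\vect m_S$. Next comes a transition with effect $(\vect m_S+c(\qiout),\vect b)$ into a state carrying one self-loop of effect $(\vect 0,\vect y)$ for each $\vect y\in Y$. Finally an exit transition leads out. Its reachability relation is exactly the part of $Rel(\Vv_i)$ coming from that branch, and its binary size is polynomial. The union over all branches recovers $Rel(\Vv_i)$.

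\textbf{Main obstacle.} I expect the hardest part to be Step 2: obtaining bit-size bounds that are polynomial rather than exponential. The ILP has exponentially many variables, one per transition of an exponentially large unary system. The key is that the quantity we actually keep is the projection to only $k$ coordinates. I would first try to bound the $\ZZ$-effects of minimal base solutions directly by a length bound on paths (polynomial in the unary size, since bases can be taken to be simple paths plus at most one copy of each simple cycle needed for connectivity). Periods can be taken to be simple cycles, whose effects have norm at most the unary size. With this, Carathéodory alone controls the count. This avoids invoking general ILP bounds in exponentially many variables.
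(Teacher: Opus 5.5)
\textbf{Gap: the per-SCC assembly does not give polynomial size.} Your Steps 1--3 are essentially sound for a single SCC. The problem is how you assemble them. You replace each SCC of $\Vv=\Vv_0e_1\Vv_1\cdots e_s\Vv_s$ by its own gadget and glue the gadgets together, so every resulting $\Vv_j$ has size about $(s+1)$ times a polynomial. Nothing bounds $s$ polynomially: the hypothesis only says that the unary size of $\Vv$ is exponential. Moreover, one of the two situations the lemma is invoked for has exactly this shape. The decomposition for condition (ii) turns a one-dimensional SCC into $\Vv_{\text{local}}\,w_1\,\Vv_{\text{local}}\cdots w_\ell\,\Vv_{\text{local}}$, where $\ell=b$ can be exponential. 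Your $\Vv_j$ would then be exponentially large, and you lose the polynomial control $f_0$ needed for \cref{PropositionBadSequenceLengthDoubleControl}.

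The missing point is that the Carath\'{e}odory compression must be applied once, globally. The paper removes cycles from a run of the \emph{whole} $\Vv$ until a simple path remains. It then applies \cref{cor:caratheodory} to all removed cycles together, so only $2k\log(4kM)$ cycles survive, independently of the number of SCCs. Finally, it collapses each exponentially long connecting segment into a $t_-/t_+$ pair.

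\textbf{How to repair it within your framework.} Run Steps 1--3 on the entire $\Vv$, from its initial to its final state, instead of per SCC. This works because every cycle of $\Vv$ lies in some SCC and has zero $\NN$-effect. Hence:
\begin{itemize}
\item the offset $c(q)$ is well defined on all of $Q$;
\item enabledness of a path is still the single guard $\vect x\ge\vect m_S$;
\item Parikh vectors with a fixed connected support $F$ are still a base (a simple path plus at most $|F|$ simple cycles) plus nonnegative combinations of simple cycles inside $F$;
\item Carath\'{e}odory over all simple-cycle effects of $\Vv$ yields one constant-state gadget with polynomially many self-loops.
\end{itemize}
If connecting transitions carry tests, each test is an affine equality on $\vect x$ and can be imposed by one tested transition at the start of the gadget.

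\textbf{Drop the ILP-bound argument in Step 2.} Pottier / von zur Gathen--Sieveking bounds grow exponentially with the number of constraints. Here there are exponentially many rows, so these bounds give doubly exponential norms, not norms polynomial in $k$. Your fallback bound via simple paths and simple cycles is the argument that actually works.

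Once globalised, your route is a genuinely different presentation from the paper's. Rigidity of the $\NN$-counters lets you check enabledness once and place all cycles at a single state. The paper instead builds a linear path scheme and attaches each cycle at a specific position along the simple path.
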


The main idea is that zero-dimensionality implies that the \(\NN\)-counters only depend on the control-state.
This means that \(\Vv\) is essentially an integer VASS.
Accordingly, we can obtain (with Carath\'{e}odory arguments) a semilinear description which is a potentially large union of linear sets using few periods.
The full proof of~\cref{LemmaReplaceByPolySize} is similar to the proof of~\cref{lem:linear-form-runs} (in fact it is simpler) and can be found in~\cref{app:klm-upper-bound}.

\section{Conclusion}
\label{sec:conclusion}

Before presenting some future research direction, we will briefly highlight some of the key takeaways, beyond our main results (Theorems~\ref{thm:1-zvass-np},~\ref{thm:unary-2-zvass-psapce-hard},~\ref{thm:unary-3-zvass-tower-hard}, and~\ref{thm:klm-d-vass}).

The first is that our approach to proving reachability in binary \zvass{1}{} is in \class{NP}, which uses Carath\'{e}odory-style bounds and cycle replacement, appears to both be relatively simple and rather robust.
This is witnessed by the fact the fact that our proof can be used as an alternative, arguably simpler, proof that reachability in binary 1-VASS is in \class{NP}~\cite{HaaseKOW09}.
See~\cref{rem:1-VASS-reachability} on~\cpageref{rem:1-VASS-reachability} for a more detailed discussion.

Another takeaway is that~\cref{lem:big-numbers-sufficient} is close in spirit to the following extremely strong kind of statement: if long runs can be enforced by a system, then reachability is computationally hard. 
Roughly speaking,~\cref{lem:big-numbers-sufficient} states that if some \zvass{d}{k} can compute $(2^{f(n)}, 7^{2^{f(n)}})$, where $f$ is a space-constructible function, then reachability in \zvass{d}{k} is \class{Space($f(n)$)}-hard.
The proof of this statement involves a novel technique for simulating counter machines that only uses weak multiplication and carefully encodes the length of the run into the counter values. 
We use this, with $f$ as a linear function, to prove that reachability in unary \hbox{\zvass{2}{}} is \class{PSpace}-hard (\cref{thm:unary-2-zvass-psapce-hard}) by providing, construction that can compute double exponential values (\cref{lem:twoexptriple}). 
We discuss a closely related conjectured generalisation of this lower bound technique shortly (\cref{con:long-runs-iff-hardness}).

The final takeaway that we will reflect on is a development made in~\cref{sec:klm-upper-bound}.
For $d$-VASS, it is known that reachability is in $\Ff_d$~\cite{FuYZ24} and is $\Ff_{(d-3)/2}$-hard~\cite{CzerwinskiJLLO23}.
Given that the total dimension of a \zvass{d}{k} is $d+k$, one would likely therefore expect reachability in \zvass{d}{k} to be in $\Ff_{d+k}$ and to be $\Ff_{\Omega(d+k)}$-hard.
However, the latter is not true: reachability in \zvass{d}{k} is in $\Ff_{d+2}$ (\cref{thm:klm-d-vass}).
To obtain this upper bound, we make several modifications to the now classic KLMST algorithm for reachability in VASS. 
Most notably is the $\ZZ$-vector space $\vectSet{V}_{\ZZ}(\Vv)$ of the possible effects that cycles can have on integer counters (developed in~\cref{SubsectionDefinitionRank}).
We use the (dimension of the) $\ZZ$-vector space to define a more succinct rank for \zvass{d}{k} compared to $(d+k)$-dimensional rank one might obtain by porting from the $d$-dimensional rank for $d$-VASS~\cite{LerouxS19}.

\subsubsection*{Future Directions}

In~\cref{sec:klm-upper-bound}, we have shown that the reachability problem for \zvass{d}{} is in $\Ff_{d+2}$. 
One extra difficulty compared to VASS was that geometrically two-dimensional VASS have a semilinear reachability relation, whereas geometrically \zvass{2}{} need not have a semilinear reachability relation.
In fact, it is already true that geometrically one-dimensional \zvass{}{} may not have a semilinear reachability relation.
Namely, consider the following is a geometrically one-dimensional \zvass{2}{2} that has a non-semilinear reachability relation.
In the drawing, the first two counters are $\NN$-counters and the last two are $\ZZ$-counters. 

\begin{center}
\begin{tikzpicture}[]
    \tikzset{every edge/.append style={font=\large}}

    \node[draw=black, circle, double, minimum size=0.75cm] (A) at (1,0) {\(q\)};
    \node[draw=black, circle, minimum size=0.75cm] (B) at (3,0) {\(p\)};
    \node[white!100] (C) at (0.2,-0.3) {};
        
    \path[->, thick, out=30, in=150] (A) edge[] node[above] {\((0,0,1,0)\)} (B);
    \path[->, thick, out=210, in=330] (B) edge[] node[right] {} (A);
    \path[->, thick, out=150, in=90, looseness=4] (A) edge[] node[left] {\((1,-1,0,1)\)} (A);
    \path[->, thick, out=-15, in=30, looseness=4] (B) edge[] node[right] {\((-1,1,0,0)\)} (B);
    \path[->, thick] (C) edge[] (A);
\end{tikzpicture}
\end{center}

The sum of the two \(\NN\)-counters is constant, so this \zvass{}{} indeed has geometric dimension \(1\). 
If this sum starts at value \(x_{12}\), and the first \(\ZZ\)-counter ends at value \(x_3\), i.e.\ we perform the outer loop exactly \(x_3\) many times, then the value $x_4$ of the second $\ZZ$-counter is bounded above by $x_4 \leq x_{12} \cdot x_3$. 
The second $\ZZ$ counter tracks how often we use the first inner loop, and the worst-case is that we fully iterate the inner loops for every iteration of the outer loop.
The inequality $x_4 \leq x_{12} \cdot x_3$ defines a non-semilinear reachability relation.
Observe however that in this case, the reachability set is semilinear for \emph{fixed} initial value of \(x_{12}\).
Indeed, it is possible to show that reachability \emph{sets} of \zvass{}{} with geometric dimension \(1\) are semilinear.

For every $d$, there exists a \zvass{d}{} $
\Vv_d$ that has a finite reachability sets of size $F_{d+1}$ with respect to the size of $\Vv_d$
This can be obtained by making a simple modification the $\Ff_d$ construction first presented in~\cite{HowellRHY86} (in particular, see Theorem 2.1, or for an alternative presentation, see~{\cite[Proposition 8]{CzerwinskiO21}}).
One needs to eliminate counters $x_k$ and $x_{k+1}$ at the cost of introducing some $\ZZ$-counters.
With this in mind, it is true that KLMST algorithms cannot have lower than $\Ff_{d+1}$ complexity; such algorithms enumerate the entire reachability set if the given VASS is bounded.
We do, however, conjecture that such an upper bound can be obtained which would improve over our $\Ff_{d+2}$ upper bound (\cref{thm:klm-d-vass}).

\begin{conjecture}
    Reachability in \zvass{d}{} is in $\Ff_{d+1}$.
\end{conjecture}

Recently it has been shown that reachability in $3$-VASS is in \elementary, precisely in \twoexpspace~\cite{CzerwinskiJLO25}, while finite reachability sets for $3$-VASS can be of size up to \tower. 
Therefore, one can conjecture that for low-dimension \zvass{d}{}, the complexity of the reachability problem is also below the size of the reachability set. 
Elementary complexity is excluded for $d = 3$, as we have shown that reachability in \zvass{3}{} is \class{Tower}-hard in~\cref{sec:3-VASS+Z}. 
However, we conjecture that elementary upper bounds can still be obtained in \zvass{2}{}.

\begin{conjecture}
    Reachability in binary \zvass{2}{} is in \elementary. 
\end{conjecture}

An interesting phenomenon which we investigated in \Cref{sec:2-VASS+Z}, is the connection between hardness of reachability and the length of the shortest runs for \zvass{}{}.
Although we have only used this to obtain a lower bound for reachability in \zvass{2}{}, it appears to be true that the length of the shortest runs between two configurations is closely related to specific hardness results more generally.
The following conjecture captures this perceived relationship.

\begin{conjecture}\label{con:long-runs-iff-hardness}
    For $d \geq 2$, reachability in \zvass{d}{} is \class{Space($f(n))$}-complete for a space-constructible function $f$ if and only if the shortest runs between the initial and target configuration are of length $2^{f(\Theta(n))}$.
\end{conjecture}

We believe, that~\cref{con:long-runs-iff-hardness} can be useful for determining the exact complexity of reachability in \zvass{}{} when parametrised by the number of $\NN$ counters.

\bibliographystyle{plainurl}
\bibliography{references}

\appendix

\section{Additional Lower Bounds}
\label{app:additional-results}

The fact that reachability in integer VASS is \class{NP}-complete, regardless as to whether the transition effects are encoded in unary or binary, is a folklore result.
However, to the best of our knowledge, the \class{NP} lower bound for reachability in unary integer VASS has not been published.

\begin{proposition}\label{pro:unary-zvass}
    Reachability in unary integer VASS is \class{NP}-hard.
\end{proposition}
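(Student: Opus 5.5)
We reduce from an \np-complete problem whose instances already involve only small numbers, so that unary encoding costs nothing. The natural candidate is \textsc{3-SAT}, or equivalently exact cover / 3-dimensional matching. Given a 3-CNF formula $\varphi$ with variables $v_1,\dots,v_m$ and clauses $c_1,\dots,c_r$, we build an integer VASS with one $\ZZ$-counter per clause, $r$ counters in total (plus possibly one per variable). All transition effects have entries in $\set{-1,0,1}$, so the unary size is polynomial.

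The control structure is a chain of states $p_0, p_1, \dots, p_m$. Between $p_{i-1}$ and $p_i$ there are two parallel transitions, one for ``$v_i$ true'' and one for ``$v_i$ false''. The ``true'' transition adds $1$ to counter $j$ for every clause $c_j$ containing the literal $v_i$; the ``false'' transition adds $1$ to every $c_j$ containing $\neg v_i$. Because the chain is acyclic, each variable is assigned exactly once, which avoids the danger that integer counters lose monotonicity arguments.

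After $p_m$ we must check that every clause counter is at least $1$. Counter $j$ then holds the number of true literals of $c_j$, a value in $\set{0,1,2,3}$. To handle this, we append a chain of states $p_m = r_0, r_1, \dots, r_r$. From $r_{j-1}$ to $r_j$ we place three parallel transitions, subtracting $1$, $2$ or $3$ from counter $j$; in unary these are still small. The target configuration is $r_r(\vec 0)$, reached from $p_0(\vec 0)$.

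Correctness goes as follows. A satisfying assignment gives a run: choose the matching transitions, then subtract the exact number of true literals of each clause, which is between $1$ and $3$. Conversely, a run fixes an assignment by its choices at each $p_i$. Counter $j$ must reach $0$ after exactly one subtraction of $1$, $2$ or $3$, so its value before the subtraction was in $\set{1,2,3}$. Hence clause $c_j$ is satisfied.

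The only subtle point, which I expect to be the main (mild) obstacle, is making sure that $\ZZ$-counters going negative cannot be exploited. No cycles exist anywhere, and every counter is touched in a controlled way: it only increases in the first phase and is decremented exactly once in the second. So negativity never arises in a way that could fake a satisfying witness, and the argument is purely arithmetic. The reduction is clearly polynomial, so together with the folklore \np membership the claim follows.
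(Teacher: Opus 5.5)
Your reduction is correct, and it takes a different route from the paper's. The paper reduces from Subset Sum. It writes each $x_i$ in binary across $d \approx \log \sum_i x_i$ counters, so the ``select'' transition adds the bit vector of $x_i$, and it lets the system choose or skip each $x_i$ along a chain of states. It then adds carry self-loops with effect $(-2,+1)$ on consecutive counters at the last state, so that the accumulated bit vector can be normalised to the binary representation of the target $t$. Correctness there rests on the weighted sum $\sum_j \vec{s}[j]\,2^{j-1}$ being invariant under the carry loops.

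You instead reduce from 3-SAT with one counter per clause. Your control graph is entirely acyclic, and each clause counter is closed off by a ``subtract $1$, $2$, or $3$'' choice. Every counter's final value is therefore determined by the chosen path, and correctness is a direct count with no invariant to maintain. This also means your worry about negative values is vacuous.

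What your version buys is a slightly stronger statement: hardness already holds for loop-free unary integer VASS with all effects in $[-3,1]$. What the paper's version buys is a generic illustration of trading binary encoding for dimension. Subset Sum is essentially reachability in an acyclic binary integer $1$-VASS, and the paper simulates it in unary using only logarithmically many counters plus carry loops.
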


\begin{proof}[Proof of~\cref{pro:unary-zvass}]
    We will reduce from the the subset sum problem.
    Suppose $(x_1, \ldots, x_n, t) \in \NN^{n+1}$ is an arbitrary input to the subset sum problem which asks whether there exists $I \sset [n]$ such that $\sum_{i \in I} x_i = t$.
    Let $d$ be the least integer such that $\sum_{i=1}^n x_i < 2^d$.
    Note that $d \leq \log(\max\set{x_1, \ldots, x_n}) + 1$ is polynomial with respect to the size of the given instance.
    
    We will construct, in polynomial time, a $d$-dimensional integer VASS $\Vv = (Q, T)$
    Roughly speaking, we will use the $d$ counters to store, in binary encoding, the current sum of the selected elements from $\set{x_1, \ldots, x_n}$.
    Let $z_1, \ldots, z_d$ denote the $d$ counters in $\Vv$.
    We will have $n+1$ states $Q = \set{q_0, q_1, \ldots, q_n}$.
    We will have $2n+(d-1)$ transitions in $\Vv$.
    For every $i \in [n]$, there are two transitions: $(q_{i-1}, \vec{v}_i, q_i)$ and $(q_{i-1}, \vec{0}, q_i)$.
    Here, $\vec{v}_i = (b_0, b_1, \ldots, b_{j-1}) \in \set{0,1}^d$ corresponds to the binary encoding of $x_i$, i.e.\ $x_i = \sum_{j=0}^{d-1} b_j\cdot 2^j$.
    We note that taking the transition $(q_{i-1}, \vec{v}_i, q_i)$ corresponds to ``selecting $x_i$'' and taking the transition $(q_{i-1}, \vec{0}, q_i)$ corresponds to ``not selected $x_i$''.
    Additionally, for every $j \in [d-1]$, we will have the transition $(q_n, \vec{u}_j, q_n)$, where $\vec{u}_j$ is the vector such that $\vec{u}_j[j] = -2$, $\vec{u}_j[j+1] = 1$, and for every $k \in [j]\setminus\set{j,j+1}$, $\vec{u}_j[k] = 0$.
    We note that these transitions allow us to shift values from lower order bits to higher order bits.
    
    We choose $\Config{q_0}{0}$ to be the initial configuration.
    Let $\vec{t} \in \set{0,1}^d$ be the vector such that $t = \sum_{j=1}^{d}\vec{t}[j]\cdot2^{j-1}$.
    We choose $\Config{q_n}{t}$ be the target configuration.
    It remains to argue that $\Config{q_0}{0} \xrightarrow{*}_\Vv \Config{q_n}{t}$ if and only if there exists a subset $I \sset [n]$ such that $\sum_{i \in I} x_i = t$. 
    
    First, if $\Config{q_0}{0} \xrightarrow{*}_\Vv \Config{q_n}{t}$, then we can use the transitions selected in the run to tell us which elements from $\set{x_1, \ldots, x_n}$ to select.
    Precisely, if $(q_{i-1}, \vec{v}_i, q_i)$ is taken, then we add $i$ to $I$; otherwise if $(q_{i-1}, \vec{0}, q_i)$ is taken, then we \emph{do not} add $i$ to $I$.
    
    Second, if there exists $I \sset [n]$ such that $\sum_{i \in I} v_i = t$, then we can construct a run from $\Config{q_0}{0}$ to $\Config{q_n}{t}$ by selecting $(q_{i-1}, \vec{v}_i, q_i)$ if $i \in I$, otherwise select $(q_{i-1}, \vec{0}, q_i)$ if $i \notin I$.
    Importantly, by definition of $\vec{v}_i$, we know that after taking the first $n$ transitions, we will arrive in a configuration $\Config{q_n}{s}$ such that 
    \begin{equation*}
        \sum_{j=1}^d \vec{s}[j]\cdot 2^{j-1} = \sum_{i \in I} \sum_{j=1}^d \vec{v}_i[j] \cdot 2^{j-1} = \sum_{i \in I} \sum_{j=0}^{d-1} b_j\cdot2^j = \sum_{i \in I} x_i = t.
    \end{equation*}
    However, it is not a given that $\vec{s} = \vec{t}$.
    For this, we may need to use the transitions $(q_n, \vec{u}_j, q_n)$ to shift the bits in $\vec{s}$ to reach $\vec{t}$.
    For example, it could be true that $t = 6$ and $x_1 = x_2 = 2$ and $x_3 = x_4 = 1$, so $\vec{t} = (1,1,0)$ yet after the first transitions, we obtain counter values $\vec{s} = (0,2,2)$.
    After (some) of these final $(q_n, \vec{u}_j, q_n)$ transitions are taken, we arrive at $\Config{q_n}{t}$.
\end{proof}

\begin{corollary}\label{cor:unary-2-vass+1z}
    Reachability in unary \zvass{2}{1} is \class{NP}-hard.
\end{corollary}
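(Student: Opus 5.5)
We need to prove that reachability in unary \zvass{2}{1} is \np-hard. The plan is to reduce from subset sum, adapting the proof of \cref{pro:unary-zvass}. That proof uses $d$ integer counters to hold a binary representation. Here we may use only one $\ZZ$-counter, but we have two $\NN$-counters, which suffices to compute exponentially large numbers with a polynomial-size unary system. So the idea is to let the two $\NN$-counters generate the (binary-encoded) numbers $x_i$, and let the single $\ZZ$-counter accumulate the selected sum.

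\textbf{Construction.} Let $(x_1,\ldots,x_n,t)$ be a subset-sum instance. For each $i$, write $x_i=\sum_j b_{i,j}2^j$ with $j<m$. I would build a gadget that produces exactly $x_i$ on the $\ZZ$-counter $z$ using Horner's scheme on the $\NN$-counters $a,\bar a$, starting and ending with $a=\bar a=0$:
\begin{enumerate}[(1)]
    \item Set $a\mathrel{+}=b_{i,m-1}$.
    \item For $j=m-2,\ldots,0$, weakly double $a$: first run a loop $a\to\bar a$, then a loop that decrements $\bar a$ by one and increments $a$ by two. Afterwards set $a\mathrel{+}=b_{i,j}$.
    \item Finally run a loop $a \reaches z$, i.e.\ decrement $a$ and increment $z$.
\end{enumerate}
Chaining these gadgets gives states $q_0,\ldots,q_n$. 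From $q_{i-1}$ to $q_i$ the system either passes through gadget $i$ or takes a zero-effect transition. Initial configuration: $q_0(0,0,0)$. Target: $q_n(0,0,t)$.

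\textbf{Correctness.} If a subset $I$ with $\sum_{i\in I}x_i=t$ exists, iterate every loop maximally inside the gadgets for $i\in I$. Each such gadget then adds exactly $x_i$ to $z$ and returns $a=\bar a=0$, so the target is reached.

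The main obstacle is the converse, since loops may be iterated non-maximally and there are no zero tests. Two facts handle it:
\begin{enumerate}[(1)]
    \item \emph{Monotonicity.} The quantity $a+\bar a$ reachable inside a gadget is at most its value under the exact computation. By induction on the doubling steps, the weighted potential $2^{(\cdot)}(a+\bar a)$ never exceeds the exact value. Hence the amount transferred to $z$ in gadget $i$ is some $y_i\le x_i$.
    \item \emph{Leftovers are not lost.} Any residue left in $a$ or $\bar a$ after a gadget is not discarded; it can only be carried over to later gadgets.
\end{enumerate}
To rule out residue being carried forward, I would additionally require $a=\bar a=0$ at every boundary state $q_i$, checked via the final configuration. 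Without extra counters this cannot be done directly. Instead, I would rely on the fact that the final $\NN$-values are $0$ and that quantities are only ever doubled or transferred.

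A cleaner route is to make \emph{every} gadget end by draining both $\NN$-counters into $z$: add a loop $\bar a\to z$ weighted appropriately, so that nothing lingers in $\bar a$. Even so, a carried-over $a$ could be doubled by later gadgets, which only increases $z$ beyond what exactness would give. That breaks a pure $\le$ argument, so I would refine the reduction as follows:
\begin{enumerate}[(1)]
    \item Reduce from a subset-sum variant whose target is the total $T=\sum_i x_i$, padding with complements: use pairs $x_i$ and $M-x_i$ with target $nM$, choosing exactly one of each pair.
    \item Then every pair's gadget must be taken, and a global inequality $z\le$ (exact total) holds, with equality only if every weak step is exact.
    \item Carry-over is prevented by the monotone potential: a unit of $a$ carried across a boundary is worth at most $2^{m}$ times its eventual contribution, but it is produced only by forgoing contributions that exact runs make. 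The potential $\Phi=z+\sum(\text{future multiplier})\cdot(a+\bar a)$ is nonincreasing and equals $nM$ initially; it must equal $nM$ at the end.
\end{enumerate}
Hence every step is exact. The delicate point I expect is defining $\Phi$ so that it is nonincreasing under both doubling and transfers. This is the main technical step.
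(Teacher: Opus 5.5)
\textbf{There is a genuine gap}, and it sits exactly at the step you defer as ``the main technical step'': nothing in your construction forces the weak computations to be exact, and a potential $\Phi$ with the properties you want does not exist.

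First, carry-over is not harmless. Suppose gadget $i$ is run exactly, except that its final loop $a \reaches z$ stops $r$ iterations early, and gadget $i+1$ is then run exactly. The residue $r$ passes through $m-1$ doublings, so across the two gadgets $z$ increases by $x_i + x_{i+1} + r(2^{m-1}-1)$. That is strictly more than the exact total, so ``$z\le$ exact total'' fails. Any nonincreasing $\Phi$ would then have to charge a unit waiting in $a$ before a transfer roughly $2^{m-1}$, the amplification still available in a later gadget. Hence $\Phi$ drops strictly even along exact runs and cannot certify exactness.

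Second, even without carry-over, a one-sided bound only helps if the target is the maximum over \emph{all} control paths and is attained only on solution paths. A single gadget can already output every value between the number of one-bits of $x_i$ and $x_i$ while ending with $a=\bar a=0$. So with target $t$, essentially every $I$ with $\sum_{i\in I}x_i\ge t$ reaches the target. Your padding does not repair this: as written, each chosen value is at most $M$, so an exact total of $nM$ forces every $x_i\in\{0,M\}$, and the padded instance does not encode subset sum at all. Excluding both phenomena would need zero tests on $a$ and $\bar a$, inside the doublings and at gadget boundaries, but your single $\ZZ$-counter is already spent storing the sum.

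The paper uses the $\ZZ$-counter for precisely that purpose, not for accumulating a sum. It takes the reduction of~\cite{ChistikovCMOSW24} from 3-SAT to reachability in a unary 2-VASS with polynomially many zero tests. It then replaces those tests by the controlling-counter technique, with the $\ZZ$-counter as the controlling counter. During each segment, the controlling counter receives the tested counter's effect multiplied by the number of zero tests still to come. Its final value is therefore the sum of the tested counter's values at the test points. Since the tested counter is an $\NN$-counter, each of these values is automatically nonnegative, so requiring the controlling counter to end at $0$ forces every one of them to be $0$. Nonnegativity of the controlling counter itself is never needed, which is why it may be a $\ZZ$-counter. So exactness comes from simulated zero tests, and the hardness comes from the two-counter structure rather than from a sum stored on the $\ZZ$-counter.
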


\begin{proof}
    This corollary follows from the proof ideas used to show that reachability in unary 3-VASS is \class{NP}-hard.
    To be precise, in the proof of~{\cite[Theorem 1]{ChistikovCMOSW24}}, the authors first construct a unary 2-VASS with a small number of zero tests for which reachability coincides with the satisfiability of a formula in 3-CNF form (in particular, see~{\cite[Lemma 6 and Claim 7]{ChistikovCMOSW24}}).
    Then, the authors use the ``controlling counter technique'' to simulate the (small number of) zero tests with an additional counter (in particular, see~{\cite[Lemma 5]{ChistikovCMOSW24}}).

    We claim that this additional controlling counter can be a $\ZZ$-counter.
    Accordingly, this implies that reachability in unary \zvass{2}{1} is \class{NP}-hard. 
    Here, we will give a high-level explanation of why this holds.
    For further details, one can follow the proof of~{\cite[Lemma 5]{ChistikovCMOSW24}}.

    For the sake of demonstrating the controlling counter technique, suppose there is a run from $q_0(x_0)$ to $q_4(x_4)$ in a 1-VASS with zero tests in which the counter is zero-tested three times. 
    Consider splitting the run into four segments (at each of the three zero tests):
    \begin{equation*}
        q_0(x_0) 
        \xrightarrow{\pi_1} q_1(x_1) 
        \xrightarrow{\pi_2} q_2(x_2) 
        \xrightarrow{\pi_3} q_3(x_3) 
        \xrightarrow{\pi_4} q_4(x_4).
    \end{equation*}
    To be clear, $q_0(x_0)$ is the initial configuration, and $q_1(x_1)$ is the configuration reached immediately after the first zero-testing transition is taken.
    This means that $x_1 = 0$.
    Similarly, $q_2(x_2)$ and $q_3(x_3)$ are the configurations reached immediately after the second and third zero-testing transitions are taken, respectively.
    Again, this means that $x_2 = x_3 = 0$. 
    Importantly, the one counter of this 1-VASS is an $\NN$-counter; this means that $x_0 + \oneeffect(\pi_1) \geq 0$, $x_0 + \oneeffect(\pi_1) + \oneeffect(\pi_2) \geq 0$, and $x_0 + \oneeffect(\pi_1) + \oneeffect(\pi_2) + \oneeffect(\pi_3) \geq 0$.
    
    The controlling counter is an additional counter that throughout $\pi_1$, $\pi_2$, $\pi_3$, and $\pi_4$ receives three, two, one, and zero times the effect that the original counter receives, respectively.
    The controlling counter starts with three times the original counter's value.
    Now, let's examine the effect of the entire run on the controlling counter.
    To begin with the controlling counter has value $3x_0$, then it receives $3\cdot\oneeffect(\pi_1)$ throughout $\pi_1$, then $2\cdot\oneeffect(\pi_2)$ throughout $\pi_2$, then $\oneeffect(\pi_3)$ throughout $\pi_3$, and finally $0$ throughout $\pi_4$.
    All together, we therefore know that the final value of the controlling counter is then $3x_0 + 3\cdot\oneeffect(\pi_1) + 2\cdot\oneeffect(\pi_2) + \oneeffect(\pi_3)$.
    Therefore, we aim to end in a configuration in which the controlling counter is zero, we can assert: $3x_0 + 3\cdot\oneeffect(\pi_1) + 2\cdot\oneeffect(\pi_2) + \oneeffect(\pi_3) = 0$.
    Recall that $x_0 + \oneeffect(\pi_1) \geq 0$, $x_0 + \oneeffect(\pi_1) + \oneeffect(\pi_2) \geq 0$, and $x_0 + \oneeffect(\pi_1) + \oneeffect(\pi_2) + \oneeffect(\pi_3) \geq 0$.
    Together, since these three inequalities sum to $3x_0 + 3\cdot\oneeffect(\pi_1) + 2\cdot\oneeffect(\pi_2) + \oneeffect(\pi_3) \geq 0$, we know that each of the three inequalities are actually equalities: $x_0 + \oneeffect(\pi_1) = 0$, $x_0 + \oneeffect(\pi_1) + \oneeffect(\pi_2) = 0$, and $x_0 + \oneeffect(\pi_1) + \oneeffect(\pi_2) + \oneeffect(\pi_3) = 0$.
    This means that the original counter must have had zero value exactly at the moments at which zero-testing transitions were taken. 

    To reiterate: a controlling counter can be a $\ZZ$-counter. 
    Given that the original counter is an $\NN$-counter, the three inequalities $x_0 + \oneeffect(\pi_1) \geq 0$, $x_0 + \oneeffect(\pi_1) + \oneeffect(\pi_2) \geq 0$, and $x_0 + \oneeffect(\pi_1) + \oneeffect(\pi_2) + \oneeffect(\pi_3) \geq 0$ hold.
    Thus, to complete the simulation of zero tests, we only require $3x_0 + 3\cdot\oneeffect(\pi_1) + 2\cdot\oneeffect(\pi_2) + \oneeffect(\pi_3) = 0$.
\end{proof}

\section{Missing Proofs of~\cref{sec:dimension-one}}
\label{app:dimension-one}

\oneVassExponential*

\begin{proof}
    Suppose $\Vv = (Q, T)$ is a \zvass{1}{k}. 
    We can assume, without loss of generality, that for every transition $(p, \vec{u}, q) \in T$, $\zproj{\vec{u}} \neq \vec{0}$ (i.e.\ all transitions change the value of at least one $\ZZ$-counter).
    This can be achieved by replacing a transition $(p, \vec{u}, q)$
    such that $\zproj{\vec{u}} = \vec{0}$ with two transitions: $(p, (\nproj{\vec{u}}, 1, 0, \ldots, 0), p')$ and $(p', (0, -1, 0, \ldots, 0), q)$.    
    
    \proofsubparagraph*{Constructing the OCA.}
    We will construct an OCA $\Aa = (\Sigma, Q', \delta,$ $q_0, F)$ where $\Sigma = \set{a_1, b_1, \ldots, b_k, a_k}$ is the alphabet, $Q' \supseteq Q$ is the set of states, $\delta \sset Q' \times (\Sigma \cup \set{\varepsilon}) \times \set{-1, 0, 1, z} \times Q'$ is the set of transitions, $s \in Q'$ is the initial state, and $t \in Q'$ is the sole final state.

    When constructing the states and transitions of $\Aa$, we will roughly copy the structure of the states and transitions of $\Vv$.
    However, since $\Vv$ may increase (or decrease) its $\NN$-counter by more than 1 on any given transition and since $\Vv$ may update several $\ZZ$-counters simultaneously, we must replace transitions in $\Vv$ with several transitions in $\Aa$ that are connected in series.
    Precisely, we will replace a transitions $(p, \vec{u}, q) \in T$ with $\onenorm{\vec{u}}$ many transitions in $\Aa$.
    There will be $\abs{\nproj{\vec{u}}}$ many $\varepsilon$-transitions that increment $\Aa$'s counter if $\nproj{\vec{u}} \geq 1$, otherwise these transitions will decrement $\Aa$'s counter if $\nproj{\vec{u}} \leq -1$.
    For every $i$, there will also be $\abs{(\zproj{\vec{u}})[i]}$ many $a_i$-transitions which do not change the counter value if $(\zproj{\vec{u}})[i] \geq 1$, otherwise there will be $\abs{(\zproj{\vec{u}})[i]}$ many $b_i$-transitions which do not change the counter value if $(\zproj{\vec{u}})[i] \leq -1$.
    Overall, let $n$ denote the number of states of $\Aa$, we know that 
    \begin{equation*}
        n = \abs{Q} + \sum_{(p, \vec{u}, q) \in T} \onenorm{\vec{u}}.
    \end{equation*}

    \proofsubparagraph*{Comparing words accepted by $\Aa$ to runs in $\Vv$.}
    Let $Z$ be set of all Parikh vectors of all words which contain the same number of `$a_i$'s as `$b_i$'s, for every $i$.
    Precisely, 
    \begin{equation*}
        Z \coloneqq \set{ \vec{v} \in \NN^\Sigma : \vec{v}[a_i] = \vec{v}[b_i]}.
    \end{equation*}
    We know that $Z$ has a small semilinear description (in fact $Z$ is a rather simple linear set).
    For every $1 \leq i \leq k$, let $\vec{p}_i$ be the vector such that $\vec{p}_i[a_i] = \vec{p}[b_i] = 1$ and, for every $j \neq i$, $\vec{p}_i[a_j] = \vec{p}_i[b_j] = 0$.
    Then 
    \begin{equation}\label{eq:Z-semilinear}
        Z = \vec{0} + \set{\vec{p}_1, \ldots, \vec{p}_k}^*.
    \end{equation}
    
    \begin{claim}\label{clm:oca-correctness}
        If there exists a word $w \in \Sigma^*$ such that $w \in L(\Aa)$ and $\psi(w) \in Z$, then there exists a run $s \xrightarrow{\tau}_\Vv t$ such that $\abs{\tau} \leq \onenorm{\psi(w)}$.
    \end{claim}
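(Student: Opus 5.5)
The plan is to read an accepting run of $\Aa$ on $w$ directly as a run of $\Vv$. First I fix what an accepting run of $\Aa$ on $w$ looks like. By construction, every state of $Q' \setminus Q$ is an intermediate state of a unique chain that replaces a transition $(p,\vec{u},q) \in T$. Moreover, such an intermediate state has exactly one incoming and one outgoing transition of $\Aa$. Hence an accepting run of $\Aa$, which starts in $s \in Q$ with counter $0$ and ends in $t \in Q$ with counter $0$, decomposes uniquely into consecutive complete chains. Each chain corresponds to one transition $t_1, t_2, \ldots, t_m$ of $T$, so $\tau = t_1 \cdots t_m$ is a path of $\Vv$ from $s$ to $t$.

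The second step checks that $\tau$ is a run of $\Vv$ from $\Config{s}{0}$ to $\Config{t}{0}$.

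\emph{$\NN$-counter.} Along each chain for $t_r$, the OCA counter changes in total by exactly $\nproj{\eff{t_r}}$. The counter of $\Aa$ at the boundary between chains therefore equals the value of the $\NN$-counter of $\Vv$ after the corresponding prefix of $\tau$. Since the OCA counter is never negative, every intermediate value of the $\NN$-counter is nonnegative. A chain with negative $\NN$-effect consists of decrements only, so nonnegativity at its end already gives nonnegativity of $\Vv$'s counter after firing $t_r$, and firability follows. The final value is $0$ because the OCA accepts with counter $0$.

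\emph{$\ZZ$-counters.} The $\ZZ$-counters have no firing constraint. The $i$-th $\ZZ$-counter of $\Vv$ at the end equals (number of $a_i$ in $w$) $-$ (number of $b_i$ in $w$), because each chain reads exactly $(\zproj{\eff{t_r}})[i]^{+}$ letters $a_i$ and $(\zproj{\eff{t_r}})[i]^{-}$ letters $b_i$. Since $\psi(w) \in Z$, this difference is $0$.

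The final step is the length bound. By the preprocessing assumption, every transition of $T$ has $\zproj{\vec{u}} \neq \vec{0}$. So each chain reads at least one letter, giving $m \le |w| = \onenorm{\psi(w)}$. The only delicate point is the bookkeeping with $\varepsilon$-transitions in the first step: I must make sure that the chain construction really forces a run to traverse whole chains, and never branches or stops inside one. This follows because the fresh intermediate states are unique to each chain and are neither initial nor final.
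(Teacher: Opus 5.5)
Your proof is correct and follows the paper's argument. An accepting run of $\Aa$ on $w$ splits into complete chains, so it is the image of a path of $\Vv$ from $s$ to $t$. Its $\NN$-counter ends at $0$ by acceptance, and its $\ZZ$-counters end at $0$ because $\psi(w)\in Z$. The preprocessing (every transition changes some $\ZZ$-counter) makes each chain read a letter, which gives $\abs{\tau}\leq\onenorm{\psi(w)}$. You spell out the chain decomposition and the firability check, which the paper leaves implicit. One small imprecision: a chain with negative $\NN$-effect also contains letter-reading transitions, so it is not ``decrements only''. This is harmless, since firing in $\Vv$ only needs the counter value at the end of the chain to be nonnegative.
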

    \begin{proof}
        Recall the structure of $\Aa$: it contains the states of $\Vv$ as $Q' \supseteq Q$ and it has several transitions connected in series corresponding to the transitions of $\Vv$.
        This means that any run in $\Aa$ from $\config{s}{0}$ to $\config{t}{0}$ must, ultimately, have the same underlying structure as a run in $\Vv$ from $\Config{s}{0}$ to $\Config{t}{0}$.
        If $w \in L(\Aa)$, then we know that the $\NN$-counters reaches 0 in $\Vv$ and the condition $\psi(w) \in Z$ ensures that the $\ZZ$-counters reach 0 as well.
        
        Recall also that we assumed that every transition of $\Vv$ changed the value of at least one $\ZZ$-counter.
        This means that between every visit of a state $q \in Q$, at least one letter of $w$ must have been read. 
        Thus, the run in $\Aa$ can only visit states in $Q$ at most $\onenorm{\psi(w)}$ many times, and hence there is a run from $s$ to $t$ in $\Vv$ of length at most $\onenorm{\psi(w)}$.
    \end{proof}
    
    \proofsubparagraph*{Small descriptions of OCA Parikh images.}
    We will now outline~{\cite[Lemma 15]{AtigCHKSZ16}}.
    The authors prove that the Parikh image of the language of a given OCA $\Aa$ has a polynomial semilinear representation.
    Precisely, 
    \begin{equation}
        \label{eq:semilinear-parikh-image}
        \psi(L(\Aa)) = \bigcup_{i \in I} \vec{b}_i + P_i^*.   
    \end{equation}  
    and there exists a constant $C'$ such that 
    \begin{enumerate}[(i)]
        \item $\abs{I} \leq C' n^{20}$; 
        \item for every $i \in I$, $\infnorm{\vec{b}_i} \leq C' n^{20}$;
        \item for every $i \in I$, $\abs{P_i} \leq C' n^{20}$; and
        \item for every $\vec{p} \in \cup_{i \in I} P_i$, $\infnorm{\vec{p}} \leq C' n^{20}$.
    \end{enumerate}
    
    \proofsubparagraph*{Semilinear sets have small vectors in common.}
    We will now outline a corollary of~{\cite[Theorem 6]{ChistikovH16}} that will suffice for our purposes.
    Let $S_1 = \bigcup_{i \in I_1} \vec{b}_{1,i} + P_{1,i}^*$ and $S_2 = \bigcup_{i \in I_2} \vec{b}_{2,i} + P_{2,i}^*$ be two semilinear sets in $\ZZ^d$.
    Then $S_1 \cap S_2$ has a semilinear representation $S_1 \cap S_2 = \bigcup_{i \in I} \vec{b}_i + P_i^*$ such that, for every $i \in I$, 
    \begin{equation}
        \infnorm{\vec{b}_i} \leq ((p_1 + p_2)\cdot \max\set{M_1, M_2})^{\Oh(k)}
    \end{equation} 
    where $p_1 = \sum_{i\in I_1} \abs{P_{1,i}}$, $p_2 = \sum_{i \in I_2} \abs{P_{2,i}}$, $M_1$ is the greatest absolute value of any number occurring in $\vec{b}_{1,i}$ and $P_{1,i}$, and $M_2$ is the greatest absolute value of any value occurring in and $\vec{b}_{2,i}$ and $P_{2,i}$.
    The corollary that we can therefore draw is that if $S_1 \cap S_2$ is non-empty, then there exists a vector $\vec{v} \in S_1 \cap S_2$ such that $\infnorm{\vec{v}} \leq ((p_1 + p_2)\cdot \max\set{M_1, M_2})^{\Oh(k)}$. 
    Indeed, we need not consider any points in $S_1 \cap S_2$ other than the base vectors in its semilinear representation.
    
    For our purposes, we wish to study the intersection of $\psi(L(\Aa))$ and $Z$; these are semilinear sets of  $2k$-dimensional vectors of natural numbers.
    The number of periods in the semilinear representation of $\psi(L(\Aa))$ is at most $C'^2n^{40}$ (see conditions (i) and (iii) of~\cref{eq:semilinear-parikh-image}).
    The number of periods in the semilinear representation of $Z$ is $k$ (see~\cref{eq:Z-semilinear}).
    The greatest absolute value of any number occurring in the semilinear representation of $\psi(L(\Aa))$ is $C'n^{20}$ (see conditions (ii) and (iv) of~\cref{eq:semilinear-parikh-image}).
    Lastly, the greatest absolute value of any number occurring in the semilinear representation of $Z$ is just 1 (see~\cref{eq:Z-semilinear}).
    Thus, we conclude that if $\psi(L(\Aa)) \cap Z$ is non-empty, then there exists a vector $\vec{v} \in \psi(L(\Aa)) \cap Z$ such that 
    \begin{equation*}
        \infnorm{\vec{v}} \leq ((C'^2n^{40} + k)\cdot C'n^{20} )^{\Oh(k)}
    \end{equation*}
    
    To conclude the proof of~\cref{lem:1-vass-exponential}, we combine the above fact with~\cref{clm:oca-correctness}.
    There exists path $\tau$ such that $\Config{s}{0} \xrightarrow{\tau}_\Vv \Config{t}{0}$ and 
    \begin{equation*}
        \abs{\tau} 
        \leq ((C'^2n^{40} + k)\cdot C'n^{20} )^{C''k} 
        \leq (C'^3n^{60} \cdot k)^{C''k} 
        \leq (n k)^{Ck} 
    \end{equation*}
    for some constants $C''$ and $C$.
\end{proof}

\unaryNL*

\begin{proof}
    When $k$ is fixed in~\cref{lem:1-vass-exponential}, observe that reachability is witnessed by runs that are polynomial in $n$ (the size of the \hbox{\zvass{1}{k}} encoded in unary) and hence can be guessed on the fly by a non-deterministic log-space Turing machine.
\end{proof}

\linearFormRuns*

\begin{proof}
    Suppose $s(\vec{x}) \xrightarrow{\tau}_{\Vv} t(\vec{y})$ is a run of minimal length from $s(\vec{x})$ to $t(\vec{y})$.
    By~\cref{lem:1-vass-exponential}, we know that $\abs{\tau} \leq (nk)^{Ck}$, for some constant $C \in \NN$.
    We remark that~\cref{lem:1-vass-exponential} provides exponential runs between configurations $\Config{s}{0}$ and $\Config{t}{0}$.
    We can modify $\Vv$ to include two additional states and transitions that initially add the starting counter values and subtract the target counter values, respectively.
    This is why in the statement of ~\cref{lem:linear-form-runs}, we have $n = \abs{Q} + (\sum_{(p, \vec{u}, q) \in T} \onenorm{\vec{u}}) + \onenorm{\vec{x}} + \onenorm{\vec{y}}$.
    
    \proofsubparagraph*{Marking transitions.}
    Recall that $\tau$ is a sequence of transitions.
    For every state $q \in Q$, we will ``mark'' the transitions in $\tau$ which lead to the first and last visits of $q$.
    This means that if a particular state $q \in Q$ is visited multiple times, then there will be two marked transitions in $\tau$ for $q$.
    If, however, $q$ is only visited once, then only one such transition is marked in $\tau$ and if $q$ is not visited, then there are no marked transitions in $\tau$ for $q$.
    Let $e_q^+$ and $e_q^-$ denote the transitions that lead to the first and last visits of $q$ in $\tau$.
    Overall, there are at most $2\abs{Q}$ many marked transitions in $\tau$.
    
    \proofsubparagraph*{Removing cycles.}
    We shall define multisets $C_q^+$ and $C_q^-$, for every state $q \in Q$, which will contain cycles.
    We outline a procedure for carefully extracting cycles from $\tau$ and placing them in $C_q^+$ and $C_q^-$.
    We will identify a sequence $(\tau_0, \gamma_1, \tau_1, \gamma_2, \tau_2, \ldots)$ where $\tau_j$ is a path and $\gamma_j$ is a simple cycle.
    Initially, $\tau_0 = \tau$.
    The sequence of paths and cycles shall possess the following properties:
    \begin{enumerate}[(i)]
        \item $\tau_j$ contains all marked transitions $e_q^+$ and $e_q^-$ for every $q \in Q$;
        \item $\gamma_j$ is a simple cycle in $\tau_{j-1}$; and 
        \item $\tau_j$ is $\tau_{j-1}$ with $\gamma_j$ removed.
    \end{enumerate}
    The process of removing cycles $\gamma_j$ is defined to terminate once $\abs{\tau_j} < \abs{Q}\cdot(2\abs{Q}+1)$.
    After this process terminates, we will then place the cycles $(\gamma_1, \ldots, \gamma_j)$ into $C_q^+$ and $C_q^-$.
    
    Suppose that after $j$ removals of cycles, we have $\tau_j$ and $\gamma_j$ that satisfy (i) -- (iii).
    Let's assume that $\abs{\tau_j} \geq \abs{Q}\cdot(2\abs{Q}+1)$, otherwise the process would have terminated.
    Consider the first $\abs{Q}\cdot(2\abs{Q}+1)$ transitions in $\tau_j$; they can be split into $2\abs{Q}+1$ many contiguous blocks each consisting of $\abs{Q}$ many transitions.
    By the pigeonhole principle, it is true that every block contains a simple cycle; and there exists a block that does not contain a marked transition.
    Hence, we identify $\gamma_{j+1}$ to be a simple cycle from the block without marked transitions.
    Accordingly, we define $\tau_{j+1}$ as $\tau_j$ with $\gamma_{j+1}$ removed. 
    Clearly, $\tau_{j+1}$ and $\gamma_{j+1}$ satisfy properties (i) -- (iii).
    
    \proofsubparagraph*{Grouping cycles.}
    Suppose that $(\tau_0, \gamma_1, \tau_1, \ldots, \gamma_j, \tau_j)$ is the sequence obtained after the above procedure terminates (i.e.\ $\abs{\tau_j} < \abs{Q}\cdot(2\abs{Q}+1)$).
    Now, we will explain how to group $\gamma_1, \ldots, \gamma_j$ into $C_q^+$ and $C_q^-$.
    Let $\gamma \in (\gamma_1, \ldots, \gamma_j)$ and suppose that $q \in Q$ is a state in which the cycle attains its least value on the $\NN$-counter.
    Precisely, suppose $\gamma = ((q_1, \vec{u}_1, q_2), (q_2, \vec{u}_2, q_3), \ldots, (q_m, \vec{u}_m, q_{m+1}))$, where $q_{m+1} = q_1$ and let $i \in \set{0,1,\ldots,m}$ be the index such that $\pi_\NN(\vec{u}_1) + \ldots + \pi_\NN(\vec{u}_i)$ attains its least value.
    Then $q = q_{i+1}$ is the state in which $\gamma$ attains its least value on the $\NN$-counter.
    We shall place $\gamma$ in $C_q^+$ if $\gamma$ has nonnegative effect on the $\NN$-counter, otherwise $\gamma$ is placed in  $C_q^-$ (if it has a negative effect on the $\NN$-counter).
    
    \proofsubparagraph*{Cycle replacement.}
    The following cycle replacement argument will be repeated multiple times, once for each multiset of cycles $C_q^+, C_q^-$.
    Recall that $M = \max\set{\infnorm{\vec{u}} : (p, \vec{u}, q) \in T}$.
    We know that the effect of any simple cycle belongs to $[-M\abs{Q}, M\abs{Q}]^{k+1}$, hence $\max\set{\infnorm{\eff{\gamma}} : \gamma \in C_q^+} \leq M\abs{Q}$.
    There may be multiple copies of the same cycle in $C_q^+$ and there may even be multiple cycles with the same effect in $C_q^+$.
    Let $X$ be the set of effects of cycles in $C_q^+$, precisely $X = \set{\eff{\gamma} : \gamma \in C_q^+}$.
    For an effect vector $\vec{x} \in X$, we shall nominate (arbitrarily) a representative cycle, called $\gamma_\vec{x}$, such that $\gamma_{\vec{x}} \in C_q^+$ and $\eff{\gamma_\vec{x}} = \vec{x}$.
    Let $\vec{s} = \sum_{\gamma \in C_q^+} \eff{\gamma}$; clearly $\vec{s} \in X^*$.
    
    We shall now use Eisenbrand and Shmonin's Carath\'{e}odory style bound for integer cones (\cref{cor:caratheodory}) to reduce the number of distinct cycles that we need to use to gain the overall effect $\vec{s}$.
    There exists a subset $Y \sset X$ such that $\vec{s} \in Y^*$ and $\abs{Y} \leq 2(k+1)\cdot\log(4(k+1)M\abs{Q})$.
    Precisely, suppose $\vec{s} = n_1\vec{y}_1 + \ldots + n_t\vec{y}_t$ for some $\vec{y}_1, \ldots, \vec{y}_t \in Y$ and $n_1, \ldots, n_t \in \NN$.
    
    Our goal now is to obtain exponential bounds on $n_1, \ldots, n_t$. 
    For this, we first need to obtain bounds on the norm of $\vec{s}$.
    Recall that the simple cycles in $C_q^+$ were all obtained by removing cycles from the original run $\tau$. 
    The collective length of cycles in $C_q^+$ therefore cannot exceed the length of $\tau$; $\sum_{\gamma \in C_q^+} \abs{\gamma} \leq \abs{\tau}$.
    Thus, 
    \begin{equation*}
        \infnorm{\vec{s}} 
        \leq \sum_{\gamma \in C_q^+} \infnorm{\eff{\gamma}} 
        \leq \sum_{\gamma \in C_q^+} M \cdot \abs{\gamma}
        \leq M\cdot \abs{\tau} 
        \leq M \cdot (nk)^{Ck} 
        \leq (nk)^{Ck+1}.
    \end{equation*}
    
    Now that we have a bound on $\infnorm{\vec{s}}$, we can obtain bounds on $n_1, \ldots, n_t$.
    For this, we use the following corollary which gives us bounds on the size of solutions of systems of integer linear equations $A\vec{x} = \vec{b}$.
    \cref{cor:ilp} differs from Borosh and Treybig's original bound by separating the norms of the $A$ and $\vec{b}$~\cite{BoroshT76}.
    Moreover, we preemptively use Hadamard's inequality to upper bound the maximum absolute value of any full-rank minor of $A$ by $(\sqrt{r} \cdot \infnorm{A})^r$, where $r \leq \min\set{m,n}$ is the rank of $A$.
    This gives us a bound only in terms of $n$, $m$, $\infnorm{A}$, and $\infnorm{\vec{b}}$.
    
    \begin{corollary}[c.f.~{\cite[Theorem 2]{BoroshT76}}] \label{cor:ilp}
        Let $A$ be an $m \times n$ integer matrix and let $\vec{b}$ be an $m$-dimensional integer vector. 
        If $A\vec{x} = \vec{b}$ has an integral solution $\vec{x} \in \NN^n$, then there exists an integral solution $\vec{x} \in \NN^n$ such that $\infnorm{\vec{x}} \leq (n+1) \cdot m^{\frac{m}{2}} \cdot \infnorm{A}^m \cdot \infnorm{\vec{b}}$.
    \end{corollary}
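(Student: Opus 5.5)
The plan is to derive the corollary directly from Borosh and Treybig's theorem~\cite[Theorem 2]{BoroshT76}. In the form I will use, it says: if $A\vec{x}=\vec{b}$ has a solution $\vec{x}\in\NN^n$, then it has one with $\infnorm{\vec{x}}\le (n+1)\cdot\Delta$. Here $\Delta$ is the maximum absolute value of an $r\times r$ minor of the augmented matrix $[A\mid\vec{b}]$, and $r$ is the rank of $A$. We may assume this rank equals the rank of $[A\mid\vec{b}]$, since the system is solvable. So everything reduces to showing $\Delta\le m^{m/2}\cdot\infnorm{A}^m\cdot\infnorm{\vec{b}}$.

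First I dispose of the degenerate cases. If $\vec{b}=\vec{0}$, then $\vec{x}=\vec{0}$ is a solution and the bound $0$ holds trivially. If $A=0$, then solvability forces $\vec{b}=\vec{0}$. Hence I may assume $\infnorm{A}\ge1$ and $\infnorm{\vec{b}}\ge1$; the monotonicity steps below use exactly these facts.

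Next I bound an arbitrary $r\times r$ minor $D$ of $[A\mid\vec{b}]$ by Hadamard's inequality, which gives $\abs{\det D}\le\prod_{\text{columns } c}\norm{c}_2$, and I split into two cases. If $D$ uses only columns of $A$, each column has Euclidean norm at most $\sqrt{r}\,\infnorm{A}$, so $\abs{\det D}\le(\sqrt{r}\,\infnorm{A})^r$. If $D$ uses the column $\vec{b}$, that column contributes at most $\sqrt{r}\,\infnorm{\vec{b}}$ and the remaining $r-1$ columns contribute $\sqrt{r}\,\infnorm{A}$ each, so $\abs{\det D}\le r^{r/2}\infnorm{A}^{r-1}\infnorm{\vec{b}}$.

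Finally I check that both bounds are dominated by $m^{m/2}\infnorm{A}^m\infnorm{\vec{b}}$. This follows from $r\le\min\{m,n\}\le m$ together with $\infnorm{A},\infnorm{\vec{b}}\ge1$, since then $r^{r/2}\le m^{m/2}$ and $\infnorm{A}^{r}\le\infnorm{A}^m$. Plugging this into Borosh--Treybig gives the stated bound.

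The only delicate point I expect is matching the exact formulation of~\cite{BoroshT76}: whether its minors are taken of $A$ or of $[A\mid\vec{b}]$, and whether of size $r$ or $m$. Hadamard handles either reading, because each minor size is at most $m$ and both column types are controlled. So I would state the theorem in the augmented-minor form and note that the cruder estimate covers every variant.
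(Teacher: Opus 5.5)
Your proposal is correct and takes essentially the same route as the paper: the paper also invokes Borosh and Treybig's $(n+1)\cdot(\text{maximal minor})$ bound and controls the minors by Hadamard's inequality using $r \leq \min\{m,n\}$. Your explicit handling of the minors that contain the column $\vec{b}$, and of the degenerate cases $\vec{b} = \vec{0}$ and $A = 0$, supplies details that the paper's one-line justification leaves implicit.
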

    
    For our purposes, $\vec{s} = n_1\vec{y}_1 + \ldots + n_t\vec{y}_t$ is a system of linear equations in which
    \begin{itemize}
        \item the number of equations is $k+1$,
        \item the number of variables is $t \leq 2(k+1)\cdot\log(4(k+1)M\abs{Q})$, 
        \item the absolute values of the coefficients ($\infnorm{A}$) are $\max\set{\infnorm{\vec{y}_1}, \ldots, \infnorm{\vec{y}_t}} \leq M\abs{Q}$, and
        \item the absolute values of target ($\infnorm{\vec{b}}$) is $\infnorm{\vec{s}} \leq (nk)^{Ck+1}$.
    \end{itemize}
    Accordingly, thanks to~\cref{cor:ilp}, there exists an integral solution $(n_1, \ldots, n_t)$ such that
    \begin{align*}
        \max\set{n_1, \ldots, n_t}
        & \leq 2(k+1)\log(4(k+1)M\abs{Q}) \cdot (k+1)^{\frac{k+1}{2}} \cdot (M\abs{Q})^{k+1} \cdot (nk)^{Ck+1} \\
        & \leq 8(k+1)^2 n^2 \cdot (k+1)^{k+1} \cdot (n^2)^{k+1} \cdot (nk)^{Ck+1} \\
        & \leq 8(2nk)^4 \cdot (2nk)^{k+1} \cdot (nk)^{2k+2} \cdot (nk)^{Ck+1} \\
        & \leq 128(nk)^4 \cdot (nk)^{2k+2} \cdot (nk)^{2k+2} \cdot (nk)^{Ck+1} \\
        & \leq 128(nk)^{(C+4)k+9} \leq (nk)^{C_3k}.
    \end{align*} 
    Here, $C_3 > C$ is some constant.\footnote{In these inequalities, $k \geq 1$ and $n \geq 2$ are assumed. Instances with $n = 1$ are trivial, as there can only be one state in $\Vv$ with zero-effect transitions. Furthermore, one can ensure $k \geq 1$ by introducing an always-zero $\ZZ$-counter.}
    
    Finally, we can replace the multiset of cycles $C_q^+$ with a small collection of cycles that are each taken at most an exponential number of times. 
    We shall replace the multiset of cycles $C_q^+$ with $\gamma_{\vec{y}_1}^{n_1} \, \gamma_{\vec{y}_2}^{n_2} \, \cdots \, \gamma_{\vec{y}_t}^{n_t}$.
    This block of cycles will be taken immediately after the marked transition $e_q^+$ in $\tau_j$.
    We can repeat the previous cycle replacement argument for every multiset of cycles. 
    For every $q \in Q$, $C_q^+$ and $C_q^-$ are replaced by at most $2(k+1)\cdot\log(4(k+1)M\abs{Q})$ many distinct cycles. 
    Suppose, for each $q \in Q$, $\beta_q^+$ is the block of cycles that replaces $C_q^+$ and $\beta_q^-$ is the block of cycles that replaces $C_q^-$.
    In $\tau_j$, we insert $\beta_q^+$ just after $e_q^+$ and $\beta_q^-$ just after $e_q^-$ to obtain a run of the form:
    \begin{equation}\label{eq:replaced-run}
        \rho = \rho_0 \beta_p^+ \rho_1 \cdots \rho_{i-1} \beta_q^+ \rho_i \cdots \rho_{\ell-1} \beta_r^- \rho_\ell,
    \end{equation}
    where $p$, $q$, and $r$ are some arbitrary states for demonstrative purposes.
    To see why the run in~\cref{eq:replaced-run} matches the desired form (as per \cref{eq:lps}), notice that $\beta_q^+, \beta_q^-$ can have empty paths $\varepsilon$ inserted between their cycles.
    For example, one can see $\beta_q^+$ as $\gamma_{\vec{y}_1}^{n_1} \, \varepsilon \,  \gamma_{\vec{y}_2}^{n_2} \, \varepsilon \, \cdots \, \varepsilon \, \gamma_{\vec{y}_t}^{n_t}$.
    
    \proofsubparagraph*{Bounding size parameters of the linear path scheme.}
    First, for (1), we know that $\ell$ is at most the product of the number of blocks and the maximum number of cycles in each block.
    Thus
    \begin{align*}
        \ell 
        & \leq 2\abs{Q} \cdot 2(k+1)\cdot\log(4(k+1)M\abs{Q}) \\
        & = 4\abs{Q}\cdot(k+1)\cdot\log(4(k+1)M\abs{Q}) \\
        & \leq 8k\abs{Q} \cdot \log(8kM\abs{Q}) \\
        & = 8k\abs{Q} \cdot (\log(8) + \log(k) + \log(M) + \log(\abs{Q})) \\
        & \leq 24k^2\abs{Q}^2\cdot\log(M).
    \end{align*}
    
    For (2), we know that the underlying path $\rho_0 \rho_1 \cdots \rho_\ell = \tau_j$ and recall that the cycle removal procedure terminated once $\abs{\tau_j} < \abs{Q} \cdot (2\abs{Q}+1)$. 
    Hence, for $C_2 = 3$, $\abs{\rho_0\rho_1\cdots\rho_\ell} \leq C_2\abs{Q}^2$.
    
    For (3), recall that cycles in the block $\beta_q^+$ are taken $n_i \leq (nk)^{C_3k}$ many times.
    
    Lastly, for (4), we know that the cycles in the blocks $\beta_q^+, \beta_q^-$ were selected as representatives from the multisets of cycles $C_q^+, C_q^-$ that the cycle removal procedure selected.
    Recall that, by definition, the cycle removal procedure only selected simple cycles.
    
    \proofsubparagraph*{Correctness of the linear path scheme.}
    To conclude the proof of~\cref{lem:1-vass-exponential}, it remains to prove that, indeed, $s \xrightarrow{\rho}_\Vv t$.
    
    First recall that, when transforming $\tau$ into $\rho$, we removed cycles to obtain $\tau_j$ and multisets of cycles $C_q^+, C_q^-$, for each $q \in Q$.
    Given properties (ii) and (iii) of the cycle replacement procedure, 
    \begin{equation*}
        \eff{\tau} = \eff{\tau_j} + \sum_{q \in Q} \sum_{\gamma \in C_q^+ \cup C_q^-} \eff{\gamma}.
    \end{equation*}
    Then, we replaced each multiset $C_q^+$ with the block $\beta_q^+$ and $C_q^-$ with $\beta_q^-$.
    Recall, for instance, that 
    \begin{equation*}
        \sum_{\gamma \in C_q^+} \eff{\gamma} = n_1\vec{y}_1 + \ldots + n_t\vec{y}_t = n_1\cdot\eff{\gamma_{\vec{y}_1}} + \ldots + n_t\cdot\eff{\gamma_{\vec{y}_t}} = \eff{\beta_q^+}.
    \end{equation*}
    Together, this means that $\eff{\rho} = \eff{\tau}$.
    So it only remains to analyse the value of the $\NN$-counter in $s \xrightarrow{\rho}_\Vv t$.
    
    Generally, the following is true: let $\gamma$ be a cycle in $\tau$ which (i) has a nonnegative effect on the $\NN$-counter, and (ii) attains its least value on the $\NN$-counter at state $q$.
    Now, let $\tau'$ be a run in which $\gamma$ is moved to an earlier point in the run.
    Precisely, we consider any earlier point to be any point which is before the original occurrence of $\gamma$ and which immediately proceeds a transition that leads to $q$.
    Given that $\Config{s}{0} \xrightarrow{\tau}_\Vv \Config{t}{0}$, we can conclude that $\Config{s}{0} \xrightarrow{\tau'}_\Vv \Config{t}{0}$ because the $\NN$-counter is never lower in the run following $\tau'$ compared to the run following $\tau$.
    Symmetrically, the same is true for cycles that have a negative effect on the $\NN$-counter being taken later in the run.
    We therefore conclude that $s \xrightarrow{\tau'}_\Vv t$ is a valid run.
    
    As an intermediate step before proving that $s \xrightarrow{\rho}_\Vv t$ is a valid run, we will first consider a slightly different run.
    Let $\rho'$ be the run that is the same as $\rho$ but instead of the blocks $\beta_q^+$ and $\beta_q^-$, the multisets of cycles $C_q^+$ and $C_q^-$ are taken (in any order) just after their respective marked transitions.
    Roughly speaking, this can be written as $\rho' = \rho_0 \, C_p^+ \, \rho_1 \, \cdots \, \rho_{i-1} \, C_q^+ \, \rho_i \, \cdots \, \rho_{\ell-1} \, C_r^- \, \rho_\ell.$
    Notice that $\rho'$ only differs from the original run $\tau$ based on when exactly simple cycles are taken: the simple cycles in $\tau$ with positive effect on the $\NN$-counter are taken earlier and simple cycles with negative effect on the $\NN$-counter are taken later in the run.
    By the above, this means that $s \xrightarrow{\rho'}_\Vv t$ is a valid run.
    
    Now, since $\rho'$ and $\rho$ only differ based on the blocks of cycles that are taken, we can conclude that $s \xrightarrow{\rho}_\Vv t$.
    Indeed, the effect on the $\NN$-counter before a block $\beta_q^+$ (or $\beta_q^-$) is the same in $\rho$ and $\rho'$ and the resulting $\NN$-counter after a block is also the same.
    This is because $\eff{\beta_q^+} = \sum_{\gamma \in C_q^+} \eff{\gamma}$.
\end{proof}

\dimensionOneNP*

\begin{proof}
    Let $k$ be a number written in unary, let $\Vv = (Q,T)$ be a \zvass{1}{k}, and let $s, t \in Q \times (\NN\times\ZZ^k)$ be two configurations. 
    Let $M = \max\set{\infnorm{\vec{u}} : (p, \vec{u}, q) \in T}$.
    By~\cref{lem:linear-form-runs}, we know that if $s \xrightarrow{*}_\Vv t$, then there exists a run $s \xrightarrow{\rho}_\Vv t$ where $\rho = \rho_0 \sigma_1^{x_1} \rho_1 \cdots \rho_{\ell-1} \sigma_\ell^{x_\ell} \rho_\ell$ satisfies conditions (1) -- (4) of~\cref{lem:linear-form-runs}.
    
    \proofsubparagraph*{Polynomial size of linear path scheme.}
    In order to specify $\rho$, we only need to write down the underlying paths $\rho_0, \rho_1, \ldots, \rho_\ell$, the intermediate simple cycles $\sigma_1, \ldots, \sigma_\ell$, and the number of times each cycle is taken $x_1, \ldots, x_\ell$.
    Since $\ell \leq C_1k^2\abs{Q}^2\cdot\log(M)$, we only need to specify polynomially many paths, cycles, and natural numbers.
    Each path $\rho_i$ can be written down using polynomially many bits (thanks to property (2)).
    Each cycle $\sigma_i$ can be written down using polynomially many bits because $\abs{\sigma_i} \leq \abs{Q}$ (thanks to property (4)).
    Finally, each natural number $x_i$ can be written down using polynomially many bits using binary encoding because $\log(x_i) \leq C_3k\cdot\log(nk)$ (thanks to property (3)).
    
    \proofsubparagraph*{Checking the run in polynomial time.}
    It remains to argue that the validity of $s \xrightarrow{\rho}_\Vv t$ can be checked in polynomial time. 
    Firstly, we check that indeed $\eff{\rho}$ is equal to the difference in counter values in $t$ and $s$.
    Now, we only need to check that the $\NN$-counter does not drop below zero, rather than one-by-one checking every transition in $s \xrightarrow{\rho}_\Vv t$, it suffices to only check a small number of configurations.
    One must check every configuration along the underlying paths $\rho_i$, however once a cycle is reached, rather than checking every configuration after $\sigma_i^1, \sigma_i^2, \sigma_i^3, \ldots, \sigma_i^{x_i}$, it suffices to only check the first and last iterations of $\sigma_i$. 
    It could be the case, for example, that $\sigma_i$ has nonnegative effect on the $\NN$-counter but the cycle has a prefix which has a negative effect on the $\NN$-counter.
    For this, we must manually check the configurations observed during the first iteration of $\sigma_i$.
    However, once the first iteration has been cleared, we know that the $\NN$-counter remains nonnegative for all subsequent iterations, as $\sigma_i$ had nonnegative effect. 
    Similarly, if $\sigma_i$ has negative effect on the $\NN$-counter than we must manually check the configurations observed during the last iteration. 
\end{proof}

\section{Missing Proofs of~\cref{sec:lower-bounds}}
\label{app:lower-bounds}

\CA*

\begin{proof}
    We reduce from the following classical problem known to be $\spacecl(g(n))$-complete for every space constructible function $g$. 

    \begin{description}
        \item[\hspace{1.15em}Input:] Turing Machine (TM) $\mathcal{M}$ with $2$ tape letters $\set{0, 1}$, a number $n$ encoded in unary, and two distinguished states $q_I, q_F$.
        \item[\hspace{1.15em}Question:] Does $\mathcal{M}$ have a run from $q_I$ with every tape cell containing $0$ to $q_F$ with every tape cell containing $0$ that only accesses the first $g(n)$ first cells?
    \end{description}

    Indeed $\spacecl(g(n))$-membership follows from the fact that one can simulate $\mathcal{M}$ by a Turing Machine using at most $g(n)$ space. 
    The above problem is also $\spacecl(g(n))$-hard as for any problem $L \in \spacecl(g(n))$ there exists a Turing machine $\mathcal{M}_L$ that decides $L$ using at most $g(n)$ space. For input $x$ of length $n$ we can construct Turing Machine $\mathcal{M}$, which simulates $\mathcal{M}_L$ on $x$ in space $g(n)$ and has a run from $q_I$ with whole tape covered with letter $0$ to $q_F$ with whole tape covered with letter $0$ if and only if   $x \in L$. 

    Firstly we construct a TM using only $g(n)+1$ first cells in runs starting from $q_I$ with whole tape covered with letter $0$. This construction is due to following claim:

    \begin{claim}\label{clm:bounded_TM}
        For every TM $\mathcal{M}$ with $2$ tape letters $0, 1$, a number $n$ given in unary and two distinguished states $q_I$ and $q_F$ one can construct in polynomial time TM $\mathcal{M}'$ such that:
        \begin{itemize}
            \item $\mathcal{M}'$ has $3$ tape letters $0, 1, 2$
            \item $\mathcal{M}'$ has two distinguished states $q_I'$ and $q_F'$
            \item In every run from $q_I'$ with whole tape covered with letter $0$ TM $\mathcal{M}'$ uses at most $g(n)+1$ first cells
            \item $\mathcal{M}'$ has a run from $q_I'$ with whole tape covered with letter $0$ to $q_F'$ with whole tape covered with letter $0$ if and only if $\mathcal{M}$ has run from $q_I$ with whole tape covered with letter $0$ to $q_F$ with whole tape covered with letter $0$, which uses at most $g(n)$ first cells of the tape of $\mathcal{M}$
        \end{itemize}
    \end{claim}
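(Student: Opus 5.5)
The plan is to build $\mathcal{M}'$ by having it first lay down end-markers and then simulate $\mathcal{M}$ strictly between them. Because $g$ is space-constructible, there is a fixed Turing machine $\mathcal{G}$ that on input $1^n$ outputs $1^{g(n)}$ using $g(n)$ space. Since $n$ is given in unary, $\mathcal{M}'$ can hardwire it: from $q_I'$ it writes $1^n$ using $n$ dedicated states. It then runs (a binary re-encoding of) $\mathcal{G}$ to mark out exactly $g(n)$ cells. Finally it places the new letter $2$ in cell $g(n)+1$ as a right end-marker and erases the working area back to $0$s.

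To keep the claimed bound of $g(n)+1$ cells valid in \emph{every} run, I would order things carefully. The construction phase is deterministic and, by space-constructibility, touches only the first $g(n)$ cells; the one extra cell holds the marker. After that, $\mathcal{M}'$ simulates $\mathcal{M}$ transition by transition. The only change is this: whenever the simulated head would move right onto a cell containing $2$, $\mathcal{M}'$ enters a dead (rejecting, non-halting-into-$q_F'$) state and never moves further. Moves left from cell $1$ behave as in $\mathcal{M}$. Thus no run of $\mathcal{M}'$ ever visits a cell beyond $g(n)+1$.

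For acceptance, when the simulation reaches $q_F$ the machine $\mathcal{M}'$ nondeterministically enters a clean-up phase. It walks right to the $2$, checking along the way that every cell is $0$, overwrites the $2$ with $0$, walks back to cell $1$ inside the bounded region, and enters $q_F'$. This gives the equivalence. If $\mathcal{M}$ has a $g(n)$-bounded run from all-zero to all-zero, then the simulation never touches the marker and the clean-up succeeds. Conversely, any run of $\mathcal{M}'$ reaching $q_F'$ with an all-zero tape must have passed the clean-up check, and its simulation phase is a run of $\mathcal{M}$ confined to cells $1..g(n)$ that ends in $q_F$ with all zeros. Head-position conventions of $\mathcal{M}$ at $q_F$ can be fixed by the standard convention that $\mathcal{M}$ returns its head to cell $1$.

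The main obstacle is the construction phase. Space-constructibility is stated for a machine whose alphabet may differ from $\{0,1,2\}$ and which outputs $1^{g(n)}$ on some output tape. I would fold it into a single-tape, three-letter machine using a constant-factor encoding of symbols, or simply assume, as is standard, that $\mathcal{G}$ is single-tape over $\{0,1\}$ and marks exactly the cells $1..g(n)$. I also need to make sure that this phase stays within the bound deterministically and leaves the tape clean except for the marker. Everything here takes time polynomial in $n$ and $|\mathcal{M}|$, since $\mathcal{G}$ is a fixed machine and only $n$ is hardwired.
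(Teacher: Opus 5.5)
Your construction is essentially the paper's: use space-constructibility to place the letter $2$ in cell $g(n)+1$, simulate $\mathcal{M}$ to its left, divert to a sink as soon as the head reads the $2$, and at $q_F$ nondeterministically walk right, overwrite the $2$ by $0$ and enter $q_F'$. Your zero-check on the way and your return to cell $1$ are harmless extras.

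One caution: do not literally write $1^n$ on the tape. When $g(n)<n$ this alone exceeds the $g(n)+1$ bound, and that case arises in the paper's own use ($g=f/c$ with $f(n)=n$). Keep the unary input in the finite control instead, since $n$ is hardwired; this is the step the paper simply takes for granted from space-constructibility.
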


    \begin{proof}
        We construct $\mathcal{M}'$ as a modification of $\mathcal{M}$. We create new initial state $q_I'$ and final state $q_F'$ in $\mathcal{M}'$. We set, that $q_F'$ has no outgoing transitions.

        Since $g$ is space-constructible,  we can write in the cell numbered $g(n)+1$ the new letter $2$ (using just $g(n)+1$ cells) and then move the head of $M$ once more back to the first cell. Further we simulate TM $M$ from state $q_I$ on the part of tape, before the letter $2$. If at some point $\mathcal{M}'$ puts its head above $n+1$-th cell once more, we know, that in the simulated run $\mathcal{M}$ used more than $g(n)$ cells and we go to new sink state $q_s$, which has no outgoing transitions. If at some point $\mathcal{M}'$ reaches the state $q_F$ and we want to end the simulation of a run of $\mathcal{M}$ we move head right until we encounter symbol $2$ on the tape. We replace it by $0$ and go to $q_F'$. Observe, that $\mathcal{M}'$ uses only $g(n)+1$ first cells of the tape and it is clear that $M'$ has a run from $q_I'$ with whole tape covered with letter $0$ to $q_F'$ with whole tape covered with letter $0$ if and only if $M$ has run from $q_I$ with whole tape covered with letter $0$ to $q_F$ with whole tape covered with letter $0$, which uses at most $g(n)$ first cells of the tape of $M$.

        This concludes the proof of Claim~\ref{clm:bounded_TM}.
    \end{proof}

    Having this construction, we use the simulation of TM by a $3$-counter automaton, which was presented in \cite{fischer1968counter}. Let the visited part (namely the part where head was present at some point of the run) be of the form
    \[
    a_n a_{n-1}\ldots a_0 h c_0 c_1 \ldots c_k,
    \]
    where $h$ is the letter currently under the head, $a_1 a_2 \ldots a_n$ is the part to the left of the head and $c_1 c_2 \ldots c_k$ is part to the right of the head then the encoding of this tape content will be as follows:
    \begin{itemize}
        \item The first counter stores $a_n \cdot 3^n + a_{n-1} \cdot 3^{n-1} + \ldots + a_0$.
        \item The second counter stores $c_0 + c_1 \cdot 3 + \ldots + c_k \cdot 3^k$.
        \item The third counter is auxiliary and it is zero.
        \item Letter $h$ under the head is stored in the current state.
        \item State of the TM is stored in the current state.
    \end{itemize}
    Now we show, how simulation of replacing $h$ by $d$ and shifting the head right is done. Simulations of other transitions are done similarly. Concretely, the counter automaton multiplies the first counter by $3$ and adds $h$ to it. The multiplication is done by flushing everything while multiplying by $3$ to the third counter, zero-testing first counter and flushing everything back to the first counter, which is verified to be executed fully by zero-testing the third counter. Next, we divide the second counter by $3$ using zero-tests and the third counter. Of course the second counter does not need to be divisible by $3$. Therefore we guess its residue modulo $3$ and decrease it by this value first. Moreover, this residue is put into the state as it will be the letter under the head after the transition. In this way we transformed encoding before the transition to the encoding after executing the transition of TM. The initial and final configurations of TM $M'$ are encoded as $q_I(0,0,0)$ and $q_F(0,0,0)$ for some distinguished states of the counter automaton. Recall, that $M'$ uses only first $g(n)+1$ cells. Therefore we can bound each counter in every run from $q_I(0,0,0)$ by:
    $$\Sigma_{k \in [0, f(n)]} 3 \cdot 3^k < 3^{f(n)+2}$$
    Observe, that  $M$ has run from $q_I$ with whole tape covered with letter $0$ to $q_F$ with whole tape covered with letter $0$, which uses at most $g(n)$ first cells of the tape of $M$ if, and only if, the constructed $3$-counter automaton has run from $q_I(0,0,0)$ to $q_F(0,0,0)$. Because each counter in every run is bounded by $3^{g(n)+2}$ we get that the shortest such run can be of length at most $|Q| \cdot 3^{3g(n)+6} \leq 3^18(g(n)) \leq 2^{cg(n)}$ for some constant $c$. Thus, taking $g(n) = \frac{f(n)}{c}$ and using that $\spacecl(f(n)) = \spacecl(g(n))$ we conclude with the proof of \Cref{lem:reachability-CA}. 
\end{proof}

\twoexptriple*

\begin{proof}
    We first show the following auxiliary lemma.

    \begin{lemma}\label{lem:multiplication}
        For every $k \in \NN$ there is a \zvass{2}{} of size linear in $k$ with distinguished states $q_I$ and $q_F$ such that for $B$, $C$, $D$, $C_1, \ldots, C_k$, $D_1, \ldots, D_k \in \NN$ we have $q_I(B, C, 2BC, C_1, \ldots, C_k) \reaches q_F(D, 0, 0, D_1, \ldots, D_k)$ if and only if $C = C_1 + \ldots + C_k$, $D = B$ and $D_i = B \cdot C_i$ for each $i \in [1,k]$.
    \end{lemma}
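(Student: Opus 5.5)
The plan is to adapt the standard multiplication-triple technique: each unit of $C$ is spent on one full round that transfers $B$ back and forth between the two $\NN$-counters. Only the first counter (holding $B$) and one auxiliary counter $\bar{x}$ (initially $0$) are $\NN$-counters. The counters $y$ (holding $C$), $z$ (holding $2BC$), the $k$ coordinates $c_1,\ldots,c_k$ (initially $C_1,\ldots,C_k$), and $k$ fresh auxiliary counters $e_1,\ldots,e_k$ are $\ZZ$-counters. The few zero tests needed, polynomially many in $k$ and fired at fixed positions, are realised via \cref{cl:zero-tests}.

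\textbf{Phase 1.} For each $i\in[k]$ in turn, run a loop that does $\coresub{c_i}{1}$ and $\coreadd{e_i}{1}$, followed by $\ZT(c_i)$. Afterwards $e_i=C_i$ and $c_i=0$. This step is needed because the output $D_i$ must appear in the same coordinate that held $C_i$.

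\textbf{Phase 2.} Repeat a \emph{round} arbitrarily often. A round first nondeterministically picks $i\in[k]$ and performs $\coresub{e_i}{1}$ and $\coresub{y}{1}$. It then runs a loop $x\reaches\bar{x}$ that also does $\coresub{z}{1}$ and $\coreadd{c_i}{1}$, followed by a loop $\bar{x}\reaches x$ that also does $\coresub{z}{1}$. Finally, $\ZT(\bar{x})$, $\ZT(y)$, $\ZT(z)$ and $\ZT(e_1),\ldots,\ZT(e_k)$ are fired; in the construction, the $\ZZ$-counters $y,z,e_i$ are simply required to be $0$ in the target.

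\textbf{Correctness.} For the ``if'' direction, note that if $C=\sum_i C_i$, running exactly $C_i$ full rounds for index $i$ reaches the claimed target. For ``only if'', the key invariant is $x+\bar{x}=B$ throughout Phase 2. Consider a round starting at $(x,\bar{x})$ and let $a,b$ be the numbers of iterations of its two loops. Then $a\le x\le B$ and $b\le \bar{x}+a\le B$, so the round decreases $z$ by $a+b\le 2B$. Equality forces $a=b=B$, hence $x=B$ and $\bar{x}=0$ at the start and at the end of the round. Every round decrements $y$ by one, so the final $y=0$ means exactly $C$ rounds were executed. Since $z=2BC$ must then drop to $0$, every round is full. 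Consequently each round adds exactly $B$ to the chosen $c_i$, and the final $x=B$ gives $D=B$. Finally, $e_i=0$ at the end shows that index $i$ was chosen exactly $C_i$ times, which yields both $C=\sum_i C_i$ and $D_i=c_i=B\cdot C_i$.

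\textbf{Main obstacle.} The delicate point is the inequality $a+b\le 2B$ with its equality characterisation when a round may start from an arbitrary split $(x,\bar{x})$ left by an earlier partial round. I would handle it exactly as above, by bounding $b$ in terms of $\bar{x}+a$ rather than assuming $\bar{x}=0$ at the start of each round, which cannot be zero-tested there. The size of the system is clearly linear in $k$.
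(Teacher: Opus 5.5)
Your proof is correct and follows essentially the same route as the paper. Both use rounds that shuttle $B$ from $x$ to $\bar{x}$ and back while decrementing $z$, both rely on the invariant $x+\bar{x}=B$, and both count via $y$ and $z$ to force exactly $C$ rounds, each of them full.

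The only differences are cosmetic. You move $C_i$ into an auxiliary counter first and accumulate $B\cdot C_i$ directly in the original coordinate, whereas the paper accumulates into $v_i$, zero-tests $u_i$, and then moves $v_i$ back. You also choose the index nondeterministically in each round instead of running the indices in a sequential for-loop.
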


    \begin{proof}
        Let us denote the counters with starting values $B, C, 2BC, C_1, \ldots, C_k$ as $x, y, z, u_1, \ldots, u_k$, respectively. 
        Let counters $\bar{x}, v_1, \ldots, v_k$ be auxiliary and start with zero values. Assume that $x$ and $\bar{x}$ are $\NN$-counters and all the others are $\ZZ$-counters. 
        We claim that the following counter program presented in Algoritm~\ref{alg:mult} describes the postulated \zvass{2}{}:

        \begin{algorithm}
            \caption{Counter program $\Mult(x, y, z, u_1, \ldots, u_k)$.}
            \begin{algorithmic}[1]
            \For {i=1, \ldots, k}
                \Loop
                    \Loop
                        \State $x \to \bar{x}$
                        \State \sub{z}{1}
                        \State \add{v_i}{1}
                    \EndLoop

                    \Loop
                        \State $\bar{x} \to x$
                        \State \sub{z}{1}
                    \EndLoop
                    
                    \State \sub{y}{1}
                    
                    \State \sub{u_i}{1}
                \EndLoop
            \EndFor
            \For {i= 1, \ldots, k} \ \ZT($u_i$)
            \EndFor

            \For {i= 1, \ldots, k}
                \Loop \ $v_i \to u_i$
                \EndLoop
            \EndFor

            \For {i= 1, \ldots, k} \ \ZT($v_i$)
            \EndFor
            \end{algorithmic}
            \label{alg:mult}
        \end{algorithm}

        The main idea behind the proof of its correctness is similar to the argument the multiplication triples technique~\cite{CzerwinskiLLLM19}. However, the details are a bit different. Let us analyse the above code. First notice that the sum $x+
        \bar{x}$ is constant, as only transitions modifying these counters (in lines 4 and 8) preserve the sum. Thus $x + \bar{x} = B$ all the time. This means that the loops in lines 3 and 7 can be iterated at most $B$ times, as the first one decreases $x$ and the second one decreases $\bar{x}$. Now inspect the loop in line 2. In its body the counter $y$ is decreased always by 1, while the counter $z$ is decreased in lines 5 and 9, so exactly by the number of iterations of loops in lines 3 and 7. Therefore, while in the body of loop from line 2 the counter $y$ is decreased by 1, the counter $z$ is decreased by at most $2B$. Initial values of $(y, z)$ are $(C, 2BC)$, while their final values are $(0, 0)$. That means that in total $y$ has to be decreased by $C$ and $z$ has to be decreased by $2BC$. Hence in each iteration of the loop in line 2 counter $z$ has to be decreased by exactly $2B$. That means that loops in lines 3 and 7 has to be always iterated exactly $B$ times.
        This, in particular, implies that the final value of $x$ equals $B$, so $D = B$.

        For each $i \in [1,k]$ the counter $u_i$ starts with value $C_i$ and reaches value 0 at line 12. Therefore the loop in line 2 for this $i$ has to be iterated exactly $C_i$ times. That, together with the observation that loops in lines 3 and 7 are always iterated $B$ times implies that the counter $v_i$ is incremented by exactly $B \cdot C_i$ and its value before the line 12 equals $B \cdot C_i$. Notice also that it means that counter $y$ is altogether decremented by $C_1 + \ldots + C_k$ and starting at $C$ it has to reach value 0 at the end of the program. This guarantees the requirement that $C = C_1 + \ldots + C_k$.

        Lines 13-15 copy values from counters $v_i$ to counters $u_i$. That implies that for each $i \in [1,k]$ the final value of $u_i$, denoted by $D_i$ in the lemma statement, equals exactly $B \cdot C_i$. Finally, one can easily observe that the size of the presented counter program is linear in $k$. This finishes the proof of Lemma~\ref{lem:multiplication}
    \end{proof}

    Let us denote by $\Mult(x, y, z, u_1, \ldots, u_k)$ the counter program from Lemma~\ref{lem:multiplication}. With this tool in hand we are ready to write the counter program computing the triples $(A^{2^n}, B, B \cdot A^{2^n})$.
    We will use $2n+4$ explicit counters (plus some implicit ones, for example the ones used inside applications of \Mult, \ZT\ or \Copy). These are the $\NN$-counter $x$, auxiliary $\ZZ$-counter $u$ and $n+1$ pairs of $\ZZ$-counters $y_i$ and $z_i$, for $i \in [0,n]$.
    The other $\NN$-counter $\bar{x}$ is implicitly used inside the $\Mult$ counter program.
    The idea is that at the beginning
    the counters $x, y_0, z_0, y_1, z_1, \ldots, y_n, z_n$ have values
    \[
    A, C_0, 2A C_0, C_1, 2A C_1, \ldots, C_n, 2A C_n
    \]
    for some guessed values $C_0, \ldots, C_n \in \NN$. We claim that after $i$ iterations of the main loop the counters $x, y_i, z_i, \ldots, y_n, z_n$ have values $A^{2^i}, C_i, 2 A^{2^i} C_i, \ldots, C_n, 2 A^{2^i} C_n$.
    We use below a simple subprocedure copying value of one counter (say $x_1$) to another (say $x_2$) and using an auxiliary counter $x_3$. 
    It can be easily implemented by the following counter program, denoted $\Copy(x_1, x_2)$:

    \begin{algorithm}
        \caption{Counter program $\Copy(x_1, x_2)$.}
        \begin{algorithmic}[1]
            \Loop \ $x_1 \to x_2, x_3$
            \EndLoop
            \State \ZT($x_1$)
            \Loop \ $x_3 \to x_1$
            \EndLoop
            \State \ZT($x_3$)
        \end{algorithmic}
        \label{alg:copying}
    \end{algorithm}

    Before further explanation let us consider the code of the counter program
    presented in Algorithm~\ref{alg:main}.
    We assume wlog. that all the counters start with zero values.

    \begin{algorithm}
        \caption{Producing doubly-exponential triples.}
        \begin{algorithmic}[1]
            \State \add{x}{A}

            \For {$i= 0, \ldots, n$}
                \Loop \ \add{y_i}{1}; \add{z_i}{2A}
                \EndLoop
            \EndFor

            \For {$i= 0, \ldots, n-1$}
                \State $\Copy(x, u)$
                
                \State $\Mult(x, y_i, z_i, u, z_{i+1}, z_{i+2}, \ldots, z_n)$
                
                \State \ZT($y_i$); \ZT($z_i$)
                
                \Loop \ \sub{x}{1}
                \EndLoop
                
                \State \ZT($x$)
                
                \Loop \ $u \to x$
                \EndLoop
                
                \State \ZT($u$)
            \EndFor

            \State \ExactMult($z_n, 1/2$)
        \end{algorithmic}
        \label{alg:main}
    \end{algorithm}

    Let's now inspect what happens in the presented code. After line 3 we have $x = A$,
    while for each $i \in [0,n]$ we have $(y_i, z_i) = (C_i, 2A C_i)$ for some guessed values $C_i \in \NN$.
    Now we claim that before the $i$-th iteration of the for-loop in line 4 counters $x$, $u$, $y_i, z_i, \ldots, y_n, z_n$ have values $A^{2^i}, 0, C_i, 2 A^{2^i} C_i, \ldots, C_n, 2 A^{2^i} C_n$. This is true for $i = 0$, let's prove it by induction on $i$.
    We assume that it is true for $i$ and show it for $i+1$. Towards that we just inspect what happens in the $i$-iteration of the for-loop in line 4. After line 5 counter $u$ has value $A^{2^i}$. Lines 6 and 7 should be considered together. Line 7 guarantees that firing of the $\Mult$ counter program in line 6 is correct, as for the correctness we need that values of counters $y_i$ and $z_i$ are zero at the end of $\Mult(x, y_i, z_i, u, z_{i+1}, z_{i+2}, \ldots, z_n)$. As guaranteed by Lemma~\ref{lem:multiplication} if zero-tests in line 7 are correct and
    \begin{equation}\label{eq:mult-correctness}
        C_i = A^{2^i} + 2 A^{2^i} C_{i+1} + 2 A^{2^i} C_{i+2} + \ldots + 2 A^{2^i} C_n
    \end{equation}
    then inside the counter program $\Mult$ all the counters $u$, $z_{i+1}$, $z_{i+2}, \ldots, z_n$ are multiplied by the current value of $x$, namely by $A^{2^i}$. Observe that for any values of $C_{i+1}, \ldots, C_n$ equality~\eqref{eq:mult-correctness} can be guaranteed by the correct guess of value $C_i$ in line 2. This means that after line 7 values of
    $x, u, y_{i+1}, z_{i+1}, \ldots, y_n, z_n$ are
    \[
    A^{2^i}, A^{2^{i+1}}, C_{i+1}, 2 A^{2^{i+1}} C_{i+1}, \ldots, C_n, 2 A^{2^{i+1}} C_n.
    \]
    Thus to finish the induction step it only remains to change value of $x$ to $A^{2^{i+1}}$
    and value of $u$ to zero. It is easy to observe that lines 8-11 implement exactly these changes.

    After the last iteration of the for-loop in line 4 counters $(x, y_n, z_n)$ have
    values $(A^{2^n}, C_n, 2 A^{2^n} C_n)$. Therefore it is enough to divide the third counter by $2$.
    This is implemented in line 12 by multiplication by $1/2$. Then we get a triple of counters
    with values $(A^{2^n}, C_n, A^{2^n} C_n)$, exactly as required in the lemma statement.
    Finally, it is straightforward to observe that the size of counter program is polynomial and it uses at most polynomially many zero-tests, which is allowed by Claim~\ref{cl:zero-tests}. This finishes the proof of Lemma~\ref{lem:twoexptriple}.
\end{proof}

\threeVASSZtriples*

\begin{proof}
    The proof is analogous to the proof of Lemma 10 in~\cite{CzerwinskiO22},
    which has very similar statement, with a few small modifications.
    One modification is that in our statement $\Vv$ does not depend on $B$ and $C$.
    The other modification is that in our statement only three counters are $\NN$-counters,
    while the other two (and many implicit, auxiliary ones) are $\ZZ$-counters.
    We do not repeat the proof here, but only justify briefly why the modifications can be obtained. Regarding the first modification it can be easily observed that in the proof of Lemma 10 in~\cite{CzerwinskiO22} VASS $\Vv$ is constructed independently of $B$ and $C$. The second modification requires slightly more work.
    Let us name the five counters of $\Vv$ as $x_1$, $x_2$, $x$, $y$ and $z$. The first two, namely $x_1$ and $x_2$ are supposed to simulate the two counters of $\Aa$, while the triple $(x, y, z)$ with initial values $(B, 2C, 2BC)$ is supposed to simulate $C$ zero-tests on $x_1$, $x_2$ as long as they are $B$-bounded.
    One can now observe that during the simulation of zero-tests the second and the third counter of the triple only gets decreases and is never increased. Therefore even if we assume that $y$ and $z$ are $\ZZ$-counters, but we require that initial and final values ($(2C, 2BC)$ and $(0,0)$ respectively) and nonnegative, then we are sure that along the whole run counter values were nonnegative. That means that indeed the construction from 
    the proof of Lemma 10 in~\cite{CzerwinskiO22} can be easily modified to show Lemma~\ref{lem:3-VASS+Z-triples}.
\end{proof}

\begin{corollary} \label{cor:3-vass+kz}
    Reachability in unary \zvass{3}{k} is $\Oh(k)$-\class{ExpSpace}-hard.
\end{corollary}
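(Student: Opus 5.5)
We will prove \cref{cor:3-vass+kz} by reusing the \tower-hardness construction of \cref{sec:3-VASS+Z}, but truncating it after a number of amplification rounds proportional to $k$. The reduction will be from \tower-bounded reachability for 2-CA with the bound $\tower(n)$ replaced by a fixed height: for fixed $h$, $h$-fold exponentially bounded reachability in 2-CA is $h$-\expspace-hard. This follows from the analogous bounded problem for 3-CA (space-bounded Turing machines encoded in counters) together with the Gödel-encoding argument behind \cref{thm:2-CA-tower}, which costs one additional exponential. So it suffices to build, for each fixed $k$, a unary \zvass{3}{k} that computes triples $(B, 2C, 2BC)$ with $B$ roughly $c\,\Oh(k)$-fold exponential in $n$, where $C$ is arbitrarily large, and then to invoke \cref{lem:3-VASS+Z-triples}.

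The construction will proceed as follows. First, produce an initial triple $(B_0, 2C, 2B_0C)$ with $B_0 = 2^{\Theta(n)}$, or even $B_0$ polynomial in $n$, using a polynomial unary program. For example, \Triple can be applied with the constant $16$ replaced by $8n$, which is still unary-polynomial. Second, apply \ExpAmplifier a fixed number $r$ of times, each application turning a $B$-triple into a $2^B$-triple. Third, feed the result into the $\Vv$ of \cref{lem:3-VASS+Z-triples}. The key point is the counting of $\ZZ$-counters. One \ExpAmplifier uses the six named $\ZZ$-counters $y,z,x',y',z',u$ plus auxiliary counters for its \ZT's of the first kind. Those auxiliary counters can be shared across rounds if each round's zero tests are simulated by a single controlling counter, with the weighted-sum trick from the proof of \cref{cor:unary-2-vass+1z}. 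Alternatively, the named counters can be reused across rounds, since \ExpAmplifier restores $(x',y',z',u)$ to zero and moves the triple back to $(x,y,z)$. Only the auxiliary zero-test counters then grow with the number of zero tests.

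The main obstacle is exactly this: \cref{cl:zero-tests} spends one $\ZZ$-counter per zero test, and each \ExpAmplifier round performs $\Oh(1)$ zero tests of the first kind. I would therefore allocate a constant number $c$ of fresh $\ZZ$-counters per round. These cover the six named counters, if they are not reused, and one controlling counter per round. With only $k$ $\ZZ$-counters available, this allows $r = \Theta(k)$ rounds and yields a $B$ that is $\Theta(k)$-fold exponential. The bounded 2-CA problem at that height is $\Oh(k)$-\expspace-hard, which gives the corollary.

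The delicate check is the controlling-counter simulation. It needs the tested quantities to be nonnegative at each test. This holds for $\NN$-counters, but some tested counters here are $\ZZ$-counters, such as $y$, $z$ and $u$. For those I would either test them directly by a final-value requirement, which is legitimate only for counters that are frozen afterwards, or restructure each round so that its tested $\ZZ$-counters are dedicated to that round and frozen once tested. This again costs $\Oh(1)$ counters per round and keeps the count linear in $k$.
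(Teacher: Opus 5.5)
Your proposal is correct and follows the paper's own sketched argument: run \ExpAmplifier\ $\Theta(k)$ times, each round costing $\Oh(1)$ fresh $\ZZ$-counters for first-kind zero tests, and feed the final triple into \cref{lem:3-VASS+Z-triples}. Your $n$-dependent starting triple (e.g.\ $8n$ instead of $16$) makes explicit a point the paper leaves implicit, and there is one harmless slip: $h$-fold exponentially bounded reachability in 2-CA lies in $(h-1)$-\expspace, so it is only $(h-\Oh(1))$-\expspace-hard, not $h$-\expspace-hard, but the $\Oh(k)$ absorbs this constant shift.
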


\begin{proof}[Proof idea.]
    This follows from the proof of~\cref{thm:unary-3-zvass-tower-hard}.
    The proof of this corollary is very similar to~{\cite[Corollary 5]{CzerwinskiLLLM19}} in which the authors prove that, for any $d \geq 1$, reachability in $(d+16)$-VASS is $d$-\class{ExpSpace}-hard.
    The idea is to construct a VASS that faithfully simulates a 3-counter machine with $d$-fold-exponentially bounded counters (a $d$-\class{ExpSpace}-complete problem~\cite{fischer1968counter,Schmitz16}).
    In particular, during proof of~\cref{thm:unary-3-zvass-tower-hard} on~\cpageref{page:exp-amplifier}, we introduce several $\ZZ$-counters for the sake of performing zero tests.
    Importantly, there are only constantly many zero-testing $\ZZ$-counters used in each copy of the \ExpAmplifier\ counter program.
    Thus, in the same way as in the proof of~\cref{thm:unary-3-zvass-tower-hard}, we can construct a $k$-fold-exponential amplifier by using \ExpAmplifier\ $k$ times; this requires $\Oh(k)$ many additional $\ZZ$-counters. 
    We note that unary encoding suffices: the \ExpAmplifier\ counter program only uses counter updates that are bounded by a fixed constant.
    With this, we can then simulate 3-counter machines with $d$-fold-exponentially bounded counters to obtain $\Oh(k)$-\class{ExpSpace}-hardness of reachability in unary \zvass{3}{k}.
\end{proof}

\section{Missing Proofs of~\cref{sec:klm-upper-bound}}
\label{app:klm-upper-bound}

\PropositionBadSequenceLengthBetter*

\begin{proof}
    Let \(\vect{x}_0, \vect{x}_1, \dots\) be an \((f_1, f_0, m)\)-controlled lexicographically decreasing sequence in \(\NN^d\).
    We will create a lexicographically decreasing sequence \(\vect{y}_0, \vect{y}_1, \dots\) in \(\NN^{d-1}\) which is \((f,f,m)\)-controlled for an elementary function \(f\), and then apply Proposition \ref{PropositionBadSequenceLength}.
    
    We first define the function \(f \colon \NN \to \NN\) as \(f(m):=f_1(f_0^{(m)}(m))\), where \(g^{(m)}\) is \(m\)-fold composition of \(g\). 
    In other words, the function \(f\) first applies \(f_0\) as often as given in the input, and then applies \(f_1\). 
    Observe that repeatedly taking some polynomial function a linear number of times leads (at most) to double-exponentiation.
    Hence \(f\) is indeed an elementary function. Furthermore, \(f\) is monotone and fulfills \(f(x) \geq x\) since both \(f_1\) and \(f_0\) have these properties.
    
    The sequence \(\vect{y}_0, \vect{y}_1, \dots\) is obtained from \(\vect{x}_0, \vect{x}_1, \dots\) by deleting every step which only decreases the least important coordinate. Since \(\vect{x}_0, \vect{x}_1, \dots\) is \((f_1, f_0, m)\)-controlled, it is easy to see that \(\vect{y}_0, \vect{y}_1, \dots\) is \((f,f,m)\)-controlled. The length bound of \(\Ff_{d+1}\) on \(\vect{y}_0, \vect{y}_1, \dots\) obtained by Proposition \ref{PropositionBadSequenceLength} transfers to the sequence \(\vect{x}_0, \vect{x}_1, \dots\)
\end{proof}

\LemmaPerfectnessImpliesRun*

\begin{proof}[Proof Sketch of Lemma~\ref{LemmaLabelPerfectnessImpliesRun}] 
    We will show how the conditions are used to create a run \(\rho\)); the intuition for perfectness and how it implies a run is as follows.
    The basis of \(\rho\) is formed by the ILP solution \(\vect{sol}\) guaranteed by condition (1). 
    Since \(\vect{sol}\) fulfills the Kirchhoff equations, \(\vect{sol}\) can be implemented as a run \(\rho_{\ZZ}\), which has the correct effect and is non-negative whenever it leaves an SCC (by the second set of equations). 
    However, inside an SCC it is not yet clear that the  \(\NN\)-counters will remain nonnegative because \(\rho_{\ZZ}\) is only a \(\ZZ\)-\emph{run}. 
    
    This is where we use condition (3) and pumpability. The configurations \(\qiin(\vect{x}_i), \qiout(\vect{y}_{i})\) where \(\rho_{\ZZ}\) enters and respectively leaves the \(i\)-th SCC \(\Vv_i\) can reach arbitrarily large values on \(\NN\) counters by adding the corresponding cycles \(\up_i, \dwn_i\) into \(\rho_{\ZZ}\) often enough. By adding these cycles, the run \(\rho_{\ZZ}\) will keep the $\NN$-counters nonnegative.
    Unfortunately, this spawns a new problem: the effect of \(\rho_{\ZZ}\) has now changed and we might not reach the intended target counter values. 
    Hence we use condition (2): take a homogeneous solution \(\vect{h}\) which uses every transition \(e\) a large number of times often, in particular more often than once more than the number of times \(\up_i\) and \(\dwn_i\) are taken. 
    Then, \(\vect{h}-\psi(\up_i)-\psi(\dwn_i)\), being a positive solution of the Kirchhoff equations, can be implemented using a cycle \(\diff_i\). Since \(\vect{h}=\psi(\up_i)+\psi(\dwn_i)+\psi(\diff_i)\) is a homogeneous solution, we obtain that \(\effect(\up_i)+\effect(\dwn_i)+\effect(\diff_i)=\vec{0}\) on all ``relevant'' counters. 
    Thus, by adding \(\diff_i\) as often as we added \(\up_i\) and \(\dwn_i\), we obtain the correct effect again. 
    
    In total, in the \(i\)-th SCC $\Vv_i$, we use the run
    \begin{equation*}
        \rho_{i,m}:=\up_i^m \, \rho_{\ZZ,i} \, \diff_i^m \,  \dwn_i^m,
    \end{equation*}
    for a sufficiently large number \(m \in \NN\), where \(\rho_{\ZZ,i}\) is the part of \(\rho_{\ZZ}\) over the SCC \(\Vv_i\). 
    The full run \(\rho\) is then defined as
    \begin{equation*}
        \rho \coloneqq \rho_{0,m} \, e_1 \, \rho_{1,m} \, \dots \, e_s \, \rho_{s,m},
    \end{equation*}
    where as before \(e_i\) is the transition connecting the \((i-1)\)-st SCC to the \(i\)-th SCC. 
    It can be checked (see, for example~\cite[Lemma 4.19]{LerouxS19}) that the obtained run \(\rho\) remains non-negative on \(\NN\) counters for large enough \(m \in \NN\).
    We conclude the proof as we have already argued it reaches the correct target counter values.
\end{proof}

\LemmaReplaceByPolySize*

\begin{proof}
    We claim that the reachability relation of \(\Vv\) can be captured by linear path schemes which have polynomially many simple cycles, with paths of at most exponential length in-between. We first show that the claim implies the lemma.
    
    \paragraph*{Proof of Lemma~\ref{LemmaReplaceByPolySize} assuming the above claim}
    Observe that the exponentially long paths between the cycles have an exponential (i.e.\ polynomial bitsize) effect. We can replace them by a negative transition \(t_-\) subtracting the least \(\NN\)-configuration where the path can be executed, and afterwards a positive transition \(t_+\) which adds back the subtracted value as well as adding the overall effect of the path. 
    Similarly, simple cycles have a polynomial bitsize effect. The result of the replacement is hence a polynomial size linear path scheme which we can use as \(\Vv_j\) as required.
    
    \paragraph*{Proof of claim} 
    Overall, this proof uses the same ideas as Lemma \ref{lem:linear-form-runs}. Consider any run \(\rho\) of the \(0\)-dimensional \zvass{}{} \(\Vv\). 
    Whenever we encounter a state \(q\) a second time, record the simple cycle \(\rho_1\) which was performed, and remove it from the run. Continue doing so until the run is a simple path, which hence has at most exponential length, call this exponential length run \(\rho_{simple}\).
    
    Using~\cref{cor:caratheodory}, with \(X = \{\effect(\rho_i)\mid \rho_i \text{ is a recorded cycle}\}\), we can select polynomially many \(\rho_1, \dots, \rho_\ell\) of these cycles to generate the same effect. Then \(\rho\) can be replaced by a run in the linear path scheme which has \(\rho_{simple}\) as underlying path, and for every cycle \(\rho_j\), we fix \emph{one} location in \(\rho_{simple}\) where the cycle \(\rho_j\) was used, and add a \(\rho_j\) cycle on this state of the LPS. Importantly, since every cycle \(\rho_j\) has effect \(0\) on \(\NN\)-counters, we are indeed allowed to move these cycles to different locations without fear of the $\NN$-counters dropping below zero, so long as they are enabled at this location. 
    Indeed, said cycles are enabled at their respective location holds by definition of \(\rho_j\).
\end{proof}

\end{document}